\documentclass[10pt, acmsmall]{acmart}



\newif\ifarXiv
\arXivtrue

\usepackage{graphicx}

\usepackage{xspace}
\usepackage{enumitem}
\usepackage{bm}
\usepackage{hhline}
\usepackage{xcolor}
\usepackage{amsfonts}
\usepackage{amsmath}
\usepackage{amsthm}

\newtheorem*{theorem*}{Theorem}

\newif\ifcomm
\ifcomm
\else
\commfalse
\fi
\ifcomm
\newcommand\bilal[1]{\textcolor{red}{Bilal: #1}}
\newcommand\gil[1]{\textcolor{brown}{Gil: #1}}
\newcommand\ran[1]{\textcolor{orange}{Ran: #1}}
\newcommand\sivaram[1]{\textcolor{violet}{SR: #1}}
\definecolor{applegreen}{rgb}{0.55, 0.71, 0.0}
\newcommand\Wenchen[1]{\textcolor{applegreen}{Wenchen: #1}}
\else
\newcommand\bilal[1]{\textcolor{red}{}}
\newcommand\gil[1]{\textcolor{blue}{}}
\newcommand\ran[1]{\textcolor{orange}{}}
\newcommand\sivaram[1]{\textcolor{violet}{}}
\newcommand\Wenchen[1]{\textcolor{red}{}}
\fi

\usepackage{subfig}
\newcommand{\qMAX}{\ensuremath{q\mbox{-MAX}}\xspace}

\newcommand{\eps}{\epsilon}

\newcommand{\floor}[1]{\left\lfloor#1\right\rfloor}
\newcommand{\ceil}[1]{\left\lceil#1\right\rceil}
\newcommand{\parentheses}[1]{\left(#1\right)}
\newcommand{\brackets}[1]{\left[#1\right]}
\newcommand{\set}[1]{\left\{#1\right\}}

\newcommand{\sys}{{\sc SQUID}\xspace}

\newcommand{\eg}{\textit{e.g.}}

\newcommand{\revise}[1]{#1}

\usepackage{thm-restate}
\usepackage{multirow}

\AtBeginDocument{%
  \providecommand\BibTeX{{%
    \normalfont B\kern-0.5em{\scshape i\kern-0.25em b}\kern-0.8em\TeX}}}


\setcopyright{acmlicensed}
\acmJournal{PACMNET}
\acmYear{2024} \acmVolume{2} \acmNumber{CoNEXT3} \acmArticle{19} \acmMonth{9}\acmDOI{10.1145/3676873}

\received{December 2023}
\received[revised]{June 2024}
\received[accepted]{July 2024}





\usepackage{wrapfig}
\begin{document}

\title[Faster Analytics via Sampled Quantile Estimation]{\sys: Faster Analytics via Sampled Quantile Estimation}


\iftrue
\author{Ran Ben Basat}
\authornote{Alphabetical order.}
\email{(r.benbasat@cs.ucl.ac.uk)}
\affiliation{%
  \institution{University College London}
  \country{UK}
}

\author{Gil Einziger}
\ifarXiv
\authornotemark[1]
\fi
\email{(gilein@bgu.ac.il)}
\affiliation{%
  \institution{Ben Gurion University of The Negev}
  \country{Israel}}

\author{Wenchen Han}
\ifarXiv
\authornotemark[1]
\fi
\email{(wenchen.han.22@ucl.ac.uk)}
\affiliation{%
  \institution{University College London}
  \country{UK}}
  
\author{Bilal Tayh}
\ifarXiv
\authornotemark[1]
\fi
\email{(tayh@post.bgu.ac.il)}
\affiliation{%
  \institution{Ben Gurion University of The Negev}
  \country{Israel}}  

\renewcommand{\shortauthors}{Ben Basat et al.}
\fi

\begin{abstract}
Streaming algorithms are fundamental in the analysis of large and online datasets.
%
A key component of many such analytic tasks is \emph{\qMAX}, which finds the largest $q$ values in a number stream. Modern approaches attain a constant runtime by removing small items in bulk and retaining the largest $q$ items at all times. Yet, these approaches are bottlenecked by an expensive quantile calculation. 

This work introduces a quantile-sampling approach \revise{called \emph{\sys}} and shows its benefits in multiple analytic tasks. 
%
%
%
%
Using this approach, we design a novel weighted heavy hitters data structure that is faster and more accurate than the existing alternatives. 
We also show \sys's practicality for improving network-assisted caching systems with a hardware-based cache prototype that uses \sys to implement the cache policy.
The challenge here is that the switch's dataplane does not allow the general computation required to implement many cache policies, while its CPU is orders of magnitude slower. We overcome this issue by passing just \sys's samples to the CPU, thus bridging this gap.

In software implementations, we show that our method is up to 6.6x faster than the state-of-the-art alternatives when using real workloads.  \revise{For switch-based caching, \sys enables a wide spectrum of data-plane-based caching policies and achieves higher hit ratios than the state-of-the-art P4LRU.}



\end{abstract}

\maketitle
\setcounter{page}{1}
\vspace{-1mm}
\section{Introduction}
\label{sec:intro}
\vspace{-1mm}
High-speed stream processing is essential for data management systems such as Google BigQuery~\cite{BigQuery}, Apache Spark~\cite{Spark}, and Amazon Redshift~\cite{Redshift}. 
Stream processing often focuses on the most frequently occurring items known as the \emph{heavy hitters}~\cite{SketchVisor,Cormode:2005:WHW:1061318.1061325,basat2018memento,shahout2023together}. Caching is a special example of the rule where we try to identify the most valuable items according to some score metric that factors in their recent access patterns~\cite{cache1, cache2,cache3}.
Caching is heavily utilized in many systems due to certain localities across workloads. 

%

Traditionally, simple patterns such as \qMAX, which finds the largest $q$ values in a number stream, were addressed using logarithmic data structures such as a heap or a skiplist (e.g., see the Misra-Gries summary~\cite{misra1982finding} or weighted space saving~\cite{SpaceSavingIsTheBest}).
Recently, due to the increased data volumes, the research optimized  toward more speed, sometimes at the cost of other resources such as added memory (e.g.,~\cite{RHHH, Nitro,StatisticalAcceleration, IMSUM,basat2020faster}). 
Another such example is the
\qMAX algorithm~\cite{qMax} that tailors an $O(1)$ data structure for this pattern, which improves various applications asymptotically (and empirically). In a gist, removing the minimal valued item requires logarithmic complexity, but removing items in bulk can be done in a constant per-item time by gradually performing a linear time quantile calculation and removing items below the quantile. Specifically, for a fixed parameter $\gamma$ (e.g., $\gamma=0.25$), they maintain at most $(1+\gamma)q$ items and evict $\gamma q$ items at once. The amortized complexity is a constant as a maintenance \mbox{operation is performed every linear number of steps. }

Our work revisits the \qMAX problem and observes that finding an exact quantile is rather slow. Instead, we propose a novel algorithm that allows us to use approximate quantiles.  Using approximate quantiles allows us to perform the quantile search on samples rather than all the $q$ items, improving the runtime. 
We argue that while the existing \qMAX algorithm~\cite{qMax} is asymptotically optimal, its speed can further be improved by periodically finding an \emph{approximate} quantile that allows us to evict at least $q\cdot \gamma\cdot \eta$ items, for some $\eta\in(0,1)$, with high probability.
Our work leverages this idea and proposes the \sys algorithm that follows the Las Vegas algorithmic paradigm, which guarantees correctness but whose runtime is a random variable. \sys markedly speeds up \qMAX, which translates to higher throughput in multiple \mbox{analytics tasks that utilize the \qMAX structure.}

Next, we focus on the specific task of finding the heavy hitter elements in weighted streams. In this task, we use our techniques in conjunction with a Cuckoo hashing data structure and get an algorithm that is faster and more accurate than the state-of-the-art. 
The state-of-the-art algorithm for weighted heavy hitters is called Sampled MEDian (SMED)~\cite{IMSUM}, 
it periodically finds an exact percentile and then deletes all smaller valued elements. 
Compared with SMED, we have several improvements; first, our data structure allows \emph{implicit} deletions which marks counters as available to allocate without deleting their contents. This way, elements that were allocated with deleted counters are still tracked until their counter is needed, thus improving the accuracy. Second, because of the deletions strategy, our solution does not require a linear pass over the counters, which is SMED's bottleneck, resulting in higher throughput. Finally, unlike SMED, which assumes knowing an upper bound on the stream length to determine the needed number of samples, we design an adaptive algorithm that increases the number of samples over time while restricting the overall error probability to a user-defined parameter $\delta$. Interestingly, we prove that the overall number of samples (corresponding to the amount of work needed by the algorithm) is at most twice that of an optimal algorithm that knows how many times an approximate quantile calculation is needed.
Our algorithms also leverage AVX vectorized instructions\revise{~\cite{AVX}, which operate on vectors element-wise concurrently,} to facilitate a further speedup.  All in all, our new weighted heavy hitters algorithm is currently the fastest in the literature, which makes it attractive for practical deployments.

{
We show that our approach goes beyond measurement. Namely, we show how \sys accelerates the well-known LRFU~\cite{LRFU} cache policy to a greater extent than prior work with a comparable hit ratio. Such an acceleration is significant as LRFU is considered a prominent but computationally intensive cache policy~\cite{ARC}. }
Finally, we demonstrate how \sys can be applied to and benefit the design of real-world systems.
We pick network-assisted caching as a case study. We show that \sys enables a wide range of caching policies (\eg, LRFU and LRU) to operate entirely on a hardware switch, thus accelerating the cache. 
Existing approaches such as \qMAX require flexibility that is only supported by the switch's CPU, although it is orders of magnitude slower than the switch's dataplane. 
Therefore, state-of-the-art in-network caching systems~\cite{switchkv, netcache} rely on the backend (a server that is accessed when the queried item is not cached on the switch) to determine the cache eviction and admission policies. 
However, such an offloading of computation adds a burden to the backend that lowers its ability to answer queries and makes the system slower to react to changes in the access patterns. Further, on each cache miss their backend must determine which item to replace in order to admit the queried element, thus increasing the latency.
In contrast, \sys allows the switch to take control of the admission and eviction policies \mbox{by allowing its CPU to process just the sampled items.}

\revise{Our main contribution is in providing a significant speedup and enabling hardware implementation. 
Another contribution is the novel logical deletion method for heavy hitter, which markedly reduces the error and avoids a linear pass over the array.} 
We evaluate \sys on analytic tasks and demonstrate up to 6.6x speedup. \sys-HH also outperforms state-of-the-arts including Elastic Sketch~\cite{Elastic}. Further, \sys's in-network caching achieves better hit ratios than the state-of-the-arts while enabling hardware switch acceleration.

\section{Background and formulation}

{
This section provides the necessary background to position our work within the existing literature. We define the \qMAX and weighted heavy hitters problems addressed in this work, explain the state-of-the-art for each problem, \mbox{and qualitatively describe how we differ from previous approaches.  }
\ifarXiv
We \mbox{summarize the notations we use in Table~\ref{tbl:notations}.}
\fi
}

\vspace*{-3mm}
\subsection{Problem Formulations}
\vspace*{-.5mm}
Our problems operate on a stream $\mathfrak{S}$, where at each step we append a new tuple of the form $(id,val)$ to $\mathfrak{S}$. The $id$ is an identifier and $val$ is a real number (e.g., request size or queue length), both depending on the specific measurement task.

\looseness=-1
A \qMAX algorithm supports an \emph{update} method that digests a new tuple and a \emph{query} method that lists the $q$ (a pre-determined quantity) IDs with the maximal aggregated value. The aggregation function also depends on the tackled problem (e.g., summation for heavy hitter algorithms).

In the Heavy Hitters problem, we define the \emph{weighted frequency} of an item $x$ (denoted $f_x$) as the sum of all values that appear with $id=x$. We also define the stream's weight $|S|$ as the total frequency of all identifiers in $S$. A heavy hitters algorithm supports an \emph{update} method that digests a new tuple, a \emph{query} method that receives an identifier ($x$), and returns an approximation to $f_x$ ($\hat{f_x}$), s.t. $\left| f_x -\hat{f_x}\right|\le |S|\cdot \epsilon$. Here, $\epsilon$ is a parameter \mbox{given to the heavy hitters algorithm at initialization.}

\vspace{-2.5mm}
\subsection{Background - \qMAX}
{
The work of~\cite{qMax} defines the \qMAX pattern and suggests the \qMAX algorithm. 
Intuitively, \qMAX leverages the fact that maintaining an ordered list of numbers can be done in logarithmic update time, but periodically removing a constant fraction of the smallest items can be done in linear time. Thus, \qMAX increases the space linearly by a performance parameter $\gamma > 0$. That is, instead of storing just $q$ items, \qMAX stores $q(1+\gamma)$ $(id,val)$ tuples. 
The algorithm works in iterations, each starting with $q$ of the memory entries occupied by the largest items, and $q\cdot \gamma$ entries free. After inserting $q\cdot \gamma$ new items, the data structure becomes full and a \emph{maintenance operation} takes place. In such an operation, \qMAX computes the $q$'th largest value using an $O(q)$ time quantile algorithm and deletes all items smaller than it. Notice that since a maintenance operation is only needed once per $\Omega(q)$ updates, the amortized runtime is $O(1)$ per update. The authors further propose a deamortized variant where each update takes constant time in the worst case.
%
}

For a constant $\gamma$, \qMAX operates in $O(1)$ and requires asymptotically optimal $O(q)$ space. It proposes to use the deterministic Median-of-medians algorithm~\cite{BLUM1973448} to achieve a guaranteed bound on the number of operations needed. Their approach requires an additional $O(q)$-sized auxiliary space to store the medians of medians (typically 25\% more space). Alternatively, some faster randomized algorithms, such as Quick Select~\cite{quickSelect}, still require $\Omega(q)$ time. Indeed, the work of~\cite{BLUM1973448} shows that \emph{any} randomized quantile selection algorithm requires $3q/2-O(1)$ comparisons, and therefore advances in faster (exact) quantile computation may not significantly \mbox{speed up the algorithm of~\cite{qMax}.} 

\looseness=-1
Our approach differs from~\cite{qMax} in the nature of the quantity we seek. More specifically, we present a data structure that can solve the problem using an approximate quantile rather than an exact one.
The essence behind this idea is that we can calculate an approximate quantile, with constant probability, in $O(1)$ time by sampling tuples from the current data structure, thus expediting the~\cite{qMax} algorithm. While our approach implies that we perform slightly more maintenance operations (since our quantiles are not exact), their markedly faster execution makes this a beneficial tradeoff.



\subsection{Background - Heavy Hitters}

Collecting exact counts for elements requires allocating an entry for every element in the measurement task. 
However, in some cases, we may not have enough space to monitor all the elements. Further, applications such as event mining and load balancing~\cite{al2010hedera} are often only interested in the heavy hitters -- the largest subset of network flows. Heavy hitter algorithms often maintain data structures with a fixed number of entries, each containing an element's identifier and counter~\cite{SpaceSavings,DIMSUM, frequent1,frequent2,Cormode:2005:WHW:1061318.1061325,frequent4}. 
As an example, the Space-Saving algorithm~\cite{SpaceSavings}  maintains a cache of $\frac{1}{\epsilon}$ entries, each has an ID and a counter, and guarantees that any element $x$ with $f_x\ge\eps \left|S\right|$ has a counter. Notice that we can have at most  $\frac{1}{\epsilon}$ such elements. When a tuple $(id, val)$ from a monitored element $id$ arrives, Space Saving increases its counter by $val$. When a tuple $(x, v)$ from a non-monitored element ($x$) arrives, Space Saving admits it to the cache, and if the cache is full, it deletes the tuple $(x',v')$ with the smallest $v'$. In this case, the counter of $x$ is set to $v+v'$ to account for potential previous appearances of $x$. 

Optimal heavy hitters algorithms require at least $\frac{1}{\epsilon}$ entries to provide an accuracy guarantee of $\epsilon \left|S\right|$. When the update values are limited to $+1$, there are sophisticated data structures to keep the elements ordered~\cite{WCSS, SpaceSavings} in a constant time. For general updates, the works of~\cite{IMSUM,DIMSUM} suggest an asymptotically time-optimal algorithm at the cost of additional memory. Instead of evicting the minimum, their algorithms periodically evict many items at once by finding a quantile. Since the complexity of finding a quantile is linear, we get a constant amortized complexity. The work of~\cite{DIMSUM} also suggests a way to deamortize \mbox{finding the quantile and attaining a worst-case constant update complexity.}

Asymptotically optimal solutions such as~\cite{IMSUM,DIMSUM++} compute exact quantiles, which is slow in practice. 
The SMED algorithm~\cite{IMSUM} uses randomization to speed up the computation and is closest to our approach. 
Unlike~\cite{DIMSUM} that builds over Space Saving, SMED is designed around the Misra-Gries (MG) algorithm~\cite{misra1982finding}. MG is different than Space Saving in how it handles arrivals of unmonitored items. Specifically, if no counter is free, MG subtracts the minimal counter value from all counters, thereby freeing a counter. Tracking the minimal value means that MG works in $O(\log 1/\eps)$ time, while SMED improves this to constant amortized time by subtracting the \emph{sampled} median.  
However, its maintenance still requires a linear time. As they need to iterate over all the counters and delete small counters. From profiling SMED, and conversations with the authors of~\cite{IMSUM}, this linear pass was identified \mbox{as the main remaining bottleneck of the algorithm.}


Our method improves the runtime by allowing \emph{implicit} deletion of entries without a linear scan. Specifically, we use a ``water level'' value and treat entries with values smaller than the water level as logically deleted. When we couple this approach with a Cuckoo hash table, we can guarantee that the table's load factor is always below a given threshold ensuring fast and efficient operation. Additionally, our implicit deletion method also \emph{improves the accuracy}. This is achieved by estimating items that are logically deleted but whose counters are not yet overwritten using their counter value. Further, if such an item arrives, its counter is incremented by the update value, even if it is below the water level.
Our \sys-HH algorithm uses our \sys{} algorithm as a black box to find approximate quantiles and then it updates the water level according to the discovered quantile.   

\section{Algorithms}

This section introduces \sys and \sys-HH. The former expedites the state-of-the-art \qMAX calculation, which translates to faster implementations of diverse network applications such as Priority sampling, Priority-based aggregation, and network-wide heavy hitters. The latter (\sys-HH) focuses on the heavy hitter problem and expedites its runtime compared to the state-of-the-art. 
Intuitively, \sys{} leverages randomization and the concept of faster maintenance operations by only finding a reasonable pivot rather than an exact one to expedite the \qMAX problem. \sys-HH{} utilizes a Cuckoo hash table and periodically uses \sys{} to retain the largest $q$ elements in the table. In addition, it utilizes logical deletion of items that are guaranteed not to be among the largest $q$ with a sea-level technique that keeps the load factor of the Cuckoo table bounded. These improvements result in \sys{}-HH being considerably faster than the state-of-the-art approaches.

\subsection{\sys: Speeding up \qMAX }
Here, we analyze the bottlenecks of \qMAX{}~\cite{qMax} and discover that the deterministic exact quantile calculation algorithm used in the previous work is needlessly slow. Our approach replaces the exact quantile calculation with a faster approximate quantile calculation and the {deterministic approach with a (faster) randomized one. }
\sys{} and \qMAX{} use an array with $q(1+\gamma)$ elements where at each step, we add a new entry number to the array.  Here, $\gamma\in(0,1)$ is a parameter that presents a space-to-speed tradeoff; the larger $\gamma$ is, the faster the algorithms, but the more space needed.
We store the previously discovered top-$q$ numbers to the left of the array and then add newly arriving items to the right side of the array. Each such addition updates a running index that always points to the next open spot in the array. Eventually, the running index reaches the end of the array; in that case, we need to perform a maintenance operation.  \mbox{This is where \sys{} diverges from \qMAX{}.}

\qMAX{} performs an exact quantile calculation, moves the $q$ largest items to the left of the array, and starts inserting new items from location $q+1$.  Correspondingly, \sys{}'s maintenance performs an \emph{approximate} quantile calculation and shifts items larger than the approximate quantile to the left. The approximation is correct if there are at least $q$ items bigger than it; otherwise, we must repeat the process. At the end of a \sys{} iteration, the running index moves to a location possibly larger than $q+1$. We can find that \sys{} performs more frequent maintenance operations than \qMAX{}, but each of these operations requires less computation.  

\begin{figure}[t!]
\vspace*{-0mm}
\centering
\subfloat[\qMAX{} finds exact quantile.]{\includegraphics[width=0.46796395\columnwidth]{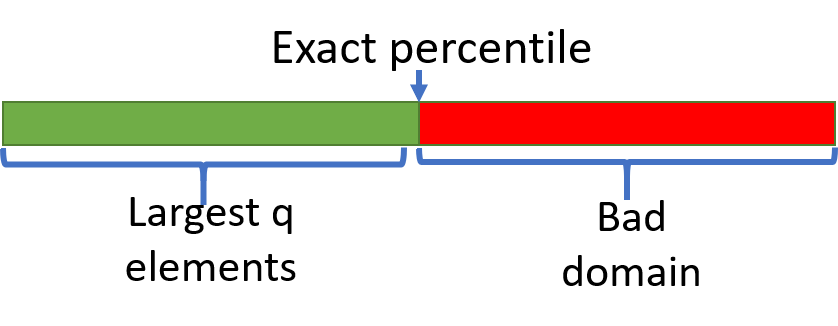}}
\vspace*{-0mm}
\centering
\subfloat[{\scriptsize \sys{} finds approx. quantiles.}]{\includegraphics[width=0.46796395\columnwidth]{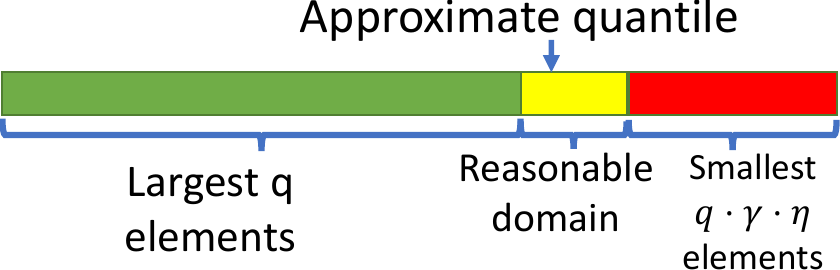}}
\vspace*{-1mm}
\caption{A qualitative comparison of \sys{}'s and \qMAX{}: \qMAX{} searches for the exact quantile, which is time-consuming, while \sys{} settles on any reasonable quantile (in the yellow region). Such quantiles are much easier to obtain, resulting in shorter and more \mbox{frequent maintenance operations.}
}
\vspace*{-2mm}
\label{fig:example}
\end{figure}

\textbf{Finding a good pivot: } Our goal is to find a \emph{pivot} that is smaller than $q$ numbers, and is larger than at least $q\cdot \gamma\cdot \eta$ numbers, for some $\eta\in(0,1)$.
Thus, the pivot is correct because it retains the top $q$ numbers, and it is sufficiently useful as it allows us to at least eliminate $q\cdot \gamma\cdot \eta$ numbers from the array, as they are smaller than the pivot and thus are not in the $q$-largest.
We say that our pivot selection is \emph{successful} if it meets the above conditions.

Intuitively, if one sample $Z$ numbers out of the $q(1+\gamma)$ elements, and we select the $k$'th smallest number in the sample as a pivot, then in \emph{expectation} our pivot will be larger than $q(1+\gamma)\cdot (k/Z)$ of the numbers in the array. 
However,  the actual rank of the pivot is a random variable and may deviate from its expected value. If the sampled pivot is too small, we may clear fewer than $q\cdot \gamma\cdot \eta$ elements and would require another maintenance soon thereafter. On the other hand, if the pivot is among the largest $q$, we must retry the sampling procedure, costing us additional computation.
\sys{} uses an additional parameter $\alpha\in(0,1)$ to aim for the pivot to be larger than $q\cdot \gamma\cdot \alpha$ elements in expectation, by setting  $Z=k\cdot \frac{1+\gamma}{\gamma\cdot\alpha}$.
%
We can then use concentration bounds, as shown in our analysis below, to guarantee with probability $e^{-\Omega\parentheses{Z(1-\alpha)^2}}$ that the pivot would be smaller than $q$ elements.
This suggests a trade-off -- a smaller $\alpha$ implies a higher chance of a successful pivot but also less benefit from the pivot. Similarly, a higher $Z$ value means sampling more elements and spending more time doing maintenance but having a better chance of a successful pivot selection.

To balance these tradeoffs, we introduce an additional parameter: $\delta\in(0,1)$. 
The goal of the sampling is to produce a pivot that, with probability $1-\delta$, satisfies two properties:
\begin{enumerate}[topsep=2pt, partopsep=2pt]

    \item The pivot is \emph{smaller} than {at least} $q$ elements to ensure we track the largest numbers.\label{cond1}
    \item The pivot is \emph{larger} than {at least} $q\cdot\gamma\cdot\eta$ elements,  allowing eviction for enough  numbers.\label{cond2}
\end{enumerate}
Intuitively, when $\delta$ is large, we may get faster execution on average but with a larger variance as the pivot selection is more likely to fail.
This means that with the desired probability, we can guarantee that the maintenance operation would delete $\Omega(q)$ elements to allow sufficiently long iterations. Note that $\delta$ does not need to be very small, as we can verify that the pivot is valid and sample another pivot otherwise. 

We state the following theorem, which quantifies the number of samples $Z$ needed by \sys's maintenance. For simplicity, we avoid rounding $k$ and $Z$ to integers as this adds a lower-order error.
Note that in the following $\eta$ is a function of $\alpha$ that {satisfies $\eta\ge 1/4\cdot  (3 \alpha^2 + 2\alpha - 1)\ge 2\alpha-1$. }
\begin{restatable}[]{thm}{mainthm}
\label{thm:mainthm}
Let $\alpha\in(1/2,1),\delta,\gamma\in(0,1)$ and denote $\eta = 
1/4 \cdot \parentheses{(\alpha+1)^2 +(\alpha-1)\sqrt{\alpha^2 + 14 \alpha + 1}}
$.
Consider a set of numbers $S$ with $|S|=q(1+\gamma)$ elements and denote $k=\frac{2\alpha\ln(2/\delta)}{(1-\alpha)^2}$ and $Z=k\cdot \frac{1+\gamma}{\gamma\cdot\alpha}$.
Then with probability $1-\delta$, the $k$'th smallest number among a uniform sample of $Z$ values from $S$ is larger than at least $q\cdot \gamma\cdot \eta$ numbers \mbox{and smaller than at least $q$ numbers.}
\end{restatable}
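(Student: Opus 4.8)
The plan is to reduce both requirements to tail bounds on a single binomial random variable and to choose the Chernoff estimates so that each failure probability is exactly $\delta/2$. Write $N=q(1+\gamma)$ and sort $S$ as $s_1\le\cdots\le s_N$. For a rank $t$, let $X_t$ denote the number of the $Z$ sampled values that fall among the $t$ smallest elements $\{s_1,\dots,s_t\}$; under uniform sampling $X_t$ is $\mathrm{Binomial}(Z,t/N)$ (if we sample without replacement it is hypergeometric, which is only more concentrated, so the same bounds apply). The $k$'th smallest sample — our pivot — has rank at most $t$ in $S$ exactly when $X_t\ge k$. Hence ``the pivot is smaller than at least $q$ numbers'', i.e. its rank is at most $N-q=q\gamma$, is the event $X_{q\gamma}\ge k$; and ``the pivot is larger than at least $q\gamma\eta$ numbers'', i.e. its rank exceeds $q\gamma\eta$, is the event $X_{q\gamma\eta}\le k-1$, whose failure is $X_{q\gamma\eta}\ge k$. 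Substituting $Z=k(1+\gamma)/(\gamma\alpha)$ gives the two means $\mathbb{E}[X_{q\gamma}]=k/\alpha$ and $\mathbb{E}[X_{q\gamma\eta}]=k\eta/\alpha$; since $\alpha\in(1/2,1)$ forces $0<\eta<\alpha$, the first event is a lower deviation (mean $k/\alpha>k$) and the second failure an upper deviation (mean $k\eta/\alpha<k$).

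For the first condition I would apply the multiplicative Chernoff lower-tail bound $\Pr[X\le(1-\theta)\mu]\le\exp(-\theta^2\mu/2)$ with $\mu=k/\alpha$ and $1-\theta=\alpha$, i.e. $\theta=1-\alpha$. This yields a failure probability of at most $\exp\!\big(-(1-\alpha)^2 k/(2\alpha)\big)$, and plugging in $k=2\alpha\ln(2/\delta)/(1-\alpha)^2$ makes the exponent exactly $\ln(2/\delta)$, so this probability equals $\delta/2$.

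The heart of the argument is the second condition, which is where $\eta$ is pinned down. Here I would use the upper-tail bound $\Pr[X\ge(1+\theta)\mu]\le\exp\!\big(-\theta^2\mu/(2+\theta)\big)$ with $\mu=k\eta/\alpha$ and $1+\theta=\alpha/\eta$, so that $\theta=(\alpha-\eta)/\eta>0$ and $2+\theta=(\alpha+\eta)/\eta$. A direct simplification collapses the exponent to $(\alpha-\eta)^2 k/\big(\alpha(\alpha+\eta)\big)$, and demanding that this equal $\ln(2/\delta)$ — again after substituting the value of $k$ — reduces exactly to the quadratic $2(\alpha-\eta)^2=(\alpha+\eta)(1-\alpha)^2$ in $\eta$. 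Rewriting this as $2\eta^2-(\alpha+1)^2\eta-\alpha(\alpha^2-4\alpha+1)=0$, its discriminant simplifies to $(\alpha-1)^2(\alpha^2+14\alpha+1)$, and the root lying in $(0,\alpha)$ is precisely the stated $\eta=\tfrac14\big((\alpha+1)^2+(\alpha-1)\sqrt{\alpha^2+14\alpha+1}\big)$ — the surd carries a minus sign because $\alpha<1$, which selects the smaller root. Thus this failure probability is also exactly $\delta/2$. I expect the main obstacle to be precisely this algebraic reconstruction: selecting the Chernoff form with the $(2+\theta)$ denominator (the one that produces a closed-form root rather than an unwieldy transcendental equation), matching the discriminant, and checking that the chosen root lies in $(0,\alpha)$ so that the tail event is genuinely an upper deviation.

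Finally, a union bound over the two failure events gives total failure probability at most $\delta$, so both the eviction guarantee (pivot above $\ge q\gamma\eta$ elements) and the top-$q$-preservation guarantee (pivot below $\ge q$ elements) hold simultaneously with probability $1-\delta$. Throughout I would treat $k$, $Z$, and the ranks $q\gamma$ and $q\gamma\eta$ as reals, noting as the statement does that rounding to integers contributes only a lower-order error, and I would remark that distinctness or ties shift ranks by at most one, which is absorbed into the same slack.
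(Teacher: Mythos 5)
Your proof is correct and takes essentially the same route as the paper: both arguments bound the two failure events via the binomial counts of samples falling below the ranks $q\gamma$ and $q\gamma\eta$ (with means $k/\alpha$ and $k\eta/\alpha$), apply the multiplicative Chernoff lower tail with $\Delta=1-\alpha$ and the upper tail with $\Delta=\alpha/\eta-1$ so that each failure probability is exactly $\delta/2$, and finish with a union bound. Your explicit derivation of $\eta$ as the root in $(0,\alpha)$ of $2(\alpha-\eta)^2=(\alpha+\eta)(1-\alpha)^2$ is precisely the identity the paper uses implicitly when its second exponent collapses to $(1-\alpha)^2k/(2\alpha)$.
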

\begin{proof}
%
%
%

Let $C$ denote the number of samples smaller than $S_{(q\gamma)}$.
Then the failure event where the sampled value is smaller than less than $q$ numbers in $S$ is equivalent to $C\le k$.
Notice that $C\sim\text{Bin}(Z, \frac{\gamma}{1+\gamma})$ and thus $\mathbb E[C] = \frac{Z\gamma}{1+\gamma}=k/\alpha$.
Therefore, by setting $\Delta=(1-\alpha)$ we have:
{\small\hspace{-5mm}
\begin{align}
\Pr[C \le  k] = \Pr[C \le  \mathbb E[C] (1{-}\Delta)] \le e^{{-}\frac{\Delta^2\mathbb E[C]}{2}}
= e^{{-}\frac{(1{-}\alpha)^2k/\alpha}{2}} = \delta/2
.\label{eq:failureProb}
\end{align}
}
Next, let $D$ denote the number of samples smaller than $S_{(q\gamma\eta)}$.
Then the failure event where the sampled value is larger than less than $q\gamma\eta$ numbers in $S$ is equivalent to $D\ge k$.
Notice that $D\sim\text{Bin}(Z, \frac{\gamma\eta}{1+\gamma})$ and thus $\mathbb E[D] = \frac{Z\gamma\eta}{1+\gamma}=k\eta/\alpha$.
Therefore, by setting $\Delta=(\alpha/\eta-1)$ we have:
$
\Pr[D\ge k] = \Pr[D \ge \mathbb E[D](1+\Delta)] \le  e^{-\frac{\Delta^2\mathbb E[C]}{2+\Delta}} 
= e^{-\frac{(\alpha/\eta-1)^2 k\eta/\alpha}{2+(\alpha/\eta-1)}} = e^{-\frac{(1-\alpha)^2k/\alpha}{2}} = \delta/2$
.

%
Together with~\eqref{eq:failureProb}, this concludes the proof.
%
\end{proof}

As a numeric example, consider the case where $q=10^6,\gamma=1$, which is used in~\cite{qMax}. If we set $\delta=10\%,\alpha=4/5$, we get that we only need to sample $Z=300$ elements (and pick the $k$'th smallest, for $k=120$) to satisfy the requirements. Finding a quantile within $Z=300$ elements is considerably faster, and more space efficient than finding it within $q=10^6$ elements  as the \qMAX algorithm of~\cite{qMax} does. Indeed, the sampled pivot can fail (with probability bounded by $\delta=10\%$), but we can then repeat the sampling procedure. Additionally, we are not guaranteed to clear $q\gamma$ items but just $q\gamma\eta$, for $\eta\approx 0.63$, (and on average, clear $q\gamma\alpha$) so we require more maintenance operations than \qMAX. Yet, the benefit of the faster pivot selection dwarfs these drawbacks.
Interestingly, for finding the $k$'th smallest number in the sample, we could use the \qMAX algorithm of~\cite{qMax}. However, for such a small $k$ a simple heap is fast enough. As another bonus, we only need an additional memory of $k$ elements for the heap, while \qMAX requires $\approx 0.25q(1+\gamma)$ auxiliary space for the pivot calculation that used median-of-medians.

\paragraph{\textbf{A Las Vegas Algorithm}}
As described previously, our algorithm has a chance of failing and we need to keep selecting pivots until we find a successful pivot. That is, we describe a Las Vegas algorithm where the runtime is probabilistic but the result is always correct.  More so, we need to check each pivot selection and verify that it is successful. To do that, we make a linear pass over the array, placing larger than the pivot items on one side and smaller ones on the other. While this requires $\approx q$ comparisons, such a step is significantly faster than exact quantile computation. In addition, it has the advantage of allowing the algorithm to insert elements sequentially into a contiguous memory block during an iteration. Such an access pattern minimizes cache misses and compensates for the additional computation. If the number of elements smaller than the pivot is not in $[q\gamma\eta, q\gamma]$, we select another pivot and start again. 
Theorem~\ref{thm:mainthm} implies that this step is rare, and we experimentally show it barely decreases the algorithm's throughput on real traces.

\paragraph{\textbf{Parameter Tuning}}
Notice that while $\gamma$ is inherent to the problem definition, we have some freedom in choosing $\alpha$ (which would, in turn, determine $\eta$).
While a valid approach is to tune it experimentally, an interesting question from a theoretical perspective is how to decide on the ``right'' $\alpha$ value. 
Intuitively, the above sampling procedure succeeds with probability $1-\delta$, making the number of failed attempts to be a geometric random variable $X\sim\mathit{Geo(1-\delta)}$.
Therefore, the expected runtime of the maintenance procedure is $O(aX+b)$, where $a=O(Z)$ \footnote{It is possible to get the $k$'th smallest element among the $Z$ samples in $O(Z)$ time, e.g., by using a \qMAX data structure.} is the time it takes for the sampling procedure and $b=O(q)$ is \mbox{the time for the pivot verification step.}

Our analysis in 
Appendix~\ref{app:param} 
provides a closed-form formula for the expected \emph{number of comparisons} needed with respect to $\alpha$. Further, the expression is concave, and we can numerically approximate the optimal value to the desired precision. 
Nonetheless, our experiments show that minimizing the number of comparisons is not a reliable proxy for the runtime, as it is affected by other factors such as random bits generation. {Thus, in the evaluation section, we tune it experimentally.}

\paragraph{\textbf{Deamortization}}
We show how \sys can reach $O(1)$ \emph{worst-case} operations in Appendix~\ref{app:deamortization}.

\subsection{Faster Heavy Hitters with \sys-HH}

In this section, we present our  \sys-HH that solves the weighted heavy hitters problem with a constant update time (similar to recent works~\cite{IMSUM,DIMSUM}) but is empirically faster and more accurate. 
Among the existing constant update time solutions, Anderson et al.~\cite{IMSUM} presented the algorithm closest to our approach.
Their solution, \emph{SMED}, stores $C /\epsilon$ counters for some $C>2$. Such a choice is asymptotically optimal but the constant is far from ideal as we can solve the problem with $1 /\epsilon$ counters~\cite{SpaceSavings}. 
When SMED allocates a new entry to each unmonitored flow it encounters. When the number of monitored items becomes greater than $C /\epsilon$, SMED samples $O(\log n)$ counters and calculates their median value. Then, it decreases all counters by this median value, deleting any counter that becomes zero or lower.

\sys{}-HH follows the same template while addressing important drawbacks of SMED. 
First, the number of counters ($C/\epsilon$) for $C>2$ can be reduced toward the optimal $1/\epsilon$ counters. 
Second, as confirmed by both the authors and our experiment, SMED's performance is bottlenecked by the linear-time maintenance operation of updating all counters. Furthermore, SMED requires auxiliary data structures that have their own memory overheads. Namely, it also includes a hash table that maps elements to counters to increase in constant time whenever \mbox{a flow with an allocated counter arrives.}

In \sys-HH we circumvent all of these problems. First, we require no linear-sized auxiliary data structures and only make use of a single Cuckoo hash table~\cite{Cuckoo}. Doing so reduces the memory overheads, and simplifies our algorithms. As for auxiliary structures, we need them to take a sample but such an approach allows for fewer counters (e.g., $1.5/\epsilon$).

\looseness=-1
Finally, and most importantly we do not make a linear pass to remove small entries from our Cuckoo table (avoiding the bottleneck in the previous approach). Instead, we maintain a ``water level'' quantity and treat all table entries below the water level as logically deleted. Thus, by carefully setting the water level we can make sure that our Cuckoo table never fills up, even if we do not know exactly which entries are deleted. 

To explain our water level technique, we need to first provide a brief explanation of how Cuckoo Hash (CH) tables work. 
CH tables are key-value stores that support operations such as mapping keys into values and supporting updates.
For $d,w\in\mathbb N$, CH organizes the (key, value) data into $d$ \emph{buckets}, each containing $w$ pairs.
It uses two hash functions $h_1,h_2:\mathcal K\to \set{0,\ldots,d-1}$ that maps keys $k\in\mathcal K$ into buckets, such that a key $k$ can only be found in buckets $h_1(k)$ or $h_2(k)$.
When inserting a $(k,v)$ pair, CH inserts the pair into either bucket $h_1(k)$ or $h_2(k)$ if not both are full.
In case both buckets are full (contain $w$ pairs each), CH \emph{evicts} a pair $(k',v')$ from one of the buckets, to make room for $(k,v)$.
Next, $(k',v')$ needs to be inserted; if it was evicted from $h_1(k')$, CH inserts it to $h_2(k')$ and vice versa.
Note that this insertion may cause eviction of an additional pair, etc. The operation terminates \mbox{once the inserted pair is placed in a non-full bucket.}

The analysis of CH shows that as long as the \emph{load factor} (defined as the number of pairs in the table divided by $w\cdot d$) is not ``too high'', insertions are likely to terminate after only a few evictions.
The notion of ``too high'' depends on $w$. For $w=2$, a load factor of up to  $1/2$ is likely to be okay, while for $w=4$, load factors of $L\approx 90\%$ are fine~\cite{mitzenmacher2009some}.
Since the lookup of a bucket takes $O(w)$ time, it is common to restrict the bucket sizes, e.g., to $w=4$ or $w=8$. 

Thus, given a CH table, our goal is to never exceed the maximal load factor for correctness, and thus we need to set the water level in a manner that there are always enough logically deleted entries. On the other hand, for correctness, we must retain the largest  $(1/\eps)$ counters at all times. Any counter which is not part of the $(1/\eps)$ largest can be logically deleted safely. 
While operating the CH table, we use logically deleted entries on read operations. However, when updating the table we are free to write over these entries. Such an approach improves the accuracy (as we retain the maximal number of counters at all times), and the speed (as we avoid a periodic linear pass over all entries).  {We now explain how \sys-HH operates:}

\begin{figure}[ht!]
\vspace*{-1mm}
\centering
\includegraphics[width=.62594\columnwidth]{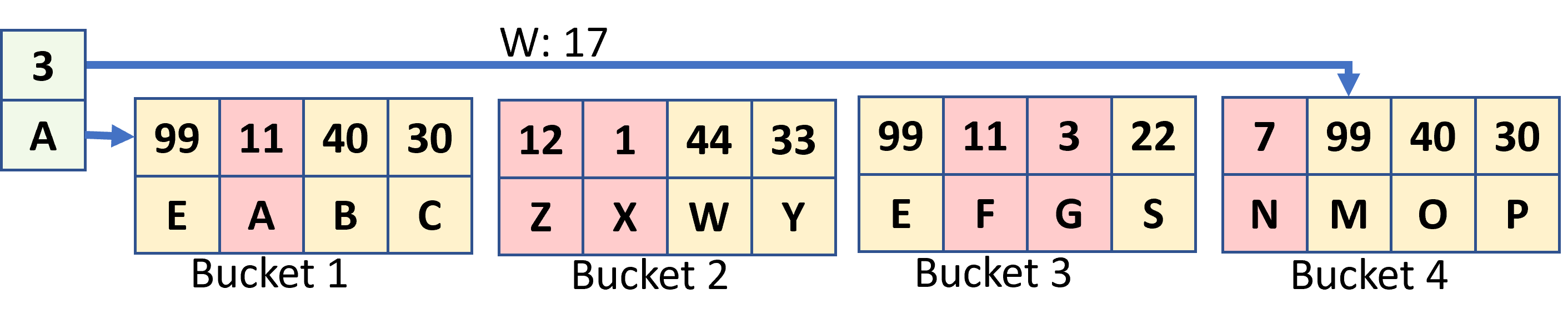}
\vspace{-2.5mm}
\caption{An illustration of \sys-HH. We use the water level technique to keep the load factor of the Cuckoo table within acceptable margins. In this example, the entries below the water level are pink and can be replaced. However, notice that all the tables' entries (even those logically deleted) contain valuable information. In this example, we report another occurrence of flow A with weight 3. According to \sys-HH, if the flow does not have a counter, it is added with a frequency of $\mathcal{W}+3 = 17+3 =20$. In this example, we are lucky to find a logically deleted entry of $A$ with a frequency of $11$, which we increase to $14$. Thus, we would have a smaller approximation error for A's frequency.  
}
\label{fig:squid}
\vspace{-2mm}
\end{figure}

\noindent \textbf{Query: }
The use of CH tables allows finding a flow's counter in a constant time with a lookup. \\
\textbf{Update: }
When seeing a new packet $(id,val)$, we check whether $id$ has a counter in either bucket $h_1(id)$ or $h_2(id)$.
If such a counter exists, we increase its value by $val$. Otherwise, we insert the counter $\langle id, val+\mathcal W\rangle$ into one of the buckets and possibly move an existing counter to its other bucket to make room.
Here, $\mathcal W$ is the \emph{water level} of the algorithm -- a quantity that lower bounds the size of the $(1/\eps)$'th smallest counter in the CH table. Once the number of counters in the CH table reaches $w\cdot d/L(w)$, where $L(w)$ is the allowed load for CH with $w$-sized buckets (e.g., $L(2)=1/2$ and $L(4)\approx 0.9$), a maintenance operation is performed.
Specifically, we use our \sys sampling procedure to produce a value that is (with a very high probability) smaller than at least $1/\eps$ counters and larger than $\alpha/\epsilon$ others. We use this quantity as the water level $\mathcal W$, thereby implicitly deleting any counter below this level. That is, when we search a bucket for an unallocated counter, we consider all counters whose value is lower than $\mathcal W$ to be unallocated. This way, we avoid the expensive linear scan that is the \mbox{bottleneck of heavy hitters algorithms like SMED.}

Finally, we determine the CH dimensions: we use a constant value for $w$ (e.g., $w=4$) to ensure that bucket lookups require constant time. To ensure that the CH works, we set $d=\frac{c}{w\cdot L(w)\cdot \eps} = O(1/\eps)$; here, $c=O(1)$ is a performance parameter with a similar purpose as $\gamma$ in \sys. Specifically, our sampling procedure uses $\gamma = \frac{c}{L(w)}-1$ to determine its parameters. Notice that, unlike \sys, here we can have $c=1$, as we already have a $1/L(w)$ multiplicative constant factor increase in the number of counters. 
Critically, we note that this allows \sys-HH to be not only faster but also use fewer counters than SMED: SMED requires $C\cdot\epsilon^{-1}$ counters for some $C>2$, while \sys-HH can operate using $\frac{\epsilon^{-1}}{L(w)}$. For example, with $w=4$ we can use fewer than $1.15\epsilon^{-1}$ counters.
Interestingly, such a choice of $w$ is also well aligned with the capabilities of modern AVX operations. Specifically, we use AVX to parallelize the checking of whether a key is in a bucket. Similarly, if an unmonitored flow arrives, we can parallelize the search for a counter smaller than the water level in its buckets using AVX. Compared with scalar computation, we measured our AVX code (using AVX2, which is now standard in most CPUs) to further accelerate the computation by $\approx 30-40\%$.
Figure~\ref{fig:squid} provides a detailed example of \sys-HH's update process and of the water level technique. Observe that the Cuckoo hash table is the only data structure we need and that all entries in the hash table (even the logically \mbox{deleted ones) contain useful measurement information.} 

\paragraph{\textbf{A Monte-Carlo HH Algorithm:}}
Heretofore, we referred to \sys as a Las Vegas algorithm that is guaranteed to succeed. However,  it is often acceptable for the weighted HH application to allow a small failure probability as this enables faster algorithms, as done, e.g., in SMED and sketches such as Count Min~\cite{CountMinSketch} and Count Sketch~\cite{topk2}.
Specifically, SMED assumes that we know an upper bound on the number of packets in the measurement $n$ and, therefore, can bound the number of maintenance operations. Knowing the number of packets allows SMED to determine the number of samples in each maintenance so that the overall error probability will be bounded by $1/\text{poly}(n)$.
Namely, it follows that there can be at most $\overline M=O(n\epsilon)$ maintenance operations. Therefore, if each has a failure probability of, e.g., $\delta'=\delta/\overline M$, the stream will be processed without an error with probability $1-\delta$. Notice that getting a failure probability of $\delta'$ requires $\mathfrak c_{SMED}\cdot \log\delta'^{-1}=\mathfrak c_{SMED}\cdot \log(\overline M/\delta)$ samples for some constant $\mathfrak c_{SMED}$ that depends on the redundancy parameter $C$. Thus,  SMED requires \mbox{$\mathfrak c_{SMED}\cdot  M\cdot \log(\overline M/\delta)$ samples in total.}

In \sys-HH we can take a similar approach and have $\mathfrak c_{\sys}\cdot  M\cdot \log(\overline M/\delta)$ samples in total for a constant $\mathfrak c_{\sys}$ that depends on $L(w)$ and $c$ (see~above).\footnote{To simplify the algebra, we use base-2 log and the difference from the $\ln(2/\delta)$ in our analysis can be factored into $\mathfrak c_{\sys}$.} 
However, in some cases, we may not know the stream length beforehand; further, the data skew means that many of the arriving keys have a counter and do not bring us closer to a maintenance stage.
Therefore, it is important to design a solution when \emph{the actual number of maintenances} $M$ is unknown.

Our algorithm works in \emph{phases}, where phase $0$ contains just the first maintenance and phase $i\in\set{1,\ldots,\ceil{\log M}}$ contains maintenances $\set{2^{i-1}+1,\ldots,2^i}$.
The maintenance in phase $i=0$ is performed with failure probability $\delta_0=\delta/2$.
Each maintenance in phase $i>0$ is executed with failure probability $\delta_i = \delta\cdot 4^{-i}$ (e.g., maintenances 3-4 run with $\delta_2=\delta/16$, maintenances 5-8 run with $\delta_3=\delta/64$, etc.).
Using the union bound, one can verify that the overall failure probability is at always lower than $\delta$.

Let us calculate the number of samples this process results in. In phase $0$, the maintenance uses $\mathfrak c_{\sys}\cdot\log(2\delta^{-1})$ samples.
%
%
%
Therefore, if $M=1$, 
we use $\mathfrak c_{\sys}\cdot \log(2\delta^{-1})$, i.e., only $\mathfrak c_{\sys}$ samples over the optimum.
Assume that $M>1$. Each phase $i\in\set{1,\ldots,\floor{\log M}}$ has $2^{i-1}$ maintenances, while the last phase (if $M$ is not a power of two) has $M-2^{\floor{\log M}}$ maintenances.
Therefore, the overall number of samples is:
$
   \mathfrak c_{\sys}\cdot \Bigg( \log(2\delta^{-1}) + 
   \sum_{i=1}^{\floor{\log M}} 2^{i-1}\cdot \log \parentheses{ \frac{1}{\delta\cdot 4^{-i}}}    + (M-2^{\floor{\log M}})\cdot\log\parentheses{\frac{4^{\floor{\log M}+1}}{\delta}} \Bigg)\\
  = \mathfrak c_{\sys}\cdot \Big( M\cdot \log\delta^{-1}   -2^{\floor{\log M}+2}  + M\cdot\parentheses{2\floor{\log M}+2}  + 3\Big)   \le \mathfrak c_{\sys}\cdot \Big( M\cdot \log\delta^{-1} + 2M\log M - 1.8M + 3\Big).
$

That is, our approach always samples at most $2\cdot \mathfrak c_{\sys}\cdot M\log (M/\delta)$ (i.e., less than twice) more than a hypothetical algorithm that knows $M$ at the start of the measurement.

To further improve this algorithm, we can use domain knowledge to ``guess'' the number of maintenance operations $M_0\ge 1$ (where $M_0=1$ coincides with the above approach).
We then have maintenance operations $\set{1,\ldots,M_0}$ in the phase $0$, each runs with $\delta_0=\delta/(2M_0)$. Similarly, each phase $i>0$ contains maintenance ops $\set{M_0\cdot 2^{i-1}+1,\ldots,M_0\cdot 2^{i}}$, each configured for $\delta_i=\delta\cdot 4^{-i} / M_0$.
Using $M_0>1$ reduces the number of phases to one if $M_0>M$ and to $1+\ceil{\log(M/M_0)}$ otherwise.
If $M_0>M$, this gives a bound of $c_{\sys}\cdot M\cdot \log(M_0/\delta_0)=c_{\sys}\cdot M\cdot \log(2M_0/\delta) $ on the total number of samples.
Otherwise, we can bound the overall number of samples by
{\small
$
  \mathfrak c_{\sys}\cdot M_0\cdot \Bigg( \log(2M_0\delta^{-1}) 
  + \sum_{i=1}^{\floor{\log (M/M_0)}} 2^{i-1}\cdot \log \parentheses{ \frac{M_0}{\delta\cdot 4^{-i}}}
  + (M-2^{\floor{\log (M/M_0)}})\cdot\log\parentheses{\frac{M_0\cdot 4^{\floor{\log (M/M_0)}+1}}{\delta}} \Bigg)\\
\ifdefined\verboseAnalysis  
    = \mathfrak c_{\sys}\cdot \Bigg( \log(M_0/\delta) + 1 + (2^{\floor{\log (M/M_0)}}-1)\cdot \log(M_0/\delta)
   + 2^{\floor{\log (M/M_0)}+1}\cdot\floor{\log (M/M_0)} - 2^{\floor{\log (M/M_0)}+1} + 2
   + (M-2^{\floor{\log (M/M_0)}})\cdot\log\parentheses{\frac{4^{\floor{\log (M/M_0)}+1}}{\delta}} \Bigg)\\
  = \mathfrak c_{\sys}\cdot \Bigg( 2^{\floor{\log (M/M_0)}}\cdot \log(M_0/\delta)
  \\ + 2^{\floor{\log (M/M_0)}+1}\cdot(\floor{\log (M/M_0)}-1) + 3
  \\ + (M-2^{\floor{\log (M/M_0)}})\cdot\parentheses{\log\delta^{-1}+2\floor{\log (M/M_0)}+2} \Bigg)\\
  = \mathfrak c_{\sys}\cdot \Bigg( M\cdot \log(M_0/\delta)
  \\ + 2^{\floor{\log (M/M_0)}+1}\cdot(\floor{\log (M/M_0)}-1)  + 3
  \\ + (M-2^{\floor{\log (M/M_0)}})\cdot\parentheses{2\floor{\log (M/M_0)}+2} \Bigg)\\ 
\fi  
  = \mathfrak c_{\sys}\cdot \Big( M\cdot \log\delta^{-1} + M\log M_0
   -2^{\floor{\log (M/M_0)}+2} 
    + M\cdot\parentheses{2\floor{\log (M/M_0)}+2}  + 3\Big).
  %
  %
$
}
{
That is, if $M_0\in[M,M^2]$, our algorithm makes at most $\mathfrak c_{\sys}\cdot \parentheses{ M\cdot \log\delta^{-1} + 2M\log M}$ samples, i.e., always at least as good as $M_0=1$.
Also, if $M>M_0$ we always improve over selecting $M_0>1$. 
We thus conclude that this approach is suitable when our domain expertise allows us to choose $M_0>1$ such that $M\ge \sqrt{M_0}$.
}

\subsection{Score-based Caching on a P4 Switch}\label{subsec:squid-cache}

Score-based caching encapsulates a variety of caching algorithms such as LRU, LFU, SLRU~\cite{SLRU}, LIRS\cite{LIRS-Gil}, and FRD~\cite{FRD}. 
For a given cache size $\mathfrak q$, a score-based caching algorithm is defined using a function that assigns a \emph{score} to each element in the working set based on its request pattern since it was (last) admitted. For example, the LFU cache counts the number of arrivals since an item was cached, while the LRU score is the last accessed timestamp. Intuitively, when a cache miss occurs, the algorithm replaces the lowest-score item with the newly admitted one.

This section shows how \sys can enable score-based caching directly on a hardware switch. Our work is motivated by a new generation of switches that are programmable (e.g., using the P4 programming language), while having similar high throughput and low latency to their non-programmable counterparts~\cite{bosshart2013forwarding, Tofino, trident}.
We show that any such score-based policy is supported, as long as the score can be calculated in the switch's data plane.
We use LRFU~\cite{LRFU} as a working example as we uncover the challenges and \mbox{solutions used to achieve a performant P4 implementation. }

Unlike previous works that use the backend to implement the caching policy~\cite{netcache,switchkv}, \sys decides which items to admit and evict directly in the data plane, thus reducing the latency for cache misses and computational burden on the backend server. Namely, using $q(1+\gamma)$ entries, our goal is to guarantee that the $q$ elements with the highest score are never evicted. The backend is then only used for cache misses to retrieve the value, while the switch's CPU periodically performs \sys's maintenance operations for calculating new approximate quantiles. 




\smallskip
\noindent\textbf{Why P4-based \sys.} 
In-network caching solutions such as NetCache~\cite{netcache}, leverage the flexibility and the high performance (Tera-bit level bandwidth and nanosecond level latency) of programmable switch ASICs to cache frequent items in switches. Such a caching medium can potentially reduce access latency for hot items and the consumed bandwidth. However, these solutions rely on the backend server updating the cache policy to reach eviction and admission decisions, thus increasing the latency on cache misses. Namely, when a cache miss occurs, in addition to providing the value, the backend needs to decide whether to admit the current (key,value) pair and who to evict instead.
Such a dependency creates a possible performance bottleneck for these solutions as the performance of cache policies in software is considerably slower than that of key/value stores or physical switches. Further, NetCache requires the switch to frequently report to the backend the hottest items. This requires additional bandwidth and running a heavy items detector (e.g., a Count-Min sketch~\cite{CountMinSketch}) on the switch, requiring additional resources. 


\textit{Proposed solution.}  
We thus propose that the data plane shall decide which items to evict and admit, according to a score-based caching policy. Our solution achieves the goal by assigning \textit{scores} (determined by the caching policy) for items and applying \sys periodically to compute the water level threshold of the "hottest" $q$-items (\revise{in the control plane as explained later}), regarding items below the water level as evictable (logically deleted, as in \sys-HH) in the following round. 
Notably, the abstraction of score-based caching allows us to describe a wide spectrum of caching policies, that cover a variety of workloads and systems. In contrast, the existing solutions offer ad-hoc cache \mbox{policies that fit the bottlenecks of a specific system~\cite{netcache}. }

Unlike previous solutions, \sys makes admission decisions on cache misses; that is, if the user queries for an uncached key, the answer will be served by the backend, but the switch intercepts the response and can (in the data plane) decide to cache the item.
This data-plane-based cache-update approach transfers the cache-update computation burden from the backend to the P4 switch and reduces the update and cache-miss delay. In Section~\ref{sec:evaluation} we show that our approach supports a wide spectrum of score-based caching policies that are implementable in P4 switches, including LRFU and LRU. We note that \sys's solution involves sending sampled items to backends. However, such an $O(Z)$ overhead (as shown in Table~\ref{tab:p4-traffic-overhead}) per epoch is negligible compared with state-of-the-arts~\cite{netcache, switchkv}. Moreover, we argue that \sys is the enabler of the data-plane-based cache-update solution. Existing solutions such as \qMAX are unlikely to directly be applied for in-network caching, as they typically involve a large communication and CPU-processing overhead. 
For example, using \qMAX may require switches to send $q(1 + \gamma)$ items to their CPUs, which may be prohibitive due to the CPUs' \mbox{throughput being lower than the request rate.}

\noindent\textbf{Challenges of implementing \sys in P4.} The main implementation challenge lies in the limited computational ability of P4 switches. Specifically, P4 only allows memory access to an array within a single stage; within each stage, P4 switches can only perform a series of stateless operations that do not depend on each other or simple stateful operations on an array. Packet recirculation is one workaround solution for these restrictions, but it greatly increases the bandwidth overhead. Without packet recirculation, it is impossible to implement \sys's operations such as pivot computation (finding the $k$'th smallest among the sample), and cuckoo hash table insertions.

\noindent\textbf{The P4 implementation of \sys.} \revise{We propose workarounds to fit it into the P4 switches:}
\begin{enumerate}[align=left, leftmargin=0mm, labelindent=.1\parindent, listparindent=.1\parindent, labelwidth=1mm, itemindent=!,itemsep=2pt,parsep=0pt,topsep=0pt]
    \item We move the pivot value computation task into the control plane. The data plane samples packets and sends them to the controller for calculating the pivot value, and the controller sends the new pivot value back to the data plane. \revise{Since only a small number of items (e.g., $Z=1000$) is sampled, the computation and bandwidth overheads are negligible (see Table~\ref{tab:p4-traffic-overhead})}. Indeed, our \sys's sampling technique is the enabler for the task of ensuring that the top-score $q$ elements are cached -- the original \qMAX solution cannot be implemented like this as finding an exact percentile requires transmitting all the cached items to the controller.
    \item Implementing a hash table is expensive in P4 and we use the considerably simpler array-of-buckets data structure (e.g., see~\cite{randomized-admission, tirmazi2020cheetah,narayana2017language}). Each element is hashed into one of the $c$ subarrays containing $r$ slots. Item admission and eviction are restricted to the slots of a single bucket. 
\end{enumerate}

\noindent\textbf{Implementation details in Tofino switches.} We have implemented P4-based \sys in Intel Tofino and Tofino2 switches in $\sim 1300$ lines of code. The code can be found in our Github repository~\cite{SQUIDopenSource}. We vary the array size as shown in Table~\ref{tab:p4resource}. The P4 programs corresponding to $(r=4, c=2^{15})$ and $(r=4, c=2^{16})$ are implemented in Tofino, the latter one standing for the maximum possible array size implementable in Tofino.
Programs corresponding to $(r=8, c=2^{16})$ and $(r=12, c=2^{16})$ are compiled in Tofino2, which has more stages than Tofino to support larger bucket sizes. We use the stateful register arrays to store the KV pairs and the score of the items. Specifically, for each row of items, we place their keys and scores in one register array in the ingress pipeline and place their values combined with those from its adjacent row in one register array in the egress pipeline. Therefore, theoretically, as a large enough $r$, the number of stages is $\approx\frac{3}{2}r$.

\noindent\textbf{Resource overhead of \sys in Tofino.} Table~\ref{tab:p4resource} shows the resource usage of our Tofino P4 prototype. Regarding stage usage, when there are $r=4$ slots in a bucket, a minimum of $10$ stages should be allocated, which can be compiled in Tofino. The number of stages required is larger than $\frac{3}{2}r$, since the P4 program is bottlenecked by components other than the \sys array, \eg, counters, routing tables, hash number generators, etc. When $r=8$, a minimum of $14$ stages is required, which is supported by Tofino2. \Wenchen{Deleted redundant text.}
Increasing $r$ to $12$ leads to usage of $20$ stages, which also compiles on Tofino2 but leaves fewer resources for other functionalities.
Here the bottleneck becomes the register arrays for storing cached items, since one stage cannot \mbox{accommodate two $2^{16}$ 64-bit register arrays.}

As $r$ increases, we also find that the overheads of other resources such as SRAM, Table IDs, Ternary Bus, etc., grow larger. Since our application is memory-intensive, \sys consumes a large proportion of SRAM memory. 
\ifarXiv
In addition, when $r=12$, the table ID and Ternary Bus usages also reach above $50\%$ and $40\%$, respectively. 
\fi
Note that the TCAM usage is below $3\%$, indicating that our P4 application can be installed \textit{in the same pipeline} in parallel with other TCAM-heavy but SRAM-light applications. Moreover, our \sys P4 application only occupies one pipeline, meaning that other P4 applications can be installed in the other pipelines to run in parallel with \sys.

\begin{table}[H]
\vspace{-0.1in}
    \begin{minipage}[t]{0.751\linewidth}
        \resizebox{1\linewidth}{!}{%
        
        \begin{tabular}{|c||c|c|c|c|c|c|c|} \hline
            $\bm {r \times c}$ & \textbf{Target} & \textbf{Stages} & \textbf{SRAM} & \textbf{TCAM} & \textbf{Table IDs} & \textbf{Ternary Bus} & \textbf{Hash Dist} \\ \hline\hline
            $4 \times 2^{15}$ & \multirow{2}{*}{Tofino} &  10 & 15.29$\%$ & 2.778$\%$ & 25.52$\%$ & 19.79$\%$ & $19.4\%$ \\ \hhline{-~------}
            $4 \times 2^{16}$ &  & 10 & 27.92$\%$ & 2.778$\%$ & 25.52$\%$ & 19.79$\%$ & $20.8\%$ \\ \hline \hline
            $8 \times 2^{16}$ & \multirow{2}{*}{Tofino2} &  14 &  33.20$\%$ & 1.875$\%$ & 36.88$\%$ & 32.50$\%$ & $22.5\%$ \\ \hhline{-~------}
            $12 \times 2^{16}$ &  & 20 & 49.22$\%$ & 1.875$\%$ & 50.63$\%$ & 45.94$\%$ & $30.8\%$ \\ \hline
        \end{tabular}
        
        }

        \caption{\revise{Resource consumption of Tofino \sys prototype for caching.}}\label{tab:p4resource}
        
    \end{minipage}
    \hspace{-2mm}
    \begin{minipage}[t]{0.245\linewidth}
        \resizebox{1\linewidth}{!}{
        \begin{tabular}{|c || c | c | c | c | c | c } \hline
        $\gamma$ & Min speedup  &  Max speedup   \\
        \hline \hline
        $1\%$  & x4.54 & x6.666 \\
        \hline 
        $5\%$  & x2.127 &  x2.5\\
        \hline
        $10\%$  & x1.612 &  x1.923 \\
        \hline 
        $25\%$  & x1.55  &  x1.98 \\
        \hline
        $50\%$  & x1.398  &  x1.709 \\
        \hline
        $100\%$  & x1.142  &  x1.538  \\
        \hline
       \end{tabular}
       }
       
       \caption{\resizebox{.658702638\linewidth}{!}{Speedup vs \qMAX.}}\label{tab:temps} 
    \end{minipage}
    \vspace{-0.144in}
\end{table}

\vspace{-3.5mm}
\section{Evaluation}
\label{sec:evaluation}
\vspace{-0.5mm}
We defer the evaluation setup to Appendix~\ref{app:setup}.

\iffalse  
         \begin{figure}[t]
        \subfloat[Caida16]
        {\includegraphics[width =0.49\columnwidth]{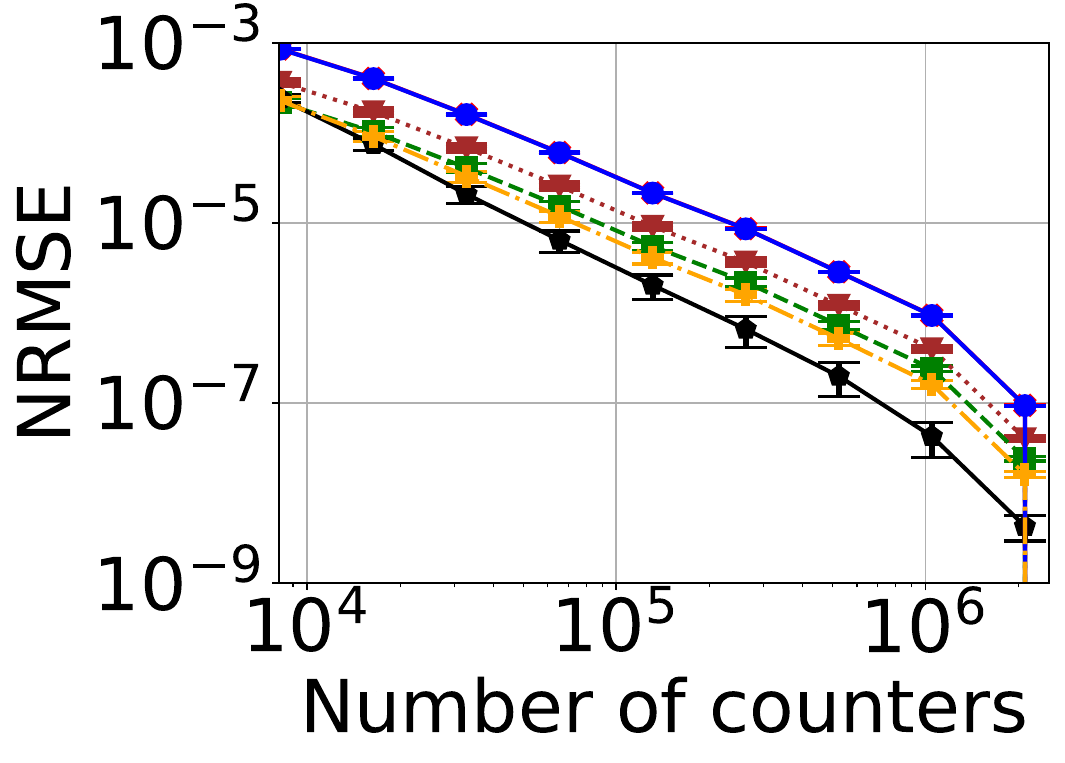}}
         \subfloat[Caida18]
        {\includegraphics[width =0.49\columnwidth]
        {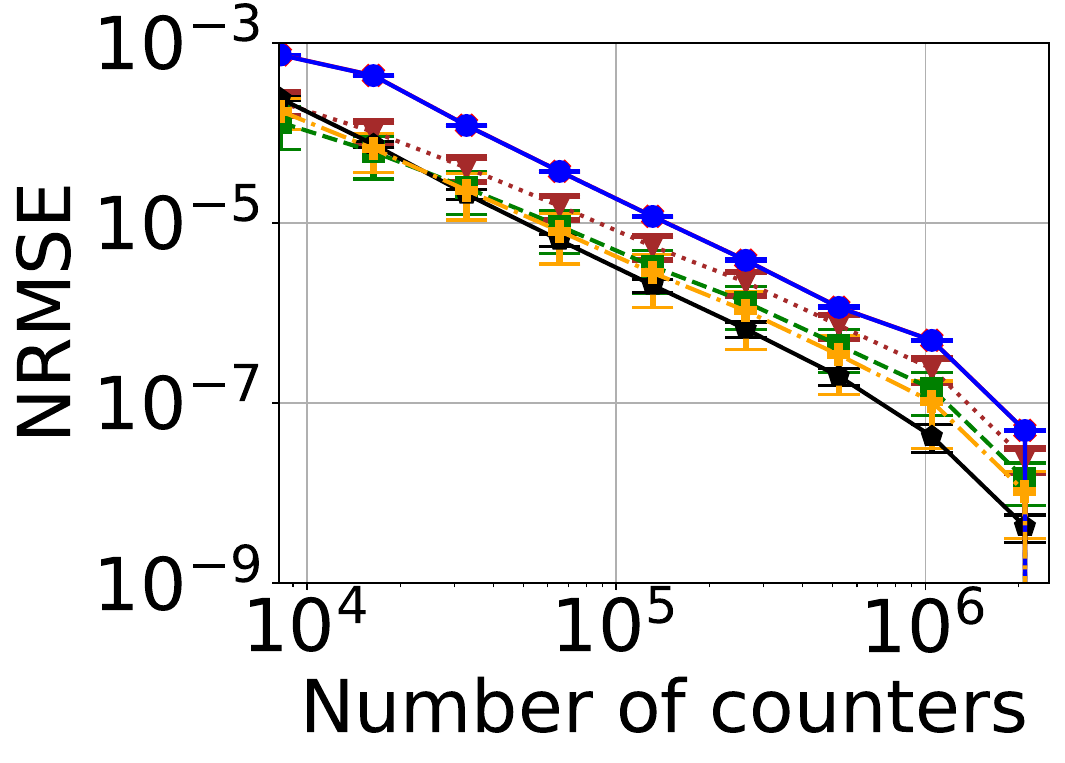}}
        \captionsetup{justification=centering,margin=.2cm}
        {\includegraphics[width =.95\columnwidth]
        {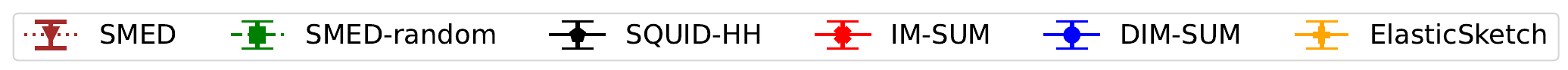}}
        \caption{ The Normalized Root Mean Square Error of SMED, $\mathit{SMED-random}$ and $\sys-HH$ as a function of the number of counters on Caida trace. \label{fig:cucko_err}}
      \end{figure}

        \begin{figure}[t]
        \subfloat[Caida16]
        {\includegraphics[width =0.49\columnwidth]{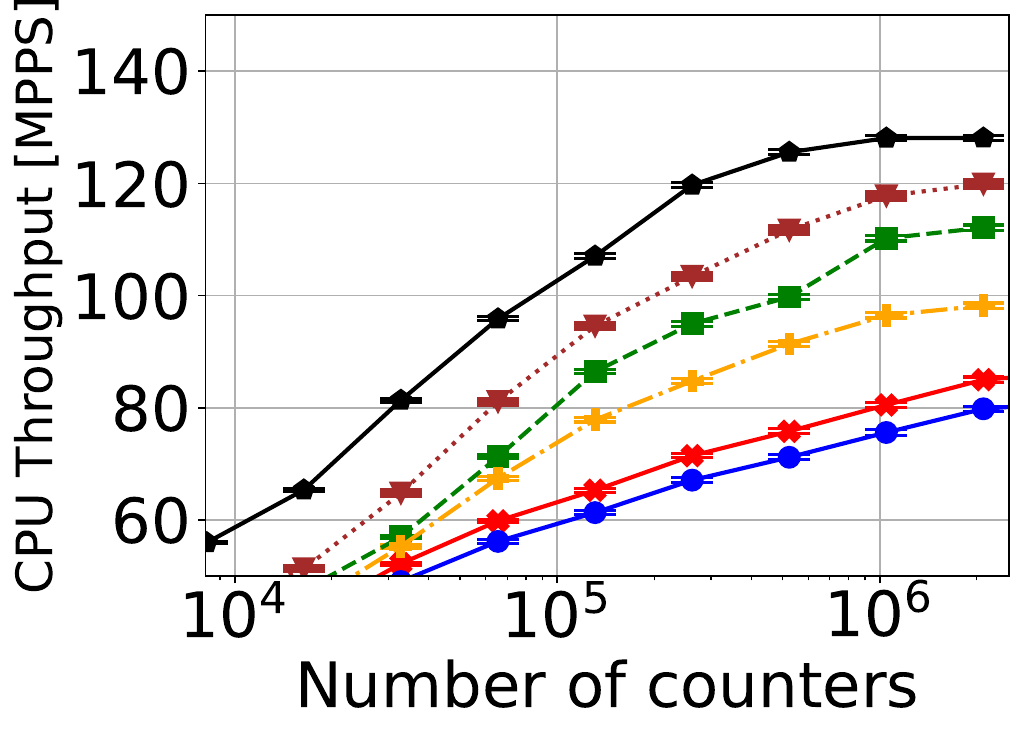}}
        \subfloat[Caida18]
        {\includegraphics[width =0.49\columnwidth]
        {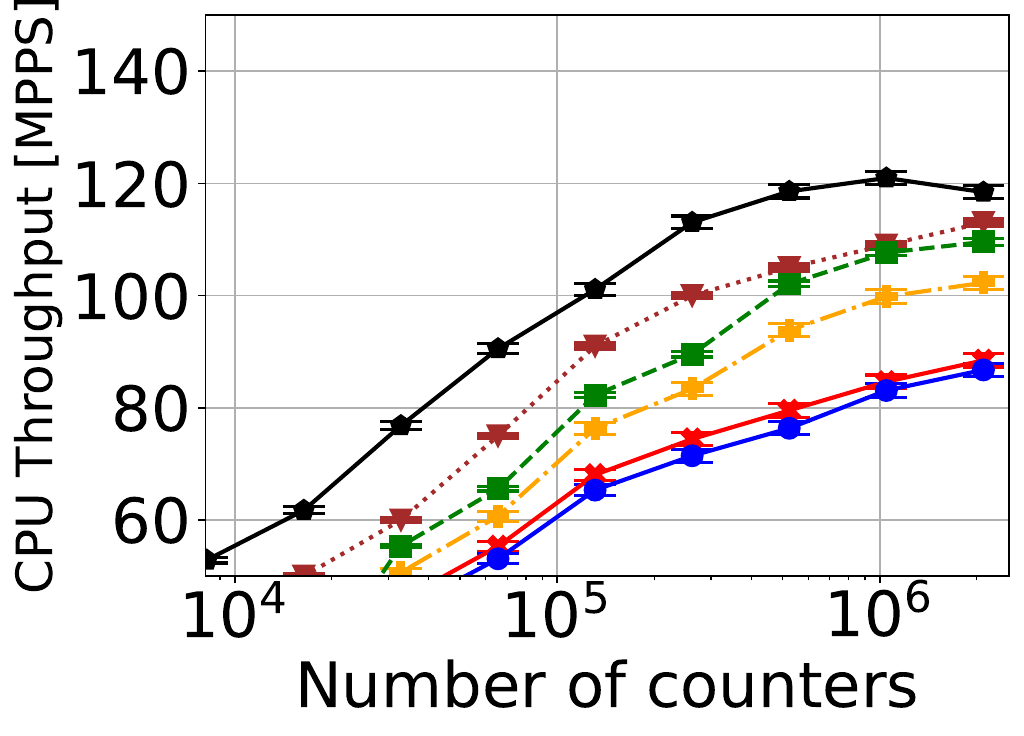}}
        \captionsetup{justification=centering,margin=.2cm}
        {\includegraphics[width =.95\columnwidth]
        {newGraphs/legend1.pdf}}\\
        \caption{ Throughput of SMED, $\mathit{SMED-random}$ and $\sys-HH$ as a function of the number of counters on Caida trace.\label{fig:cucko_Throughput}}
      \end{figure} 
\else

\fi

\vspace{-2.5mm}
\subsection{\sys\ Evaluation}
\vspace{-1mm}
We start by comparing \sys and \qMAX~\cite{qMax}  that work in amortized constant time and to library data structures such as heap and skip-list with logarithmic complexity. 

\begin{figure*}[]
        \vspace{-4mm}
        \subfloat[NWHH $q{=}10^{7}$]
        {\includegraphics[width =0.3\columnwidth]
        {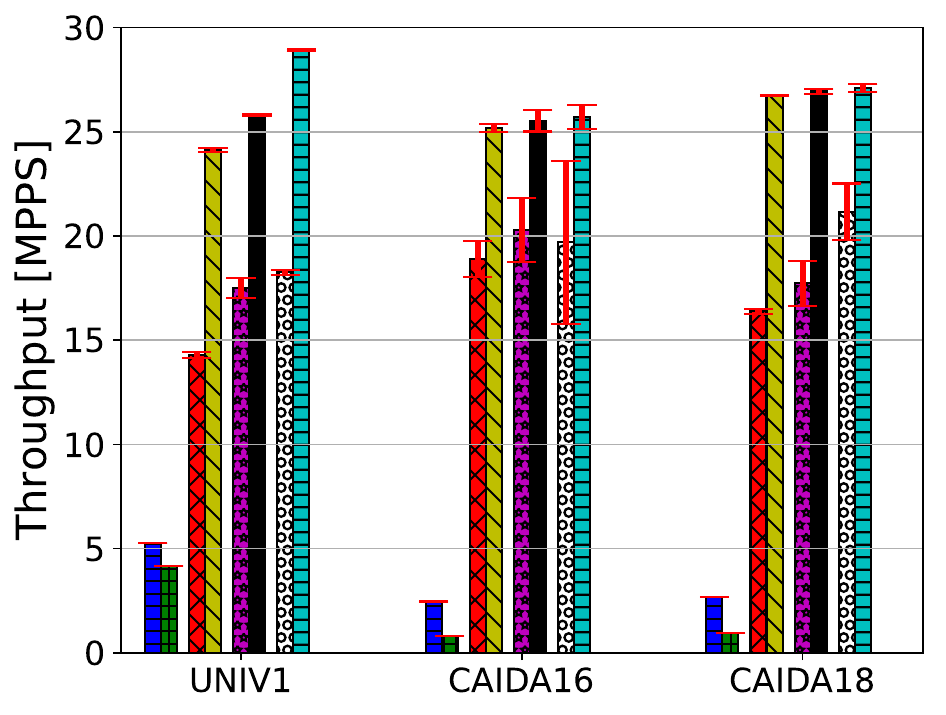}}
        \subfloat[PS $q=10^{7}$]
        {\includegraphics[width =0.3\columnwidth]
        {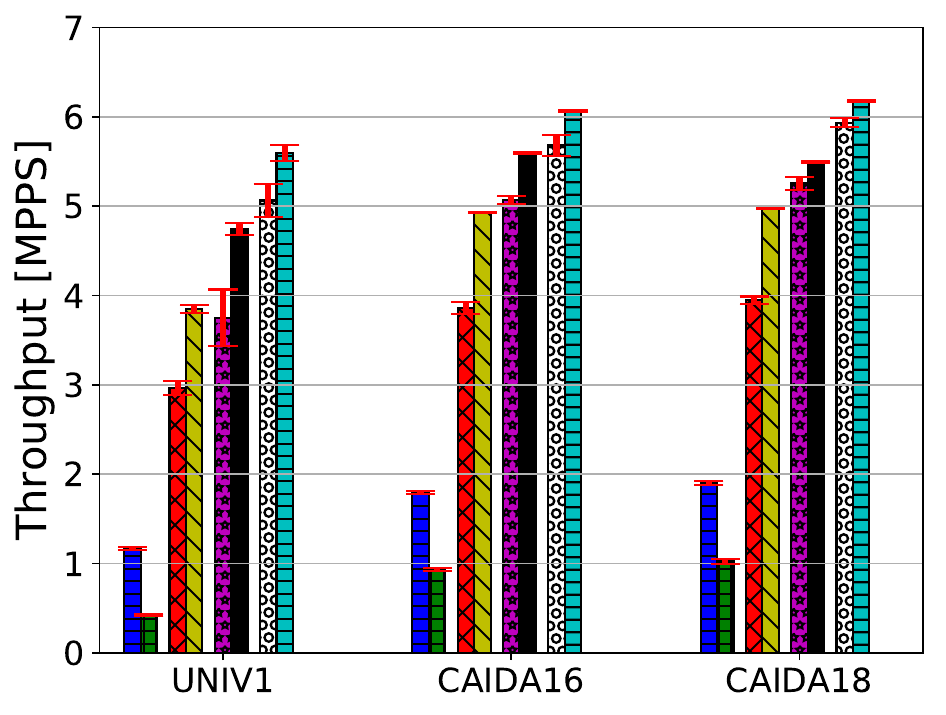}}
        \subfloat[PBA $q=10^{7}$]
        {\includegraphics[width =0.3\columnwidth]
        {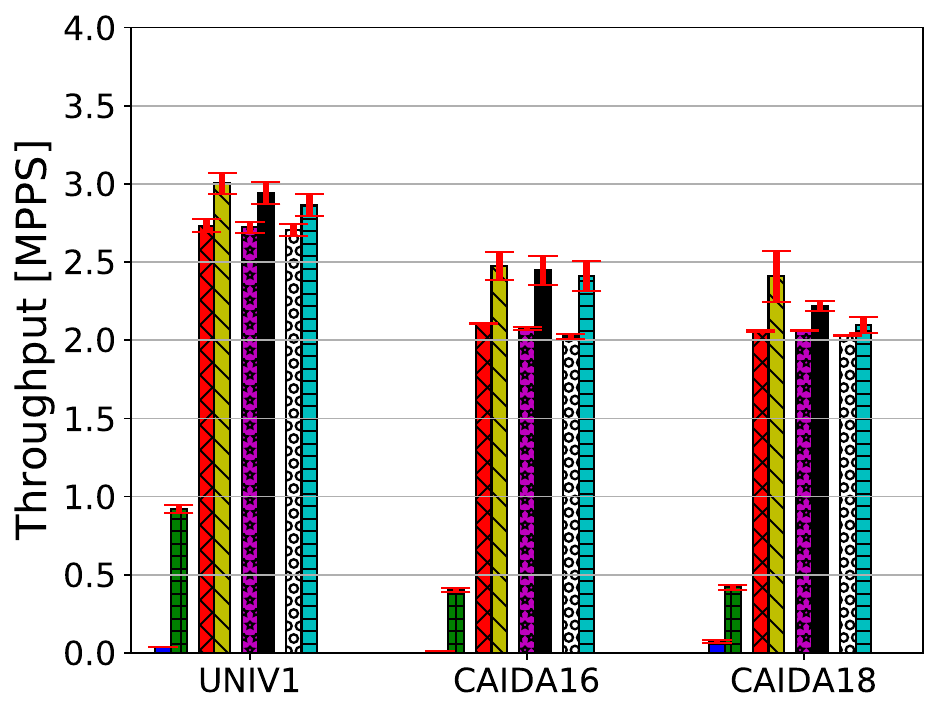}}
        \\
        \vspace{-1mm}
        {\includegraphics[width =.6\columnwidth]
        {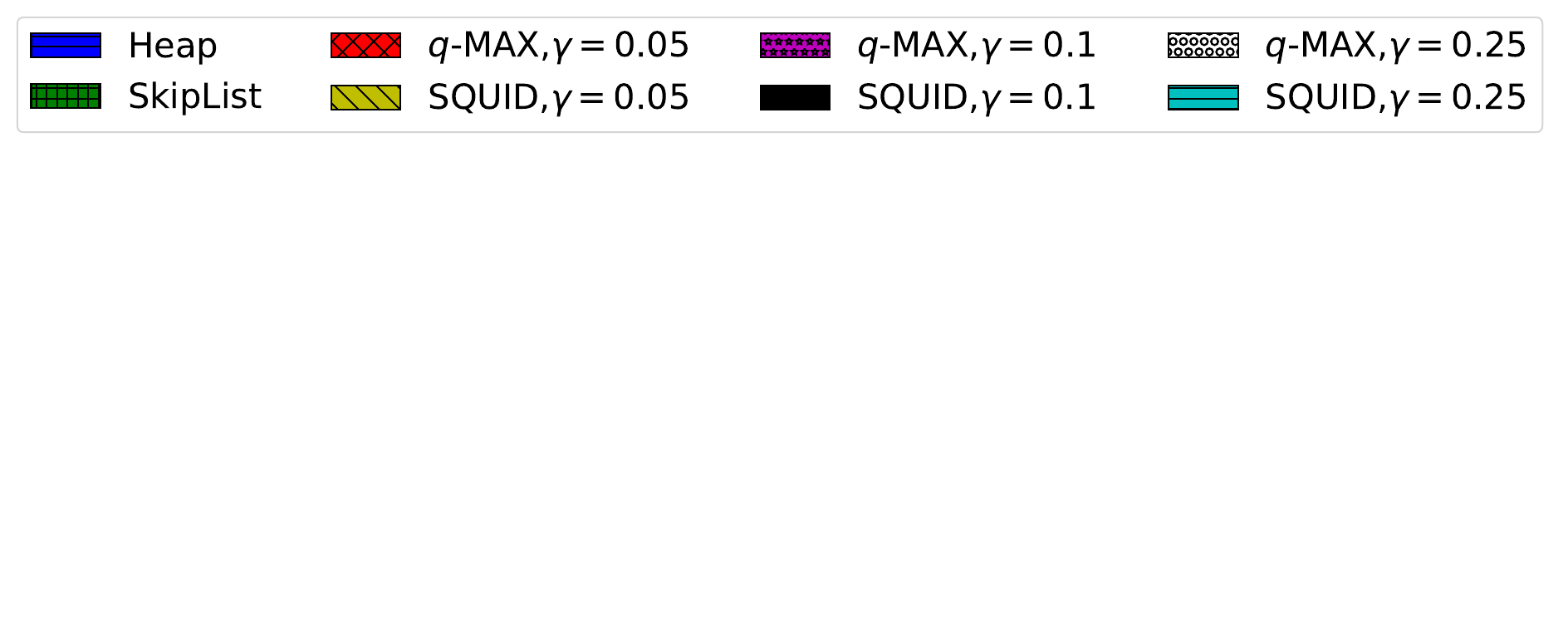}}
        \vspace{-2.8cm}
        \caption{Throughput of various applications when implemented using \qMAX and \sys.}   \label{fig:qmax_application}
        \vspace{-2.05mm}
      \end{figure*}  
\looseness=-1
We evaluate all algorithms using a sequence of 150M random integers, where each algorithm needs to find the $q$ largest of these algorithms for values in $q=10^4,10^5,10^6,10^7$. 
The results, depicted in Figure~\ref{fig:eval}, show that \sys is up to 62\% faster than the best alternative. 
In general, a large $\gamma$ value yields more speedup at the expense of more space, but these are diminishing returns. $\gamma =1$ is double the space, but $\gamma =0.25$ is only a 25\% increase in space.  Observe that  \sys is consistently faster for large $q$ values and similar in performance to \qMAX{} for small $q$ values. To explain this, note that the complexity of finding quantiles does not depend on $q$, while in~\qMAX the complexity is $O(q)$. 
\ifarXiv
In appendix~\ref{app:performance_vs_q}, 
\else
In the full version of the paper, 
\fi
we also evaluate how $z$ affects the performance and show that the speedup is greater for large $q$ values. Table~\ref{tab:temps} summarizes the minimal and maximal speedup when varying $\gamma$ for different q values. As shown, \sys is 1.142-6.666 times faster than \qMAX. Our experiments also indicate that while SQUID without AVX is $6\%-15\%$ slower than with it, it is still markedly (up to 6.4x) faster than q-MAX, i.e., the majority of the improvement comes from the algorithmic advances.

    \begin{figure*}[t]
    \vspace{-4mm}

        \subfloat[Caida16]
        {\includegraphics[width =0.16\linewidth]{newGraphs/NRMSE_caida16_counters}}        
         \subfloat[Caida18]
        {\includegraphics[width =0.16\linewidth]
        {newGraphs/NRMSE_caida18_counters}}
        \subfloat[Univ1]
        {\includegraphics[width =0.16\linewidth]{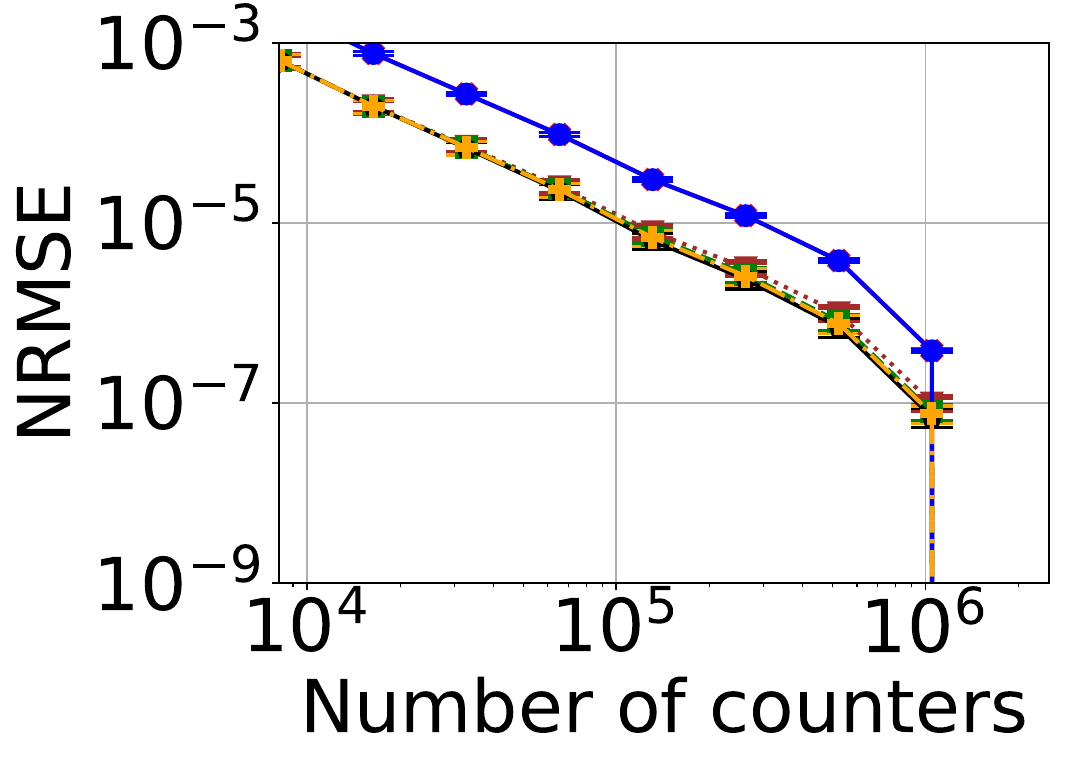}}
        \subfloat[Caida16]
        {\includegraphics[width =0.16\linewidth]{newGraphs/throughput_caida16_counters.pdf}}        
        \subfloat[Caida18]
        {\includegraphics[width =0.16\linewidth]
        {newGraphs/throughput_caida18_counters.pdf}}
        \subfloat[Univ1]
        {\includegraphics[width =0.16\linewidth]{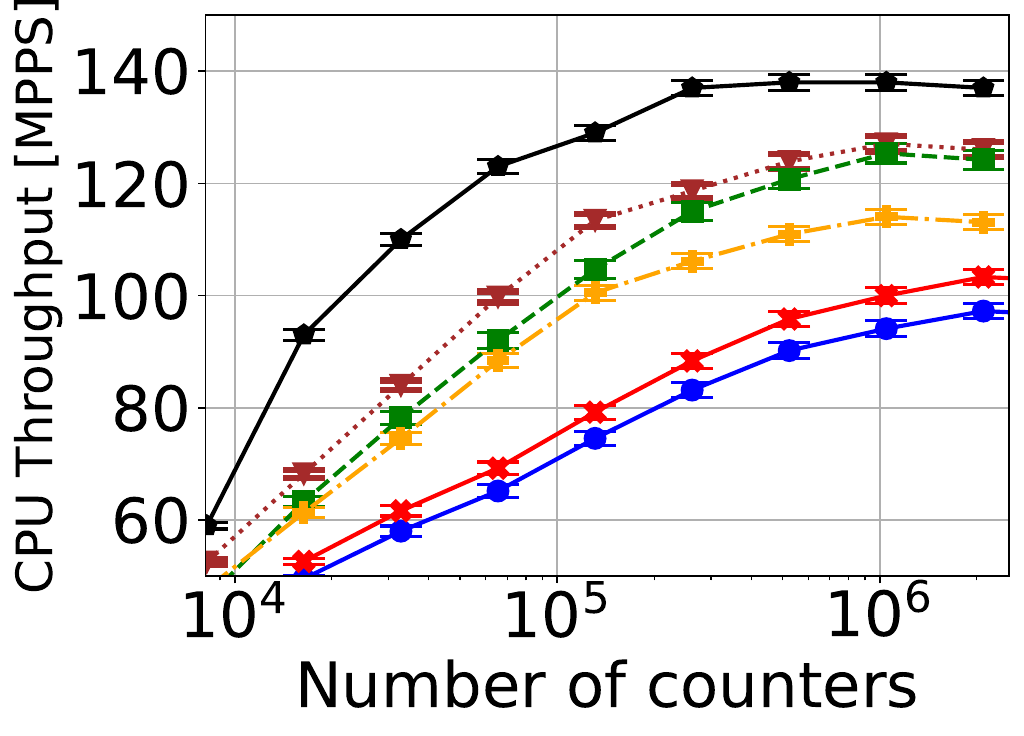}}
        \vspace{-0mm}
        {\includegraphics[width =.8\linewidth]
        {newGraphs/legend1.pdf}}
         \vspace*{-0.135cm}

        \caption{(a)-(c) The Normalized Root Mean Square Error and (d)-(f) the throughput
        of $\mathit{SMED}$, $SMED$-$random$, $\mathit{ElasticSketch}$, and \mbox{$\sys$-$HH$, $DIM$-${SUM}$ and  ${IM}$-${SUM}$, varying the number of counters on packet traces.}\label{fig:cuckoo}}
        \vspace{-3mm}
      \end{figure*}

\iftrue

\else
{\centering
\begin{table}
        \begin{tabular}{c || c | c | c | c | c | c }
        $\gamma$ & Min speedup vs \qMAX  &  Max speedup vs \qMAX  \\
        \hline \hline
        $1\%$  & x4.54 & x6.666 \\
        \hline 
        $5\%$  & x2.127 &  x2.5\\
        \hline
        $10\%$  & x1.612 &  x1.923 \\
        \hline 
        $25\%$  & x1.55  &  x1.98 \\
        \hline
        $50\%$  & x1.398  &  x1.709 \\
        \hline
        $100\%$  & x1.142  &  x1.538  \\
        \hline
       \end{tabular}
    \vspace{1mm}
     \caption{Range of speedup of \sys compared to \qMAX for each value of $\gamma$ on different $q$ values.}
     \label{tab:temps}
\end{table}
}
\fi


\textbf{Measurement Applications:}
In Figure~\ref{fig:qmax_application} We evaluate the impact of replacing \qMAX{}\ with \sys{}\ on the throughput of network applications (Network-wide heavy hitters~\cite{NetworkWideANCS}, Priority sampling~\cite{PrioritySampling}, and priority based aggregation~\cite{PrioritySampling}) using the  Caida16, Caida18 and Univ1 traces, for the following configuration: $q=10^6,10^7$ and $\gamma = 0.05,0.1,0.25$.
The results show that in all the applications, \sys{}  has the highest throughput, and the improvement is more significant for larger $q$ values. For example, in network-wide heavy hitters, we get a speedup of up to $58\%$, and in Priority sampling, the speedup is up to $25\%$, while in priority-based aggregation, we only get up to $14\%$ speedup. The reason for this variability is the relative weight of the top-q calculation in the operation of each algorithm. Specifically, in network-wide heavy hitters, top-q takes a large portion of the computation. Thus, accelerating top-$q$ calculation has a larger impact.

\vspace{-2mm}
\subsection{\sys-HH  Evaluation}
\vspace{-0.5mm}
\label{subsec:expriments}
\looseness=-1
Next, in Figure~\ref{fig:cuckoo} we evaluate \sys-HH ($\gamma=1,w=4$) on both accuracy and speed.
To measure the precision of the algorithms, we use the standard Normalized Root Mean Squared Error (NRMSE) metric~\cite{basat2020faster,basat2021salsa} explained below.
Consider a stream $S=\left\langle{(x_1,w_1),(x_2,w_2),\ldots}\right\rangle$ of packets $(x_n,w_n)$, where $x_n$ is the flow ID and $w_n$ is the packet byte size of the $n$'th packet. Let $f_{x,n}=\sum_{j\le n \mid x_j=x} w$ be the total byte size of flow $x$ up to, packet $n$. Upon the arrival of packet $(x_n,w_n)$, we use the algorithms to estimate the current byte size of $x_n$ and denote the result using $\widehat{f_{x_n,n}}$.
Then:  
$\mathit{NRMSE}=\frac{1}{|S|}\cdot\sqrt{\frac{1}{|S|}\sum_{n=1}^{|S|}\parentheses{f_{x_n,n}-\widehat{f_{x_n,n}}}^2}.$
This gives a single quantity that measures the accuracy of an algorithm over the input stream. We note that NRMSE is a normalized parameter in the $[0,1]$ range, where achieving an NRMSE of $1$ is trivial, e.g., by estimating all flow sizes as $0$.

\looseness=-1
We evaluate \sys and compare it to state-of-the-art algorithms for weighted heavy hitters, such as SMED~\cite{IMSUM}, DIM-SUM~\cite{DIMSUM}, and IM-SUM~\cite{DIMSUM}. The original SMED code includes a runtime optimization that picks the first elements of the array instead of random ones. While this corresponds to a random sample on the first maintenance, the authors did not prove that it produces a random sample in consequent maintenance. Indeed, our evaluation indicates that this optimization improves the throughput at the cost of lower accuracy. For a fair comparison, we evaluate both SMED (that uses the original author's code) and SMED-random, in which we changed the sampling method to take a random subset of counters rather than the first ones and ElasticSketch \cite{Elastic} (using its original code).

As Figures~\ref{fig:cuckoo} (a)-(b) show, \sys is more accurate than all alternatives, thanks to our lazy deletion approach that retains the useful information of logically deleted items as long as possible. Moreso, compare both SMED variants and observe that SMED is less accurate than SMED-random. Thus, not using a real random sample increases the NRMSE. Next notice that IM-SUM and DIM-SUM are the least accurate due to using two hash tables compared to a single table in SMED and \sys{}, Additionally,
Elasticsketch is more accurate than other algorithms but still less accurate than \sys.


As Figures~\ref{fig:cuckoo} (c)-(d) show, \sys is considerably faster than all the alternatives. This is attributed to our water level technique which eliminates the need for a linear pass in the maintenance operation. We also observe that SMED is faster than SMED-random, indicating that the optimization in SMED's code indeed speeds up the runtime. This is due to a more cache-friendly access pattern and fewer random number generations in maintenance operations. Despite being more accurate than SMED and
SMED-random, \mbox{ElasticSketch is slower than both, but still faster than DIM-SUM and IM-SUM.}

\begin{figure*}[]
    \vspace{-3mm}
    \subfloat[CAIDA 2018]
    {\includegraphics[width =0.32\columnwidth]
    {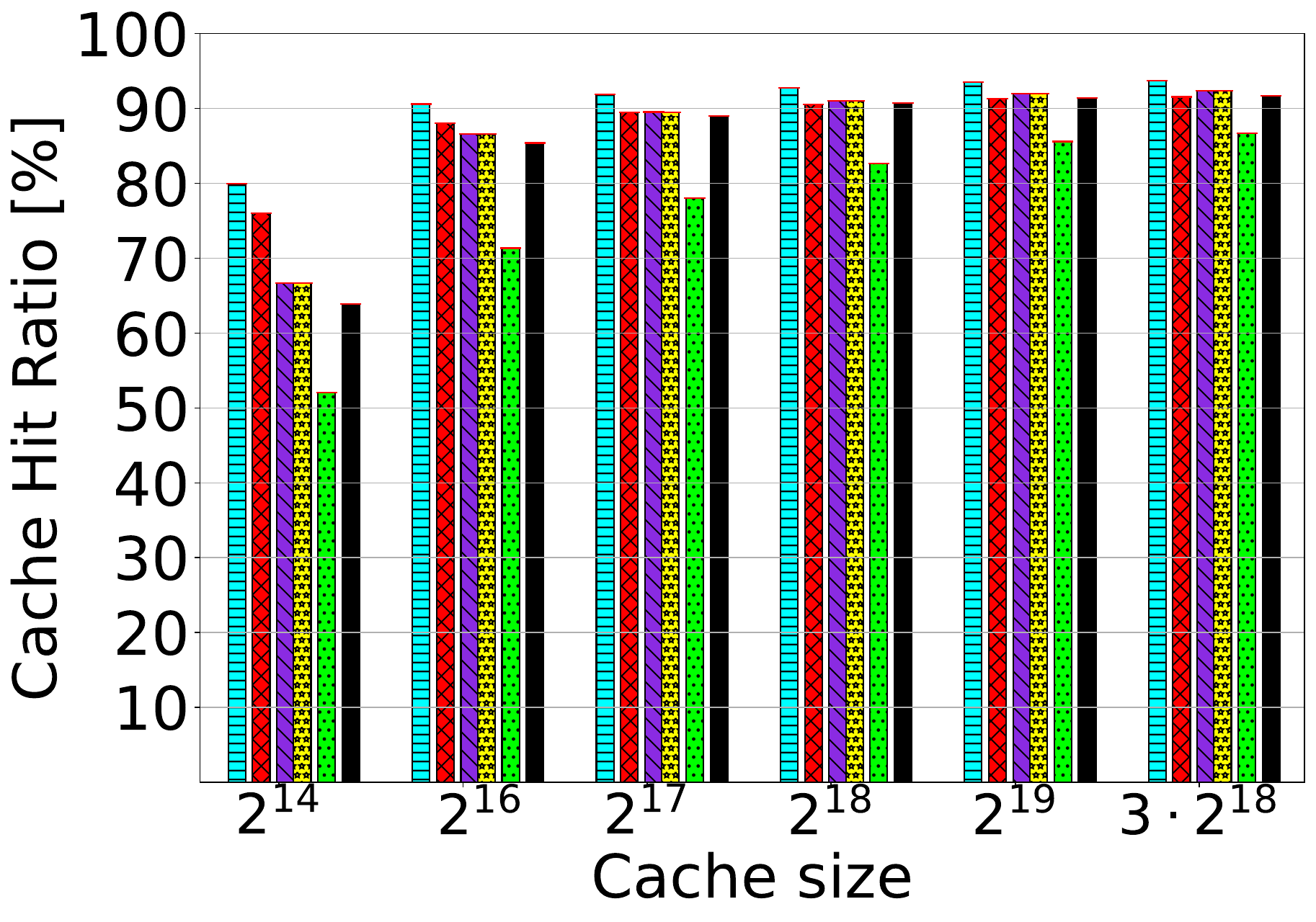}}
    \subfloat[S2 (Caching)]
    {\includegraphics[width =0.32\columnwidth]
    {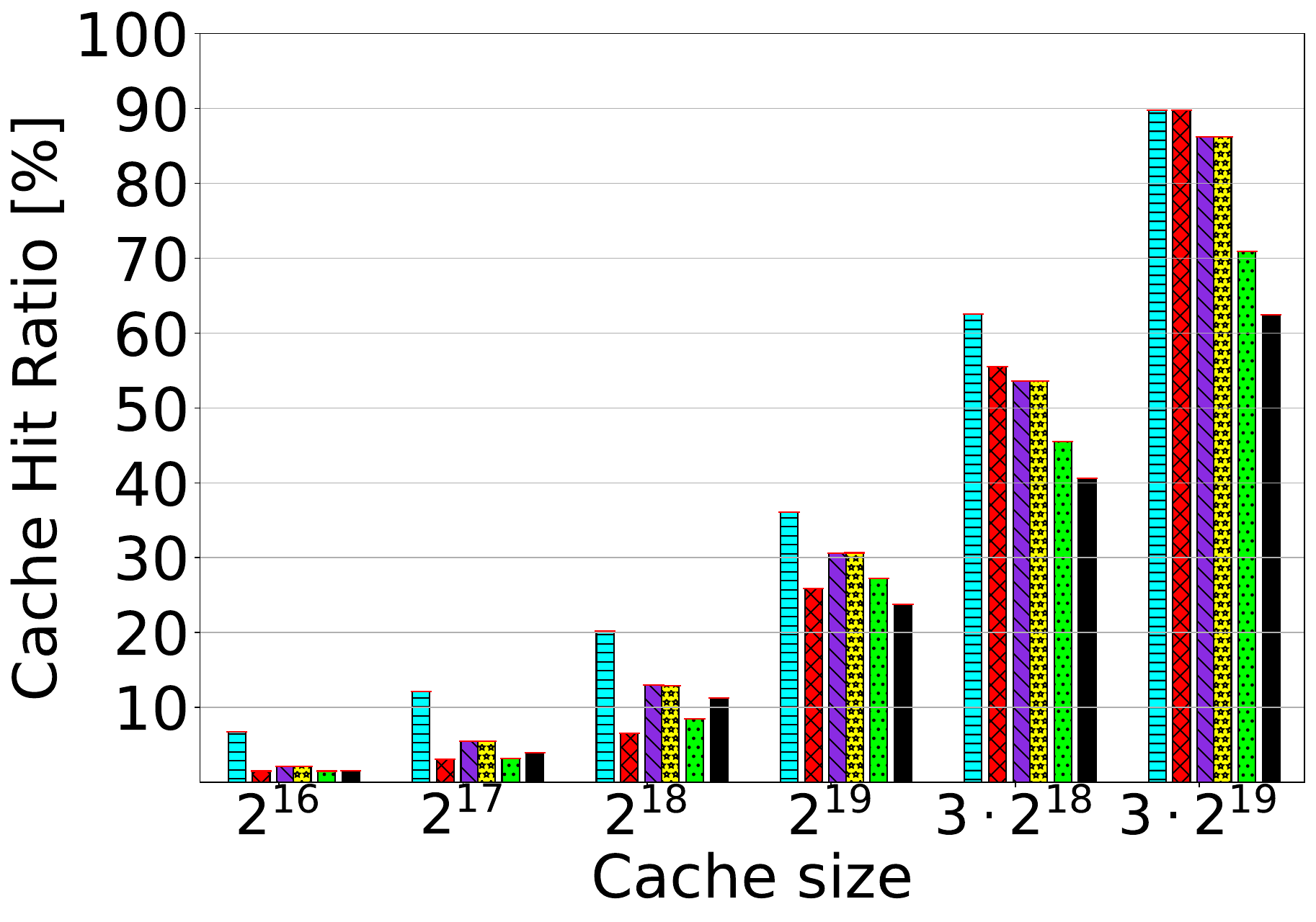}}
    \subfloat[Zipf $0.95$]
    {\includegraphics[width =0.32\columnwidth]
    {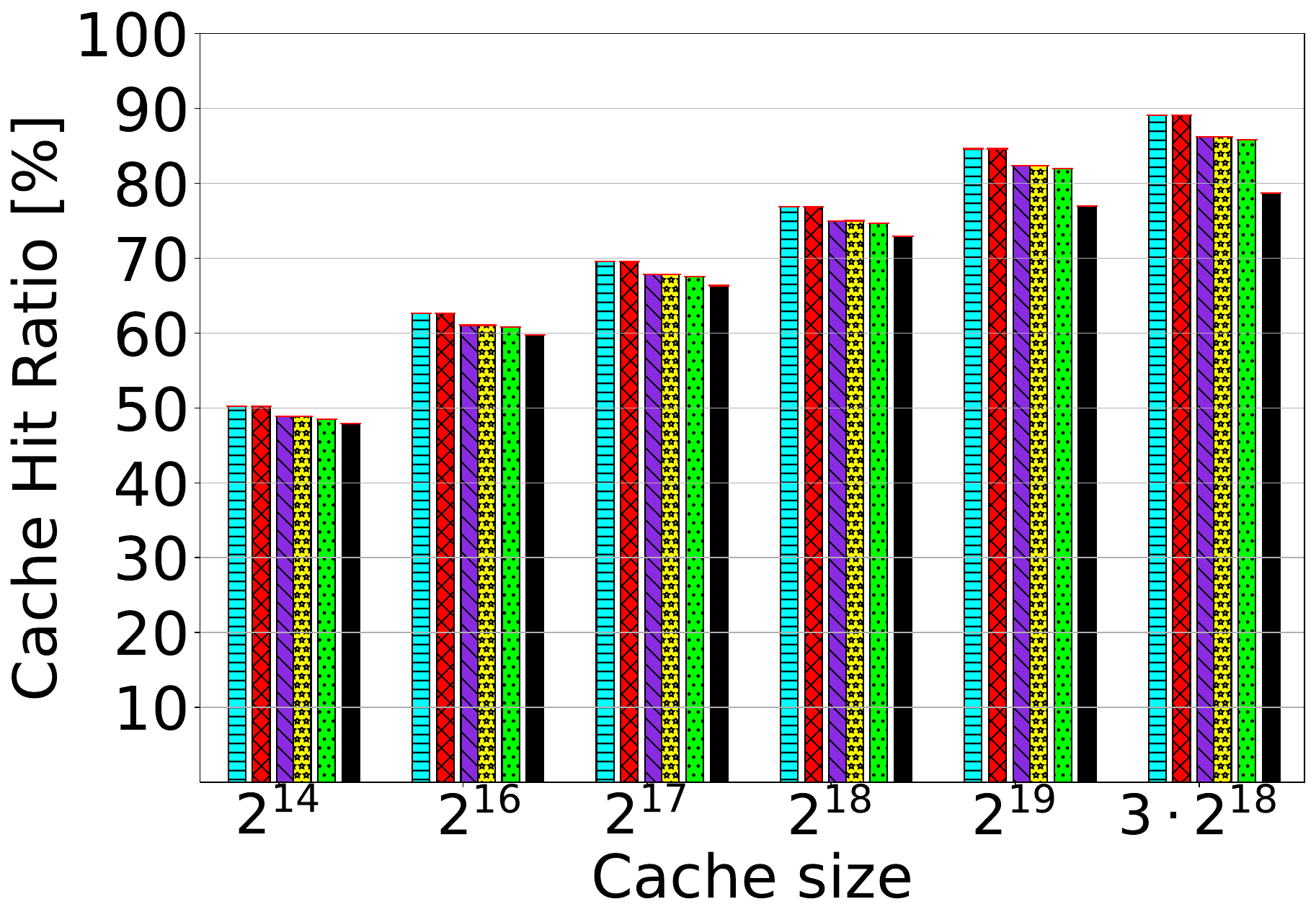}}  \\
    \centering
    {\includegraphics[width =.6132\columnwidth]
    {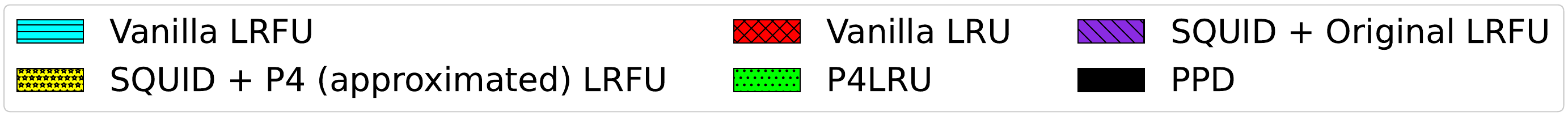}}\vspace{-0.2cm}
    \caption{\revise{Hit ratios of our P4 \sys for in-network caching, compared against vanilla LRFU, LRU, P4-LRU~\cite{P4LRU} and Practical Packet Deflection (PPD)~\cite{ppd}. Both \sys + Original LRFU and \sys + P4 (approximated) LRFU deploy an array-of-buckets structure, \mbox{while Vanilla LRFU maintains a heap to evict global LRFU items.}}}\vspace{-0.1cm}
    \label{fig:p4-lrfu_Hit_Ratio}
\end{figure*}

\begin{table}[]
    \centering
    \resizebox{.7055\columnwidth}{!}{
    \begin{tabular}{|c||c|c|c|c|c|c|c|} \hline
        P4 Cache Implementation  & Stages & SRAM & TCAM & Table IDs & Gateways & Hash Dist \\ \hline
        \sys &  $8$ & $7.85\%$ & $1.88\%$ & $16.2\%$ & $11.5\%$ & $8.3\%$ \\ \hline
        P4LRU~\cite{P4LRU} &  12 & $6.75\%$ & $0$ & $16.2\%$ & $15.0\%$ & $5.8\%$\\ \hline
    \end{tabular}
    }
    \caption{\revise{P4 resource consumption of \sys compared with P4LRU~\cite{P4LRU}. The cache size is set to $r \times c = 3 \times 2^{16}$, which corresponds to the maximum cache size for P4LRU's original implementation.}}
    \label{tab:p4lru-squid-resource}
    \vspace{-5mm}
\end{table}

\subsection{\sys-LRFU Evaluation}

\label{subsec:p4-squid}
\revise{We defer the evaluation of software version of \sys-LRFU, which shows that, as expected, \sys is both faster and with a higher hit-ratio than \qMAX-based LRFU, to Appendix~\ref{app:LRFU}}. 

\revise{Here, we focus on the evaluation of our hardware P4-based \sys implementation for score-based in-network caching. We deploy the LRFU caching policy~\cite{LRFU} and demonstrate how P4 \sys supports a wide range of caching policies. We compare \sys against P4LRU~\cite{P4LRU} and our adaptation of~\cite{ppd} (PPD)'s quantile estimation and show how \sys provides better quantile estimation and outperforms \mbox{their cache hit ratios, thanks to the LRFU generally being more accurate than LRU.}}

\subsubsection{Implementation details of \sys and baselines}\label{subsubsec:p4-impl}\quad\quad\quad\quad\quad\quad\quad\quad\quad\quad

\revise{
\noindent\textbf{Approximate LRFU policy for implementation in Tofino.} One restriction imposed by Tofino P4 architecture is that complex arithmetic operations such as \textit{floating point arithmetic and exponentiation} are not intrinsically supported. We consider the following formulation of LRFU~\cite{LRFU, qMax}. 
Each request $i$ carries a score of $ns_i = -i\cdot\ln c$, where $0.5 \le c\le 1$ is an adjustable parameter. The score update rule is formulated as $s_{i+1} \leftarrow ns_i + \ln(e^{s_i - ns_i} + 1)$, where $s_i$ and $s_{i+1}$ are the score of the cached item before and after the update. \mbox{Such a complex formula cannot be directly computed on P4 switches.}

Therefore, we propose to approximate this LRFU formula. First, we transform it into the following equivalent expression (by scaling up $\times A$ for $A \ge 1$): $s_{i+1} \leftarrow ns_i + A\ln(e^{\frac{1}{A}s_i - \frac{1}{A}ns_i} + 1), ns_i = -A\cdot i\cdot\ln c$. Such transformation allows us to round the scores to integers (that are supported on the switch). Second, to approximate exponentiation and logarithm, we note that 
$ns_i \le max(ns_i, s_i + A\ln 2) \le ns_i + A\ln(e^{\frac{1}{A}s_i -\frac{1}{A}ns_i} + 1)$, where the middle can be computed on the switch. We also note that the LRU policy, which only looks at recency, can be formulated as $s_{i+1}=ns_i$, as it discards any information about past requests. Therefore, by defining the approximate LRFU policy as $s_{i+1} \leftarrow max(-A\cdot i\cdot\ln c, s_i + A\ln 2)$, we get a policy that is somewhat closer to LRU than the standard LRFU interpolation between LRU and LFU. Nonetheless, by varying $c$, we still obtain a spectrum of policies between the two. \mbox{We show that our approximated policy achieves a comparable hit rate to LRFU.}
}


\vspace{-0mm}
\revise{
\noindent\textbf{Baseline systems.} We compare \sys against both P4LRU~\cite{P4LRU}, state-of-the-art LRU cache implementation in the programmable data plane, and PPD~\cite{ppd}'s adaptation to caching. We also implement the original CPU version of Vanilla LRFU and Vanilla LRU strategies for reference.

\noindent (1) P4LRU proposes to maintain an array-of-buckets structure, each containing cached items sorted in LRU order. Due to Tofino pipeline's restrictions~\cite{ben2018efficient}, it is challenging to directly swap the orders of the items' values when the LRU orders change. P4LRU instead proposes to maintain the mapping between keys and values in a bucket as a \textit{permutation}. However, since the number of mappings grows exponentially, \mbox{it cannot support $r > 3$ slots per bucket, limiting the maximum cache size as $3 \times 2^{16}$.}

\noindent (2) To evaluate our sampling-based quantile estimation, we adopt the quantile estimation approach in the Practical Packet Deflection (PPD)~\cite{ppd, aifo}  
to our P4-based \sys's caching design. PPD estimates quantiles entirely in the data plane by sampling every $T$ item and inserting the scores into a small sliding window of size $ns \sim 16$. We leverage \mbox{this to update the water level and evict items accordingly.}
}

\revise{

\subsubsection{Settings}\label{subsubsec:settings} \quad

\noindent\textbf{Datasets}. We conduct experiments on real-world datasets, namely CAIDA~\cite{CAIDA2018} and ARC's caching datasets~\cite{ARC} (S2 and MergeP), \mbox{and on Zipf-synthetic datasets. 
Details are expanded in Appendix~\ref{app:setup}.}

\noindent\textbf{Metrics.} We evaluate both the quantile estimation accuracy (comparing against PPD's quantile estimation) and the cache hit ratio. For the former, we define the Estimation Rank ARE as $(|\hat{r} - r|)/r$, where $\hat{r}$ is the rank of the estimated quantile and $r=q$ is the target quantile rank.

\noindent\textbf{The caching simulator.} We build a behavioral caching simulator to mimic the P4-based caching system, consisting of the P4 data plane (DP), the switch's CPU-based control plane (CP), and a backend server. We set the bandwidth of the DP as $3.2$Tbps per pipeline, CP's processing speed as $100$Kops, the delay between DP and CP as $5\mu$s, and the RTT between the DP and the backend server as $50\mu$s. These \mbox{are consistent with the real-world performance of Tofino-P4 systems~\cite{Tofino, tiara, pcie-performance, hpcc}.} 

\noindent\textbf{Experimental settings of \sys}. As shown in Table~\ref{tab:p4resource}, we let $Z=1000$, and set the length of the register array $c$ to be the power of two, which enables efficient switch implementation. The sizes of the \sys array are set as $4 \times 2^{12}, 4 \times 2^{14}, 4\times 2^{15}, 4\times2^{16}, 8\times 2^{16}$, $12\times 2^{16}$ and $12 \times 2^{17}$ (which simulates running \sys on two Tofino pipelines). 
We assign $\gamma=2$, so that each maintenance can evict enough items 
for later insertion, thereby reducing the frequency of maintenance. 
The $q$ values are then calculated as $q = \frac{r \cdot c}{1 + \gamma} = \frac{r \cdot c}{3}$. To configure LRFU, we 
measure the data distribution of a proportion of the \mbox{earliest items in the backend to tune the parameter $c$, and begin measuring the hit ratio afterward.}

\noindent\textbf{Experimental settings of baselines.} 
    For P4LRU~\cite{P4LRU}, the array sizes ranges from $3 \times \lfloor \frac{2^{14}}{3} \rfloor$, $3 \times \lfloor \frac{2^{16}}{3} \rfloor$ to $3 \times \lfloor 2^{18} \rfloor$.  For PPD~\cite{ppd}, we follow its default settings to assign the sliding window size as $ns=16$, and periodically update the water level (via extra recirculations) every $ns$ samples. We sample once every $50$ items, yielding an extra traffic overhead of $0.125\%$. } 

\begin{table}[]
    \centering
    \resizebox{.6955\columnwidth}{!}{
    \begin{tabular}{|c|c|c|c|c|c|c|} \hline
        $r\times c$ of the register arrays & $4 \times 2^{12}$ &  $4 \times 2^{14}$ & $4 \times 2^{15}$ & $4 \times 2^{16}$ & $8 \times 2^{16}$ & $12 \times 2^{16}$ \\ \hline
        Traffic overheads & $2.8\%$ &  $0.98\%$ & $0.42\%$ & $0.18\%$ & $0.076\%$ & $0.041\%$\\ \hline
    \end{tabular}
    }
    \caption{Control traffic overheads of P4-based \sys for sampling-based quantile estimation. The overheads include DP sending sampled packets to the switch's CP and CP configuring the water level back to the DP.}
    \label{tab:p4-traffic-overhead}
    \vspace{-5mm}
\end{table}

\revise{

\subsubsection{Experimental results}\label{subsubsec:p4-results} \quad


\noindent\textbf{Evaluation on the cache hit ratio.} Figure~\ref{fig:p4-lrfu_Hit_Ratio} and~\ref{fig:p4-lrfu_Hit_Ratio-other} shows the hit ratios of P4 \sys and the baseline approaches, from which we have the following observations. 
First, \sys's caching achieves a better hit ratio than P4LRU, outperforming by up to $16\%$ in both CAIDA and S2 datasets. The main reason is that the LRFU strategies that \sys supports are generally more accurate than LRU, since LRFU protects frequently visited items from being evicted by the random infrequent items. The comparison between Vanilla LRFU and LRU verifies our claim. In the Zipf dataset, the hit ratio differences are negligible, mainly because the dataset is temporally unskewed: each item is drawn independently from the Zipf distribution. Another reason is the deeper slots per bucket ($r$) supported by \sys than P4LRU, so the local LRFU item per bucket better approximates the global LRFU. Second, \sys's hit ratio is also ahead of PPD. 
Such outperformance is due to the much lower quantile estimation error of \sys than PPD, as shown in Figure~\ref{fig:p4-quantile-estimation} and analyzed next. The low quantile estimation error of \sys prevents items above the water level from being marked as evictable and vice versa. Third, Gaps between \sys + Original LRFU and \sys + P4 (approximated) LRFU are negligible. This demonstrates that our arithmetic approximation strategies for LRFU \mbox{successfully simulate the original LRFU policy, making LRFU implementable in Tofino switches.}

\noindent\textbf{Evaluation on the quantile estimation error.} Figure~\ref{fig:p4-quantile-estimation} compares the accuracy of quantile estimation between \sys and PPD's data-plane-based estimatio. We find that \sys achieves approximately one order of magnitude lower quantile estimation error than PPD. 
This is because for \sys, the control plane can accommodate more samples from the data plane for accurate estimation on top of those samples; for PPD, we need to calculate the quantile in the data plane, limiting the number of samples to be visited to $ns=16$ and thus resulting in inaccurate estimation.

\noindent\textbf{P4 resource overheads of \sys and P4LRU}. We also compare the P4 resource consumptions between \sys and P4LRU, as shown in Table~\ref{tab:p4lru-squid-resource}. We set the cache size for both to be $r \times c = 3 \times 2^{16}$, the maximum cache size supported by P4LRU~\footnote{Table~\ref{tab:p4resource} shows that we support a maximum of $12 \times 2^{16}$ entries to fit into Tofino2~\cite{Tofino}.}. 
We observe that the major benefit of \sys is the less usage of the number of stages. 
Additionally, \sys also achieves less gateway usage, which represents resources to support if-else branching in P4. \sys consumes $16.3\%$ more SRAM blocks in order to store the extra LRFU score fields for supporting LRFU. As a reward, \sys achieves a considerably higher hit ratio in datasets like CAIDA and S2.

\noindent\textbf{Control traffic overhead of P4-based \sys.} We evaluate the DP-CP control traffic overhead of our P4-based \sys. 
We define this overhead as the ratio between the traffic for processing cache queries (including both GET request packets and, in case of cache misses, response packets from backend servers) and the extra sampling packets sent to the CP for maintenance. 
As shown in Table~\ref{tab:p4-traffic-overhead}, the overall traffic overhead diminishes as we increase the sizes of the \sys arrays, since larger array size reduces the frequency of maintenance and sampling operations. When the array size reaches $4 \times 2^{16}$, 
the overhead decreases to as small as $0.18\%$. Such a low overhead greatly reduces the traffic burden from the controller and thus makes the controller more reactive to the maintenance operations. Although P4LRU and PPD execute completely in the data-plane, \mbox{we argue that the much higher hit ratio (Figure~\ref{fig:p4-lrfu_Hit_Ratio}) of \sys makes it a favorable solution.}
}

\section{Conclusion}


We introduced the \sys{} algorithm for the micro algorithmic pattern of retaining the $q$ largest items, the \sys-HH for the weighted heavy hitter problem, and the \sys-LRFU for more general score-based caching policies.
Our \sys{} algorithm is faster than standard data structures like Heap and Skip list and from the previously suggested approaches~\cite{qMax}. Such an improvement shows a potential to improve the throughput of numerous network algorithms that utilize the above-mentioned micro algorithmic pattern~\cite{UnivMon,Nitro,BarYosef,Duffield:2017:SAT:3132847.3133042, NetworkWideANCS,Duffield:2007:PSE:1314690.1314696}.  
Specifically, our evaluation demonstrates concrete improvements for Network-wide heavy hitters~\cite{NetworkWide}, Priority Sampling~\cite{PrioritySampling}, and Priority \mbox{Based Aggregation~\cite{PrioritySampling}.} 
Moreso, \sys-HH targets the weighted heavy hitters' problem and is faster and more accurate than the best alternatives~\cite{IMSUM,DIMSUM} when evaluated on real workloads. 

\looseness=-1


Our \sys-LRFU algorithm targets the broader context of score-based caching in software and in P4 programmable switches. In software, We demonstrate a throughput improvement compared to previous implementations~\cite{LRFU,qMax} with a negligible effect on the hit ratio. \revise{In P4 switches, our work implements a broad spectrum of score-based cache policies in the data plane such as LRU, LFU and any policy between. It achieves higher hit ratios than  P4LRU~\cite{P4LRU} and better quantile estimation than PPD~\cite{ppd} with acceptable switch overheads.}
For reproducibility, we \mbox{open source~\cite{SQUIDopenSource} our code.}


\newpage 
\bibliographystyle{abbrv}
\bibliography{paper}

\newpage
\appendix
\onecolumn
\section{Parameter Tuning}\label{app:param}
\paragraph{The Expected Iteration Length}\ \\
The above analysis looks for a bound on the probability of getting an iteration of at least $q\cdot \eta\cdot\gamma$ elements.
In practice, even if we get $X_{(k)} < q\cdot \eta\gamma$ it is still useful to run the iteration rather than draw a new pivot.
To that end, we can approximate $X_{(k)}$ as $q(1+\gamma)\cdot W_k$, where $W_k\sim\mathit{Beta}(k,Z-k+1)$ is distributed like the $k^{th}$ order statistic of $Z$ i.i.d. uniform, continuous, $U[0,1]$ variables.
The conditioned expectation of $X_{(k)}$ can then be expressed as
\begin{align}
\mathbb E[X_{(k)} | X_{(k)} \le q\gamma]\ge q(1+\gamma)\mathbb E[W_k | W_k \le \gamma/(1+\gamma)]-1.\label{eq:expectedLength}
\end{align}

Using the probability distribution function of $\mathit{Beta}(k,Z-k+1)$, we can then write
$$
\mathbb E[W_k | W_k \le \gamma/(1+\gamma)] = \frac{\int_0^{\gamma/(1+\gamma)}x\frac{x^{k-1}(1-x)^{Z-k}}{B(k,Z-k+1)}dx}{\Pr\brackets{W_k\le \gamma/(1+\gamma)}}
= \frac{B(k+1,Z-k+1)\cdot I_{\gamma/(1+\gamma)}(k+1,Z-k+1)}{B(k,Z-k+1)\cdot I_{\gamma/(1+\gamma)}(k,Z-k+1)},
$$

where $B(\cdot)$ is the Beta function and $I$ is the Regularized Incomplete Beta function.

While the expression has no closed-form formula, we can evaluate it numerically. 
For example, the parameters $\gamma=1$, $\alpha=0.8$, $k=304$, and $Z=760$ (calculated for $\delta=0.1\%$) yield $\mathbb E[W_k | W_k \le \gamma/(1+\gamma)]\approx 0.399995$. We note that $\mathbb E[W_k]=k/Z=0.4$; the correct conditioned expectation, $\mathbb E[W_k | W_k \le \gamma/(1+\gamma)]$, is slightly lower since we demand that the pivot is always among the $q\gamma$ smallest elements.
Plugging this back to~\eqref{eq:expectedLength}, we get that the expectation is nearly $0.8\gamma\cdot q$.

\ifarXiv
\begin{table}[]
\resizebox{0.655\columnwidth}{!}{%
\hspace*{-4mm}\begin{tabular}{|l||l|}
\hline
\textbf{Symbol}   & \textbf{Meaning}                                                                          \\ \hline\hline
\textbf{$q$}      & The number of largest elements to track.                                                  \\ \hline
\textbf{$\gamma$} & The amount of space is $q(1+\gamma)$.                                                     \\ \hline
\textbf{$Z$}      & The number of samples from the array.                                                     \\ \hline
\textbf{$k$}      & We pick the $k$'th smallest sample as pivot.                                              \\ \hline
\textbf{$\alpha$} & We choose $k$ so the pivot in expectation is smaller than $q\gamma\alpha$ values.         \\ \hline
\textbf{$\eta$}   & A function of $\alpha$. A pivot is successful if it is larger than $q\gamma\eta$ values. \\ \hline
\textbf{$\delta$} & A bound on the failure probability.                                                       \\ \hline
\textbf{$\epsilon$} & A bound on heavy hitters error.                                                      \\ \hline
\textbf{$c$}      & \sys-HH is using $c/\epsilon$ counters.                                    \\ \hline
\textbf{$w$}    & The width the \sys-HH Cuckoo hash table.       \\ \hline
\textbf{$d$}    & The depth of the \sys-HH Cuckoo hash table ($c=w\cdot d/\epsilon$).       \\ \hline
\end{tabular}
}
\vspace{1mm}
\caption{\ran{Move to an appendix}The notations used in the paper.}\vspace*{-10.5mm}\label{tbl:notations}
\end{table}
\fi

\subsection{Optimizing the $\alpha$ Parameter}
%
To understand our algorithm, we consider the budget of $Z$ samples given and tune our algorithm to optimize the expected number of elements cleared in an iteration.
For that, we get a lower bound on the expected number of elements removed by our iteration of 
\begin{multline*}
\mathbb E[X_{(k)}\mid X_{(k)}\le q\gamma]\cdot \Pr[X_{(k)}\le q\gamma]
\ge 
\parentheses{q(1+\gamma)\mathbb E[W_k | W_k \le \gamma/(1+\gamma)]-1}\cdot\parentheses{1- e^{-k\cdot \frac{(\gamma -\alpha\gamma)^2}{\alpha\gamma\cdot(2+\gamma -\alpha\gamma)}}}\\
=
\parentheses{q(1+\gamma)\frac{B(k+1,Z-k+1)\cdot I_{\gamma/(1+\gamma)}(k+1,Z-k+1)}{B(k,Z-k+1)\cdot I_{\gamma/(1+\gamma)}(k,Z-k+1)}-1}\cdot\parentheses{1- e^{-k\cdot \frac{(\gamma -\alpha\gamma)^2}{\alpha\gamma\cdot(2+\gamma -\alpha\gamma)}}}.
\end{multline*}
Our goal is to maximize this expectation (by setting the right $\alpha$) while fixing the value of $Z$, therefore we express the above as a function of $\alpha$ by setting $k=\alpha\gamma\frac{Z}{(1+\gamma)}$:
\begin{multline*}
F_Z(\alpha)\triangleq
 \parentheses{1- e^{-\alpha\gamma\frac{Z}{(1+\gamma)}\cdot \frac{(\gamma -\alpha\gamma)^2}{\alpha\gamma\cdot(2+\gamma -\alpha\gamma)}}} \cdot\\
\parentheses{q(1+\gamma)\frac{B(\alpha\gamma\frac{Z}{(1+\gamma)}+1,Z-\alpha\gamma\frac{Z}{(1+\gamma)}+1)\cdot I_{\gamma/(1+\gamma)}(\alpha\gamma\frac{Z}{(1+\gamma)}+1,Z-\alpha\gamma\frac{Z}{(1+\gamma)}+1)}{B(\alpha\gamma\frac{Z}{(1+\gamma)},Z-\alpha\gamma\frac{Z}{(1+\gamma)}+1)\cdot I_{\gamma/(1+\gamma)}(\alpha\gamma\frac{Z}{(1+\gamma)},Z-\alpha\gamma\frac{Z}{(1+\gamma)}+1)}-1}
.
\end{multline*}

\looseness=-1
While the above function is unlikely to admit analytical optimization, we can search for the right 
$\alpha$ value using numeric means.

When $Z\gg \frac{1+\gamma}{\gamma^2(1-\alpha)^{2}}$ (and thus $k\gg \frac{\alpha}{\gamma(1-\alpha)^{2}}$), we can approximate $\mathbb E[W_k | W_k \le \gamma/(1+\gamma)]$ as $\mathbb E[W_k]=k/Z$. 
The reason is that the probability that $X_{(k)}$ is larger than $q\gamma$ is small (as evident by~\eqref{eq:failureProb}) and only slightly affects the conditioned expectation $\mathbb E[W_k | W_k \le \gamma/(1+\gamma)]$.
In this case, we can get a simpler optimization function:
{\scriptsize
\begin{multline*}
F_Z(\alpha)\approx
 \parentheses{1- e^{-\alpha\gamma\frac{Z}{(1+\gamma)}\cdot \frac{(\gamma -\alpha\gamma)^2}{\alpha\gamma\cdot(2+\gamma -\alpha\gamma)}}}
\parentheses{q(1+\gamma)\cdot k/Z}
=
 \parentheses{1- e^{-\frac{Z}{(1+\gamma)}\cdot \frac{(\gamma -\alpha\gamma)^2}{(2+\gamma -\alpha\gamma)}}}
\parentheses{q\alpha\gamma}
\ge 
\parentheses{1- e^{-\frac{Z\gamma^2}{(1+\gamma)(2+\gamma)}\cdot {(1 -\alpha)^2}}}
\parentheses{q\alpha\gamma}
.
\end{multline*}
}

The resulting function is concave in the range $\alpha\in(0,1)$ and is easy to optimize numerically.
For example, when $\gamma=1$, using $Z=760$ samples it is maximized at $\alpha\approx 0.83$, 
giving a lower bound of $0.808q\gamma$ elements that are removed on average, \mbox{when factoring in the probability of a bad pivot choice. }
\section{De-amortizing \sys}\label{app:deamortization}
We explained \sys{} as having separate update and maintenance procedures. 
Previous works, such as~\cite{DIMSUM,DIMSUM++,qMax} suggest deamortization approaches that allow the algorithm to perform $O(1)$ time maintenance operations per update while guaranteeing that there will always be some free space.
However, these deamortization procedures assume that the algorithm is deterministic. 
That is, if we are guaranteed that the maintenance terminates within $c\cdot q$ operations for some constant $c$, and each iteration has $q\gamma$ insertions (as in~\cite{qMax}), it is enough to perform $\frac{c\cdot q}{q\gamma} = c/\gamma=O(1/\gamma)$ maintenance operations per packet.
Deamortizing \sys is slightly more challenging since the above Las Vegas algorithm has no bound on the number of maintenance operations, and our iteration length is a random variable itself.
We resolve this issue by running an exact quantile computation if the sampling algorithm fails more than some constant number of times (e.g., two). This way, we can still compute an absolute constant $c'$ such that we make at most $c'\cdot q$ operations and free at least $q\gamma(2\alpha-1)$ elements. Therefore, we can deamortize \sys by making $\frac{c'\cdot q}{q\gamma(2\alpha-1)}=O(1/\gamma)$ \mbox{operations per update, which is constant for any fixed $\gamma$.}

   \begin{figure}[t]
        \subfloat[$q=10^{4}$]{
        \includegraphics[width =0.24\linewidth]{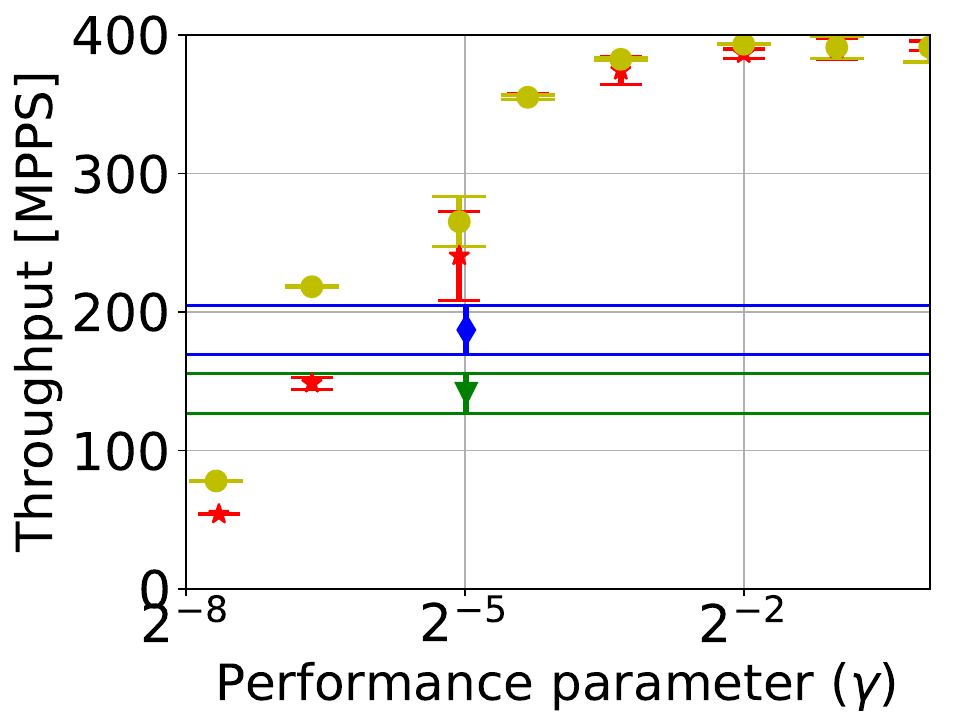}}
        \subfloat[$q=10^{5}$]
        {\includegraphics[width =0.24\linewidth]
        {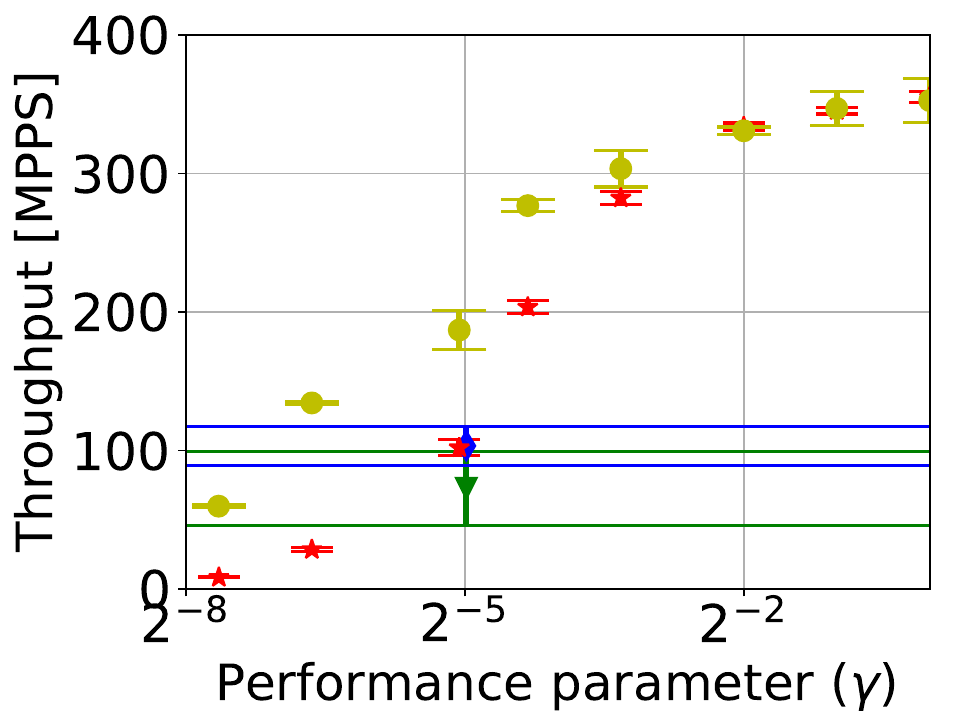}}
        \centering
        \subfloat[$q=10^{6}$]
        {\includegraphics[width =0.24\linewidth]{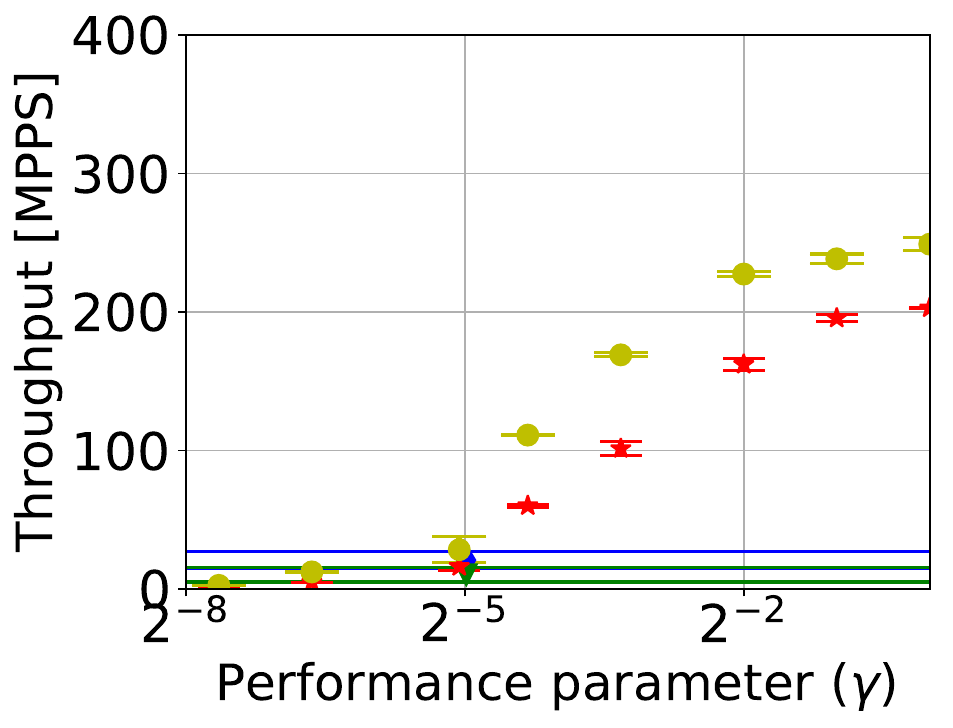}}
        \subfloat[$q=10^{7}$]
        {\includegraphics[width =0.24\linewidth]
        {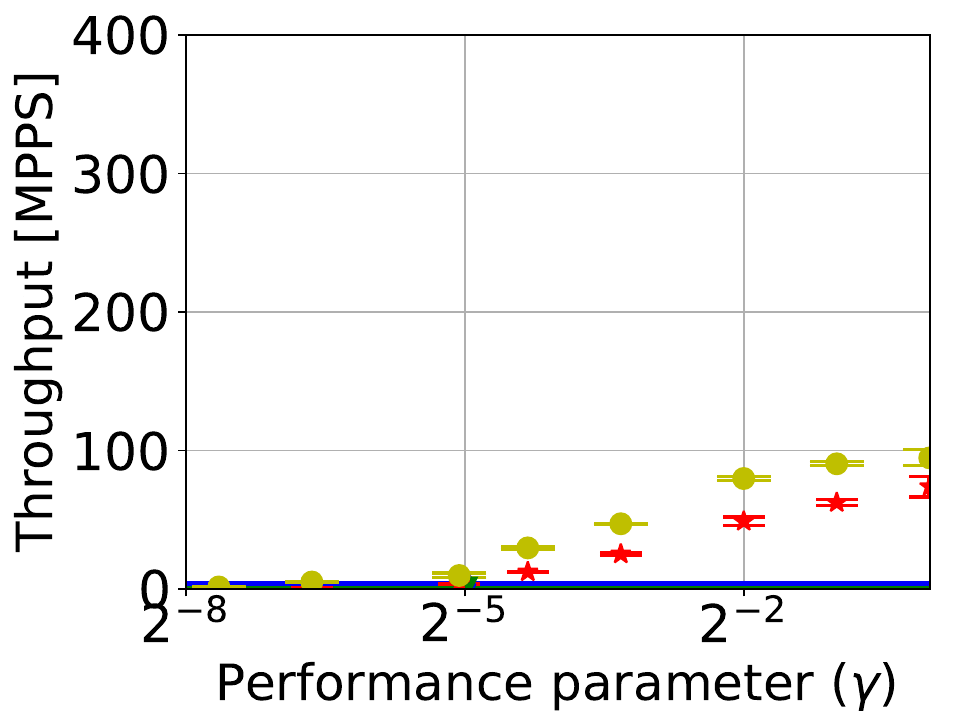}}\\
        \centering
        \vspace*{0.5mm}
        {\includegraphics[width =0.4\columnwidth]
        {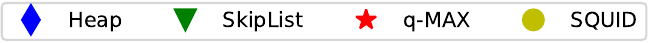}}\vspace{-1mm}
        \caption{Throughput of \sys and \qMAX as a function of $\gamma$ on 150M random numbers. \ran{Move to an appendix and reference it when mentioning table 3.}}\label{fig:eval}\vspace{-3mm}
      \end{figure}

\section{Experimental Setup}

This section positions \sys{}, against the leading alternatives such as \qMAX~\cite{qMax}, as well as to standard data structures such as heaps and skip lists. Next, we compare \sys{}-HH against SMED~\cite{IMSUM}. Finally, we demonstrate that our approach has applications on a broader scope than network monitoring by implementing the LRFU cache policy using \sys{} and comparing our implementation to the existing alternatives. We used the original C++ code released by the authors of competing algorithms~\cite{IMSUM,qMax}, with the recommended parameters, and we implemented our algorithms with C++ as well for a fair comparison. We also implemented the P4-based \sys and \mbox{simulated it in Python for evaluation.}

\noindent\textbf{Datasets:}\label{app:setup}
We used the following datasets:
\begin{enumerate}
\item The CAIDA Anonymized Internet Trace~\cite{CAIDA} (Caida16), from the Equinix-Chicago monitor with 152M packets.
\item The CAIDA Anonymized Internet Trace ~\cite{CAIDA} from New York City (Caida18) with 175M packets.
\item Data center network trace~\cite{UNIV} (Univ1) with 17M packets.
\item Windows server memory accesses trace denoted P1~\cite{ARC}. It has 1.8M accesses and 540K addresses.
\item Memory accesses of a finance application denoted F1~\cite{Umass}. It has 5.4M accesses with 1.4M distinct files.
\item Queries from a popular web search engine denoted WS1~\cite{Umass}.
It has 4.58M queries with 1.69M distinct queries.
\item Accesses taken from the Gradle build cache~\cite{CaffeineProject}.  It has 2.1M accesses with 648K distinct files.
\item  E-commerce memory accesses by Scarab Research~\cite{cache1} \mbox{with 1.94M accesses to 912K addresses.}
\item Memory accesses of an OLTP server of a  financial institution taken from~\cite{ARC} denoted OLTP. It has 4.2M accesses with 1.34M distinct addresses.
\item \revise{Memory accesses of caching applications denoted as MergeP and S2~\cite{ARC}, with $\sim 20$M accesses.}
\item The generated Zipf datasets choosing the Zipf parameter as $0.9$, $0.95$ and $0.99$ (as common in switching caching works~\cite{switchkv, netcache, distcache}). Each dataset has $50$M items with $1.6$M distinct items. These datasets are mainly used for the evaluation of our P4 \sys.
\end{enumerate}

For evaluating \sys{}, and \sys-HH, we used the decimal representation of the IP source address of TCP and UDP packets as the key and the total length field in the IP header as the value. Thus, the evaluation considers the first 5 minutes of CAIDA’16 and CAIDA’18 traces and all the UNIV1 trace.
We ran each data point ten times and report the mean and 99\% confidence intervals.
We run our evaluation on an Intel 9750H CPU running 64-bit Ubuntu 18.04.4, 16GB RAM, 32KB L1 cache, 256KB L2 cache, and 12MB L3 cache. \mbox{Our code is written in C++ and P4 and is available at~\cite{SQUIDopenSource}.}

\ifarXiv
\section{Throughput for different $q$ values}\label{app:performance_vs_q}
              \begin{figure}[h]
              \vspace{-3mm}
        {\includegraphics[width =0.33\columnwidth]{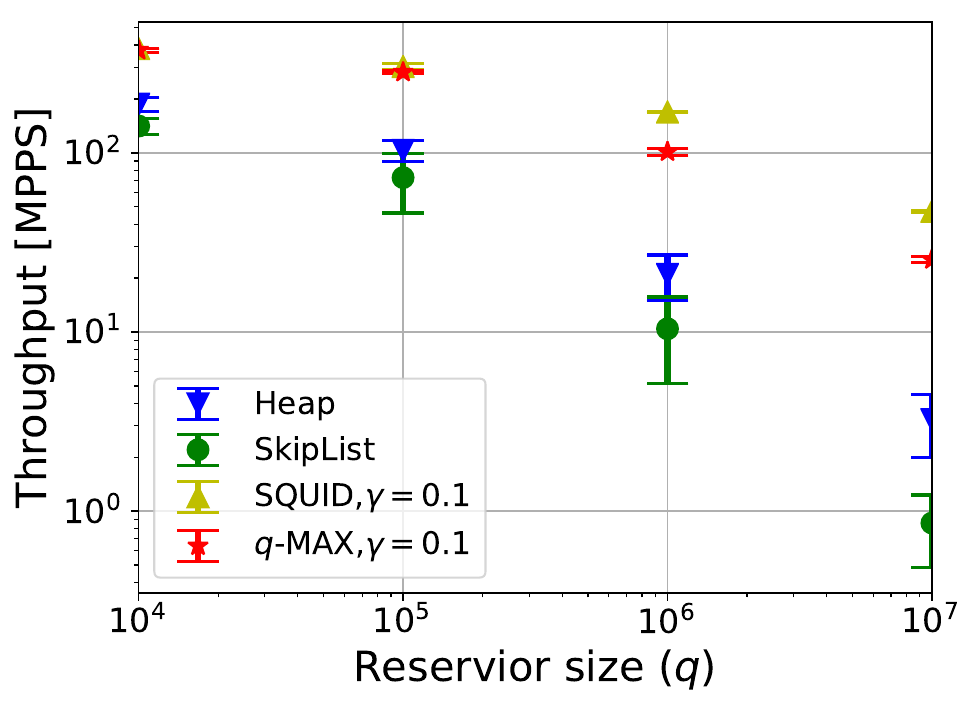}}
        \vspace{-4mm}
        \caption{Throughput of \sys and \qMAX as function of $q$ on a stream of 150M random numbers.\label{fig:reservior}}
        \vspace*{-3mm}
      \end{figure}  
Figure~\ref{fig:reservior} compares \sys to \qMAX for varying $q$ values. As can be observed, \sys{} is at least as fast as \qMAX for all values of $q$ with $\gamma = 0.1$.  \mbox{Moreover, when $q$ is large, \sys is up to $41\%$ faster.   }
\fi



%


        \begin{figure}[]
        \subfloat[P1]
        {\includegraphics[width =0.3\columnwidth]
        {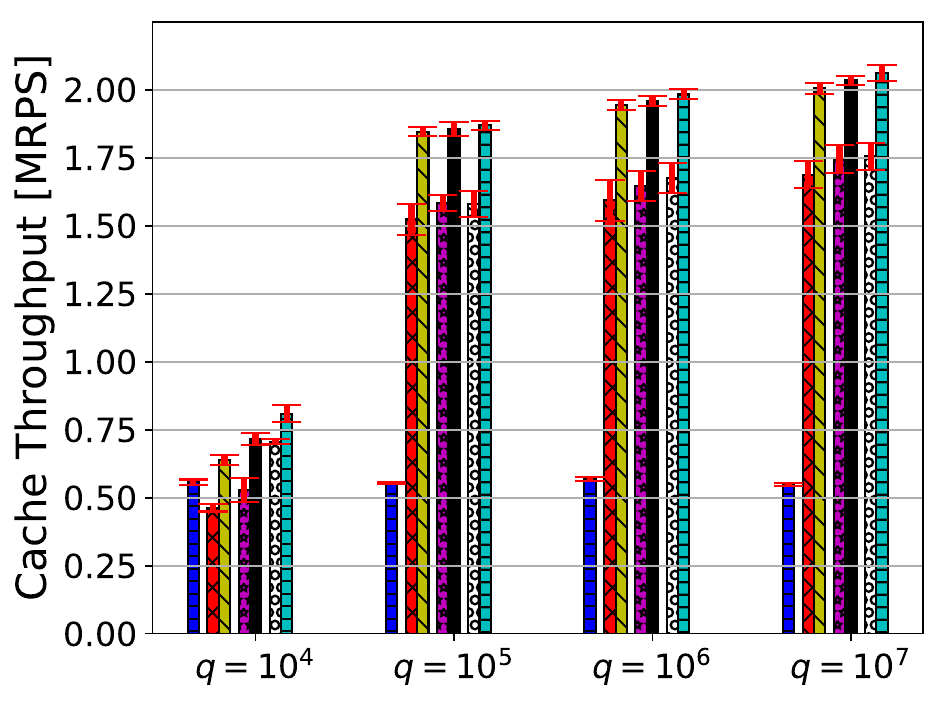}}
        \subfloat[F1]
        {\includegraphics[width =0.3\columnwidth]
        {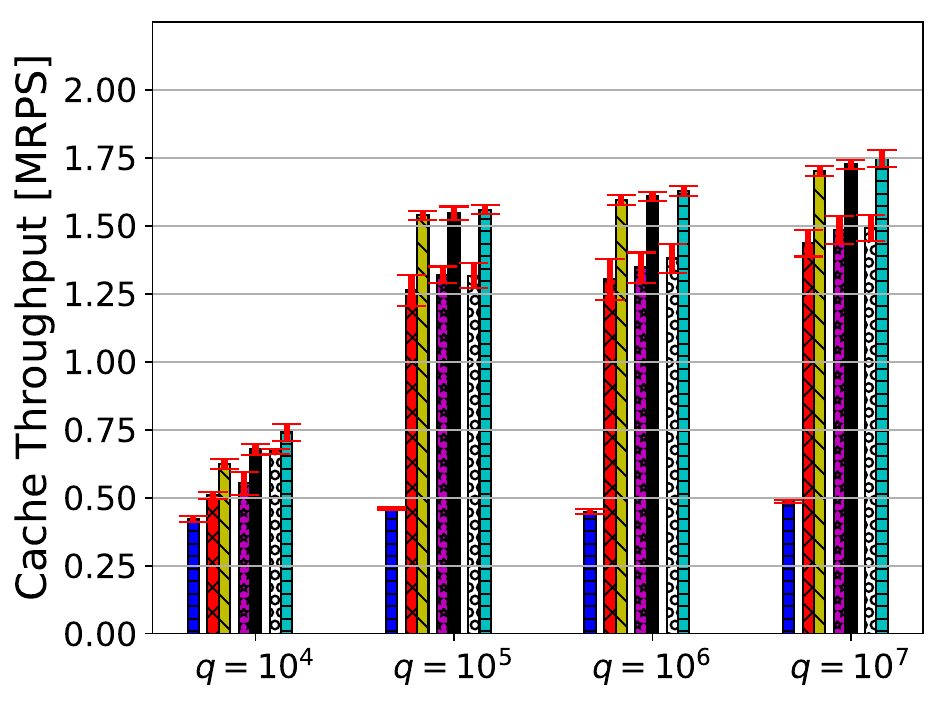}}
        \subfloat[WS1]
        {\includegraphics[width =0.3\columnwidth]
        {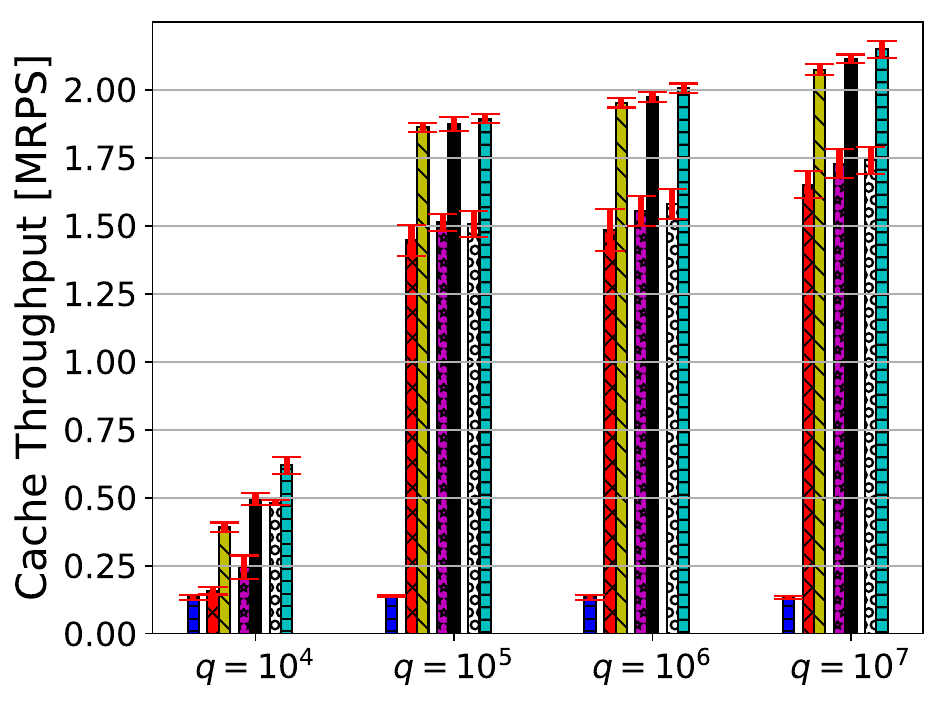}}\\ \vspace*{-3mm}
        \centering
        \subfloat[P1]
        {\includegraphics[width =0.3\columnwidth]
        {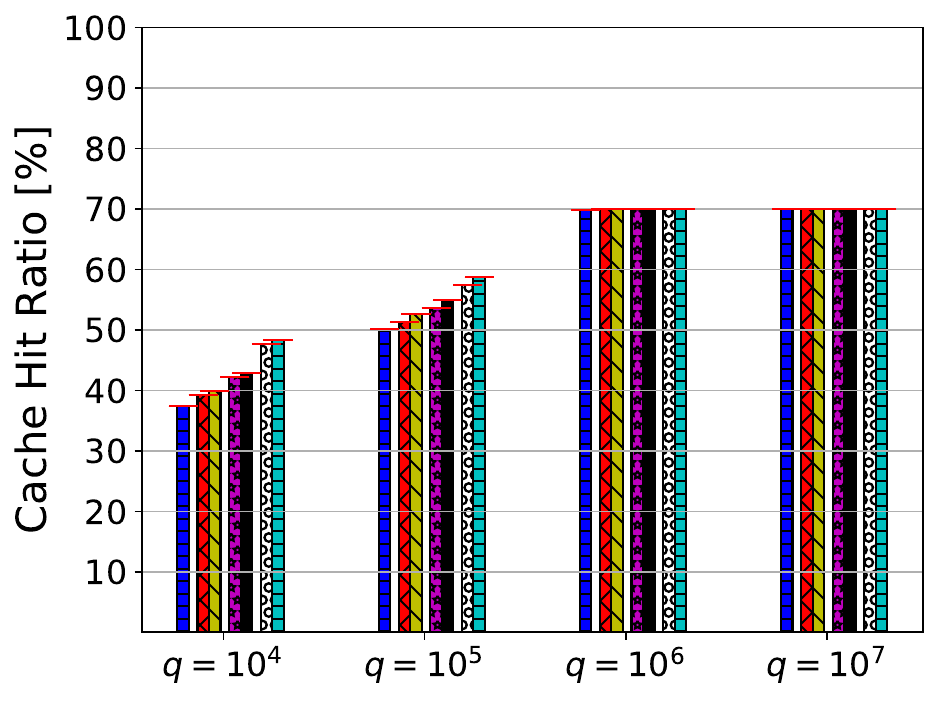}}
        \subfloat[F1]
        {\includegraphics[width =0.3\columnwidth]
        {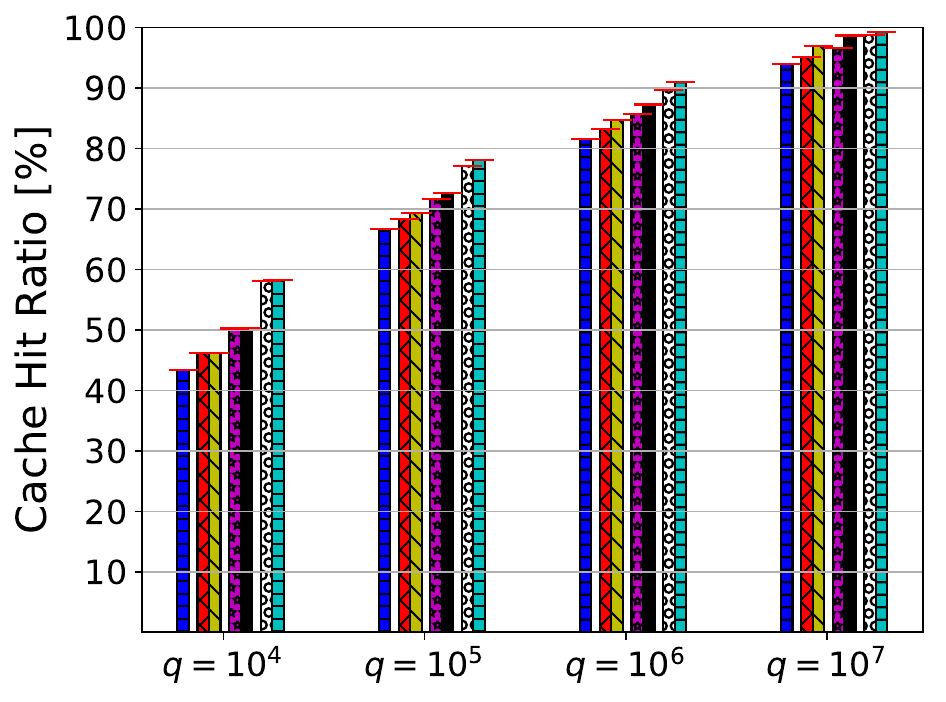}}
        \subfloat[WS1]
        {\includegraphics[width =0.3\columnwidth]
        {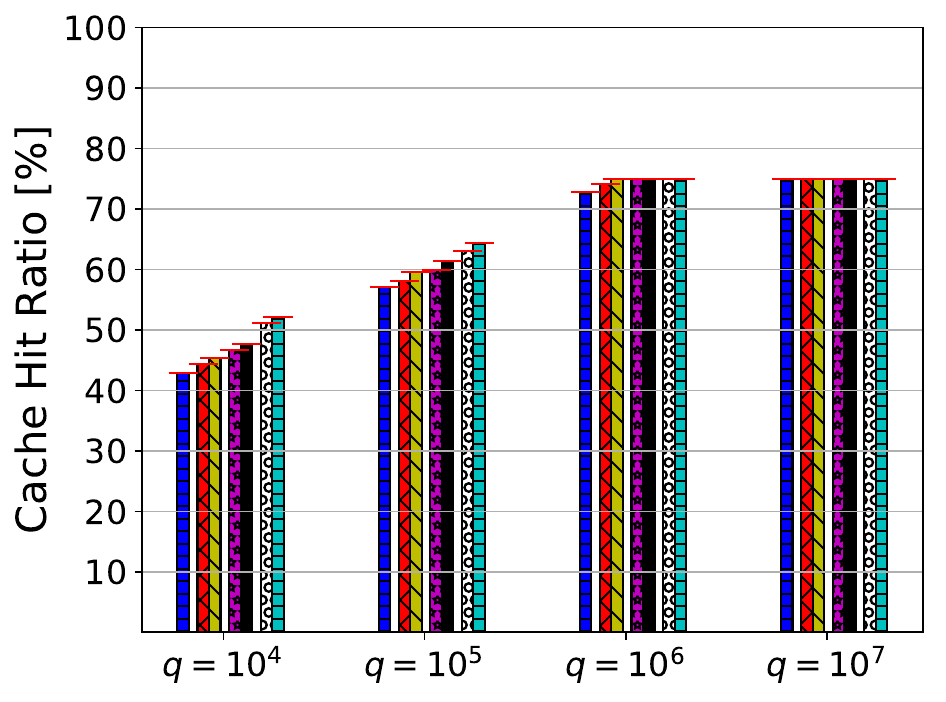}}\\ \vspace*{-1mm}
        \centering
        {\includegraphics[width =1.0010\columnwidth]
        {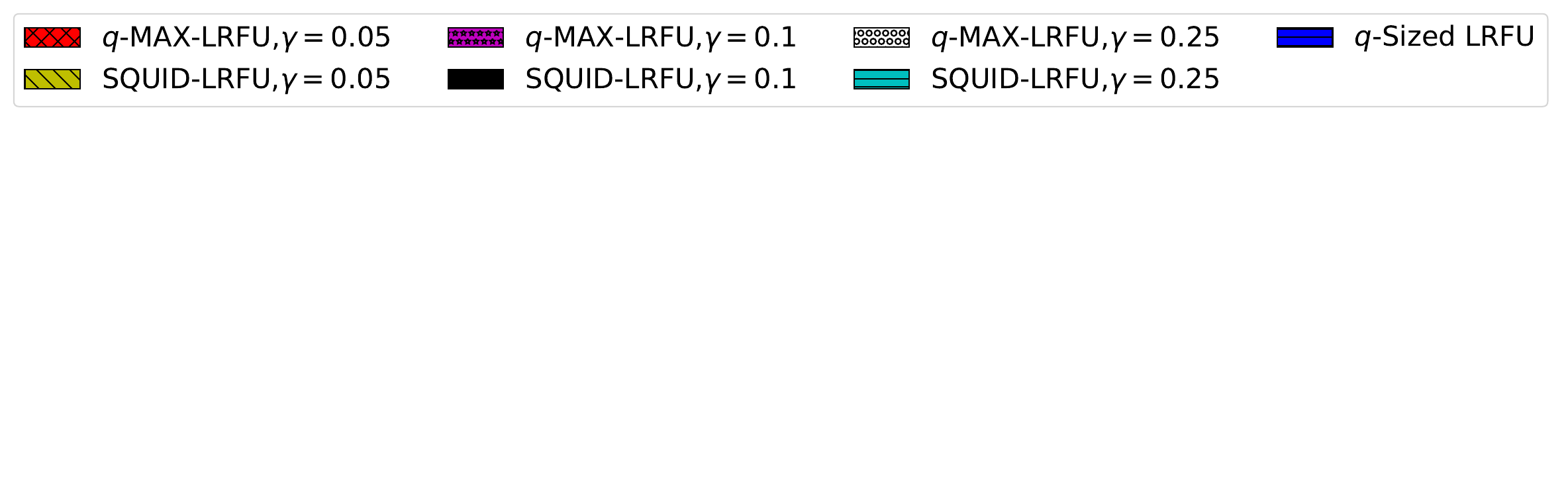}} \vspace{-4cm}
        \caption{Throughput (Million Requests Per Second) and Hit ratio of LRFU cache ($c = 0.75$) implemented using \qMAX, \sys and Heap implementation of LRFU (q-sized LRFU) on the caching datasets. \label{fig:lrfu_Hit_Ratio}}
        \vspace{-4mm}
      \end{figure}

\Wenchen{Figure out what to put in the "Application section.}


\begin{figure*}[]
    \vspace{-2mm}
    \subfloat[NWHH $q{=}10^{6}$]
    {\includegraphics[width =0.31\columnwidth]{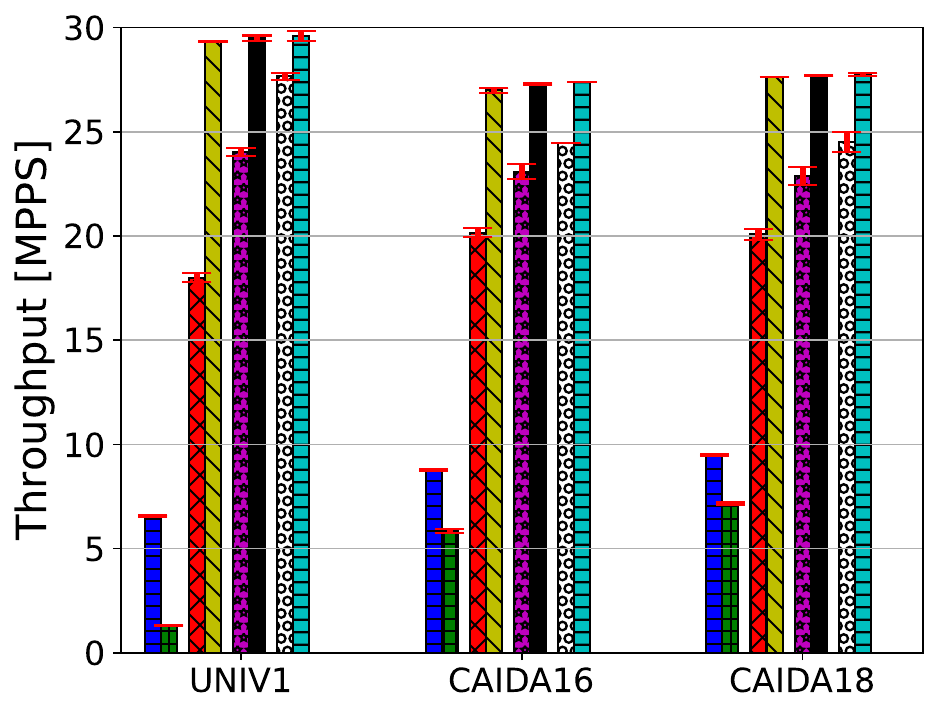}}
    \subfloat[PS $q=10^{6}$]
    {\includegraphics[width =0.31\columnwidth]
    {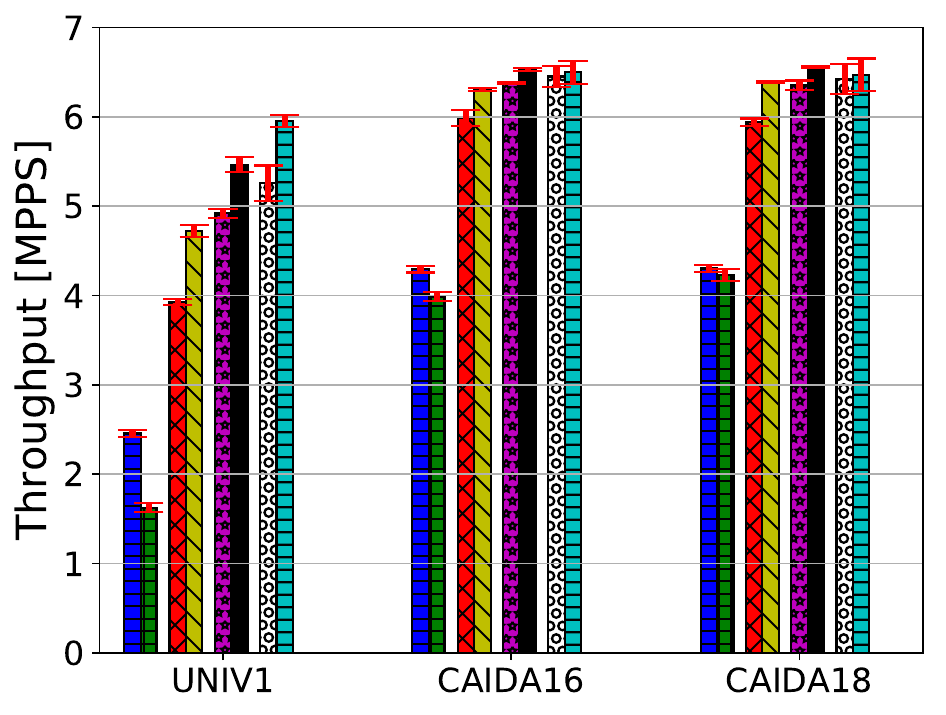}}
    \subfloat[PBA $q=10^{6}$]
    {\includegraphics[width =0.31\columnwidth]
    {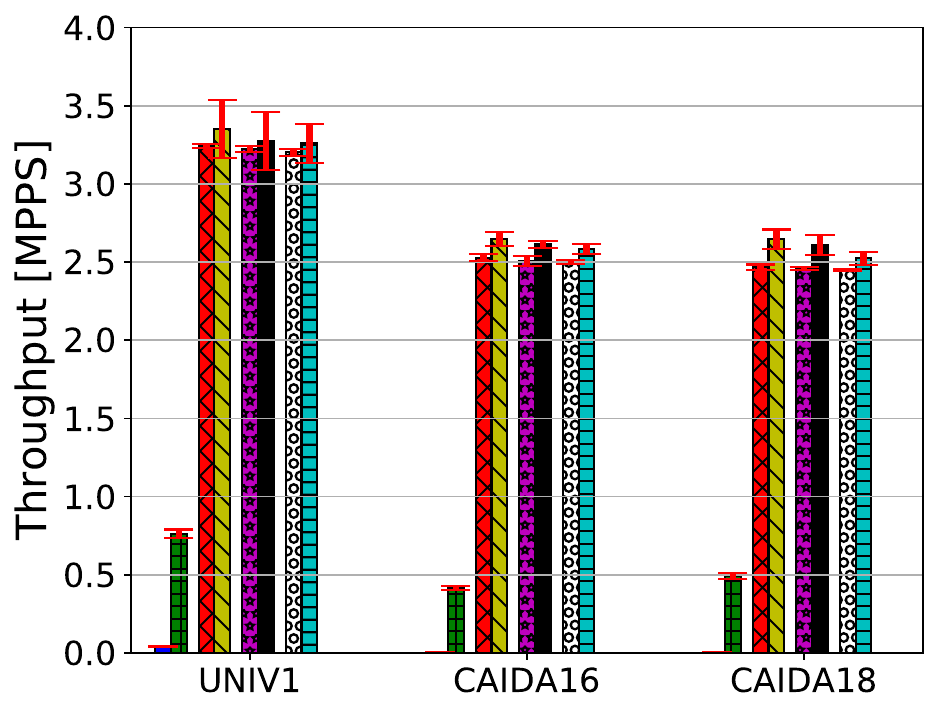}}
    
    \centering
    {\includegraphics[width =.6\columnwidth]
    {newGraphs/legend.pdf}}
    \vspace{-30mm}
    \caption{Throughput of various applications when implemented using \qMAX and \sys.}   \label{fig:qmax_application_apndx}
    \vspace{-5mm}
  \end{figure*}  
           
\vspace{-0.2cm}
\section{CPU LRFU Evaluation}\label{app:LRFU}
In this section, we implement an LRFU~\cite{LRFU} based on \sys{}-HH. Namely, we use the Cuckoo hash table to store the cached items and implicitly delete items whose scores are below the water level. This way, a logically deleted item can still yield a cache hit until it is overwritten by a different entry.
LRFU assigns a score that combines recency and frequency to each cached entry and retains the highest-ranked entries at all times. It is known to achieve high hit ratios but requires a logarithmic runtime~\cite{ARC}. 
The logarithmic runtime is explained by the existing implementations keeping cached items ordered according to the score. Here, \sys{}-HH is used to maintain the highest score items, whereas we use the water level technique to remove low-score items lazily. Thus, our algorithm works at a constant complexity. 

We used \sys-HH and \qMAX{} to implement LRFU, as well as standard heap and Skiplist-based implementations. Figure~\ref{fig:lrfu_Hit_Ratio}(a)-(c) shows that implementing LRFU with \sys-HH has the throughput across many cache sizes and traces. Here, the improvement is due to the water level technique of \sys{}-HH that reduces the maintenance overheads over existing implementations~\cite{qMax}. 

To complete the picture, Figure~\ref{fig:lrfu_Hit_Ratio}(d)-(f) shows that the hit ratio of all LRFU implementations is similar to that of LRFU in all datasets and cache sizes. The differences in hit-ratio vary with $\gamma$ as our cache contains between $q$ and $(1+\gamma)\cdot q$ items. The keen observer can also notice that \sys{}-HH offers a slight benefit to hit-ratio as it retains slightly more items in the cache due to only removing items upon an insert, whereas \qMAX{} removes items in batches.

\begin{figure*}[]
    \subfloat[MergeP (Caching)]
    {\includegraphics[width =0.32\columnwidth]
    {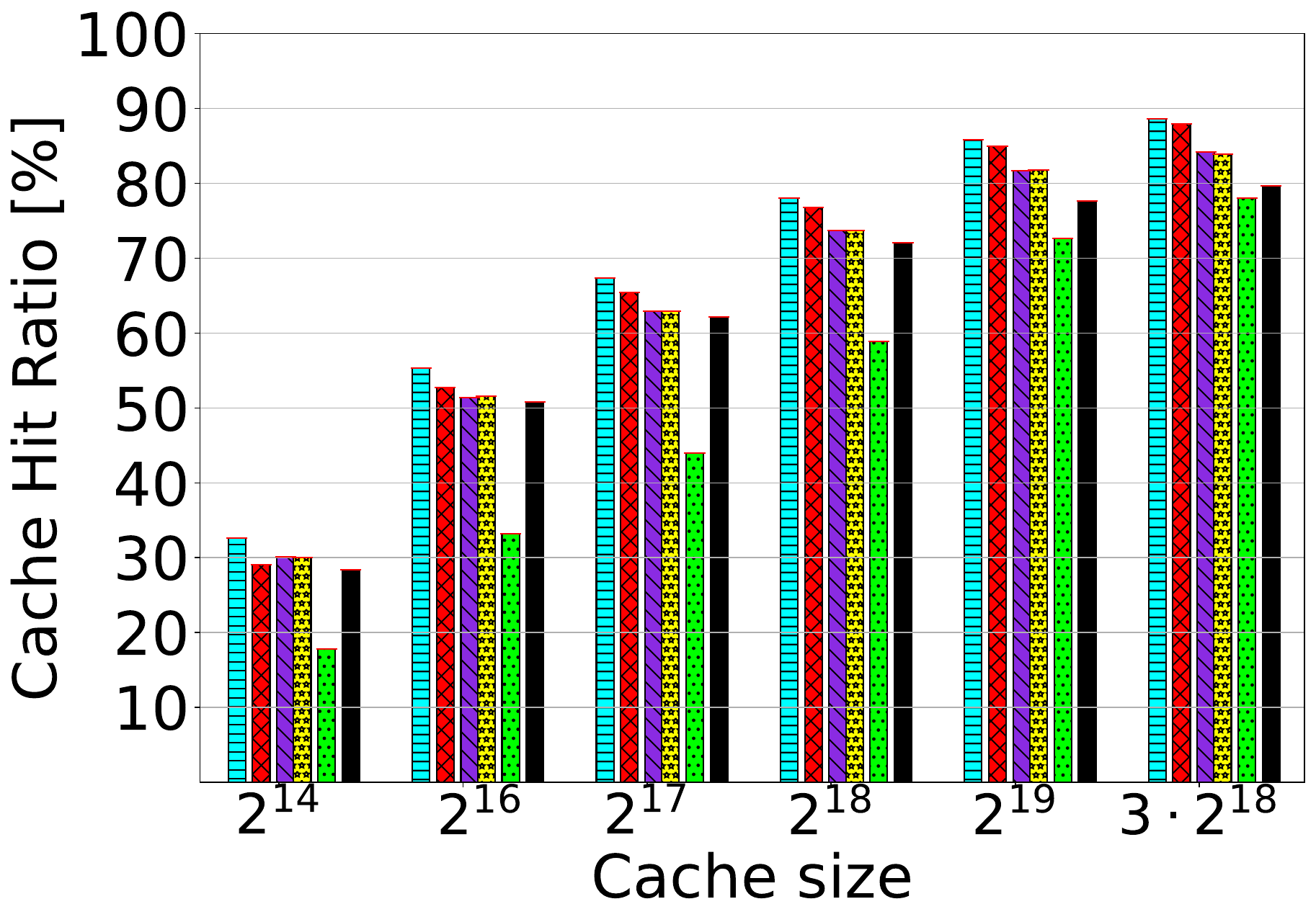}} 
    \subfloat[Zipf $0.9$]
    {\includegraphics[width =0.32\columnwidth]
    {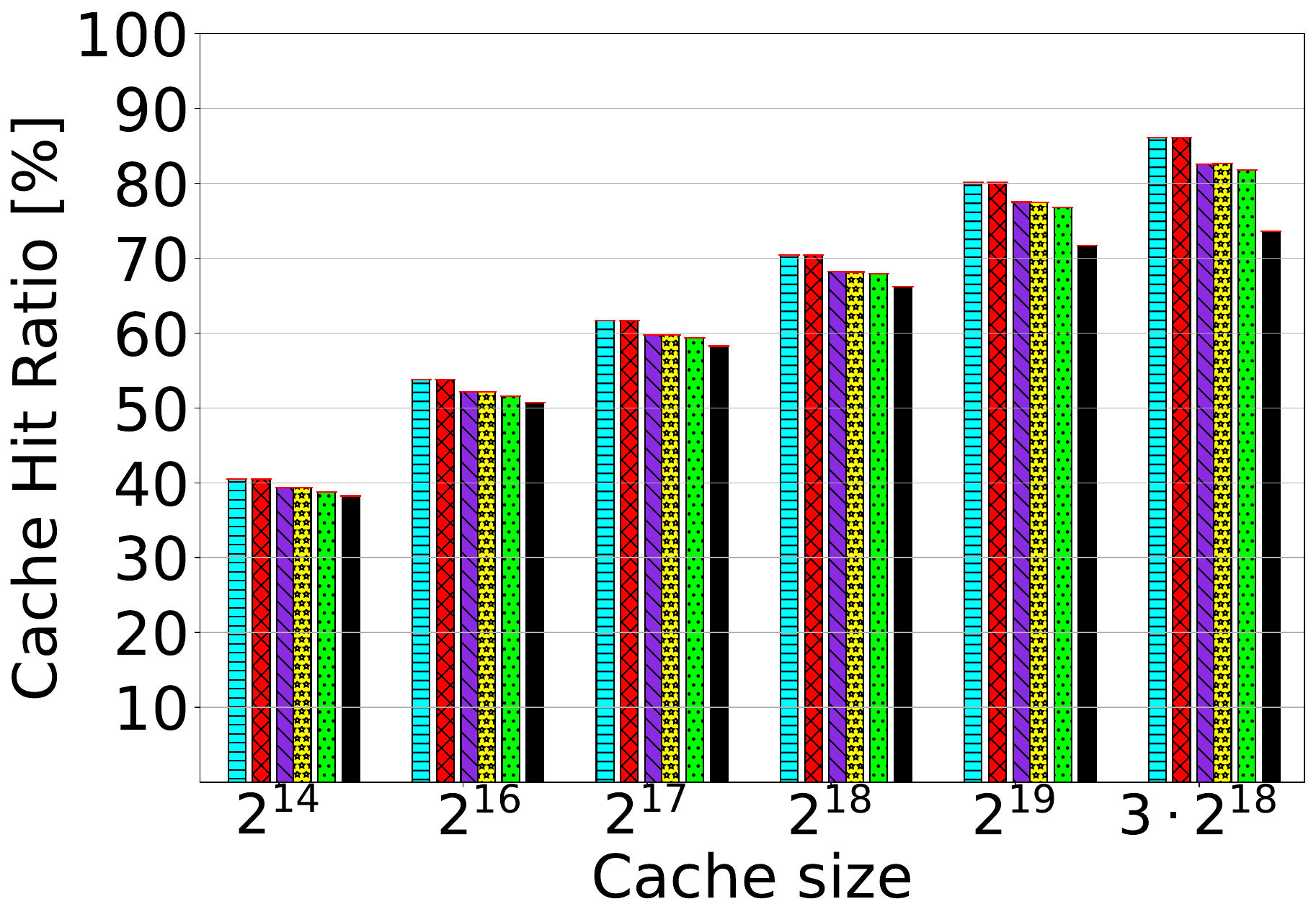}} 
    \subfloat[Zipf $0.99$]
    {\includegraphics[width =0.32\columnwidth]
    {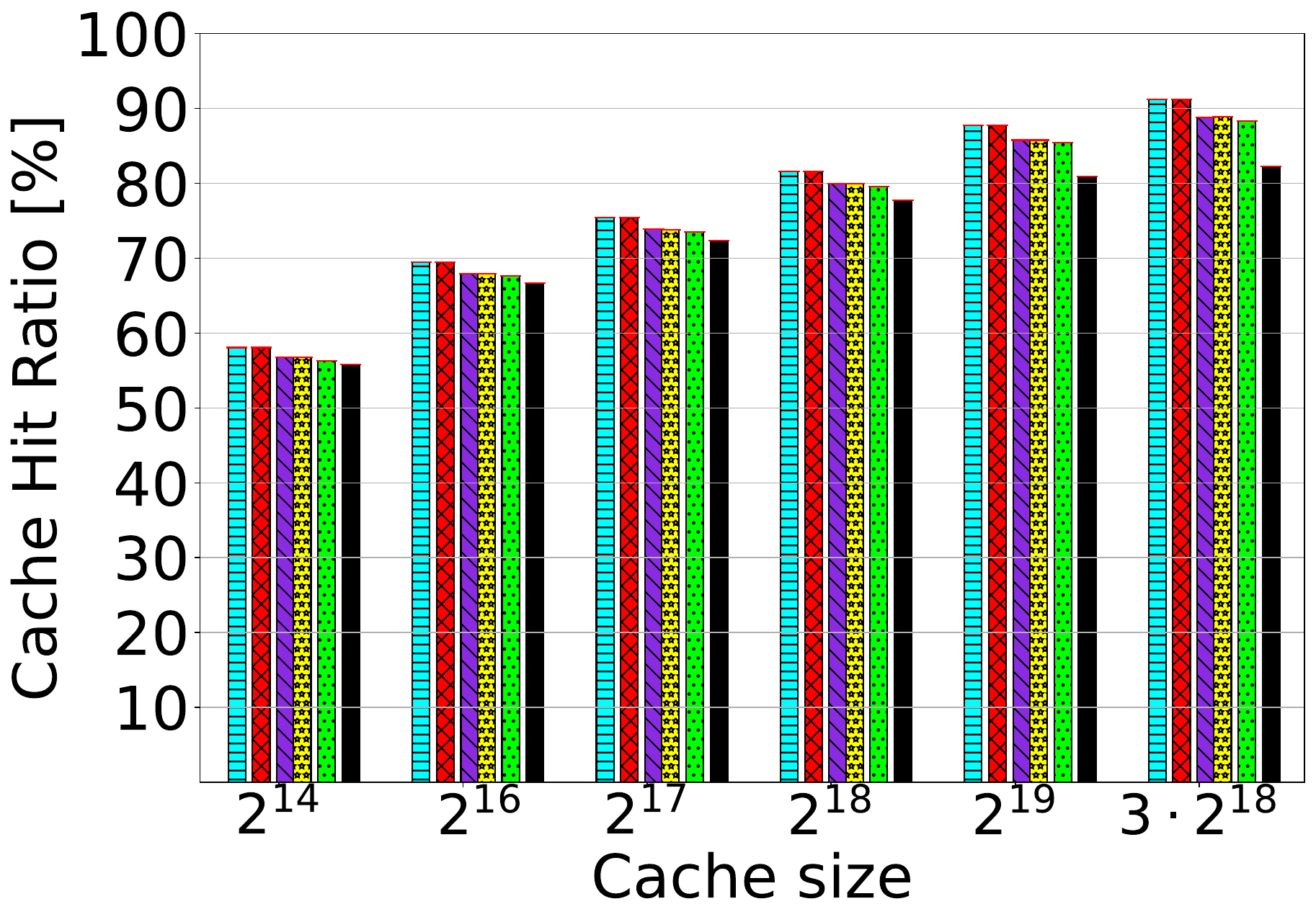}} \\
    \centering
    {\includegraphics[width =.6632\columnwidth]
    {newGraphs/legend_cache.pdf}}\vspace{-0.2cm}
    \caption{\revise{Hit ratios of our P4-based \sys on Zipf 0.9, Zipf 0.99 and MergeP datasets.\Wenchen{To move to the "full version".}\vspace{-0.1cm}}}
    \label{fig:p4-lrfu_Hit_Ratio-other}
\end{figure*}

\begin{figure*}[]
    \centering
    {\includegraphics[width =.1632\columnwidth]
    {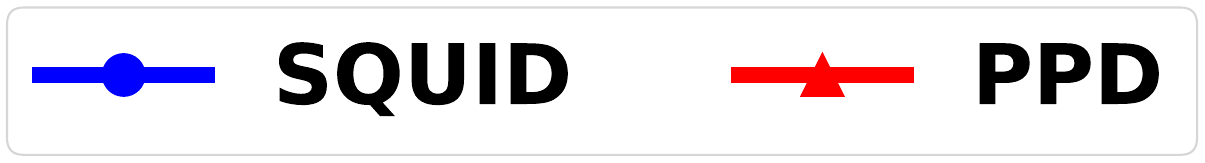}}\vspace{-0.3cm} \\
    \subfloat[CAIDA 2018]
    {\includegraphics[width =0.3\columnwidth]
    {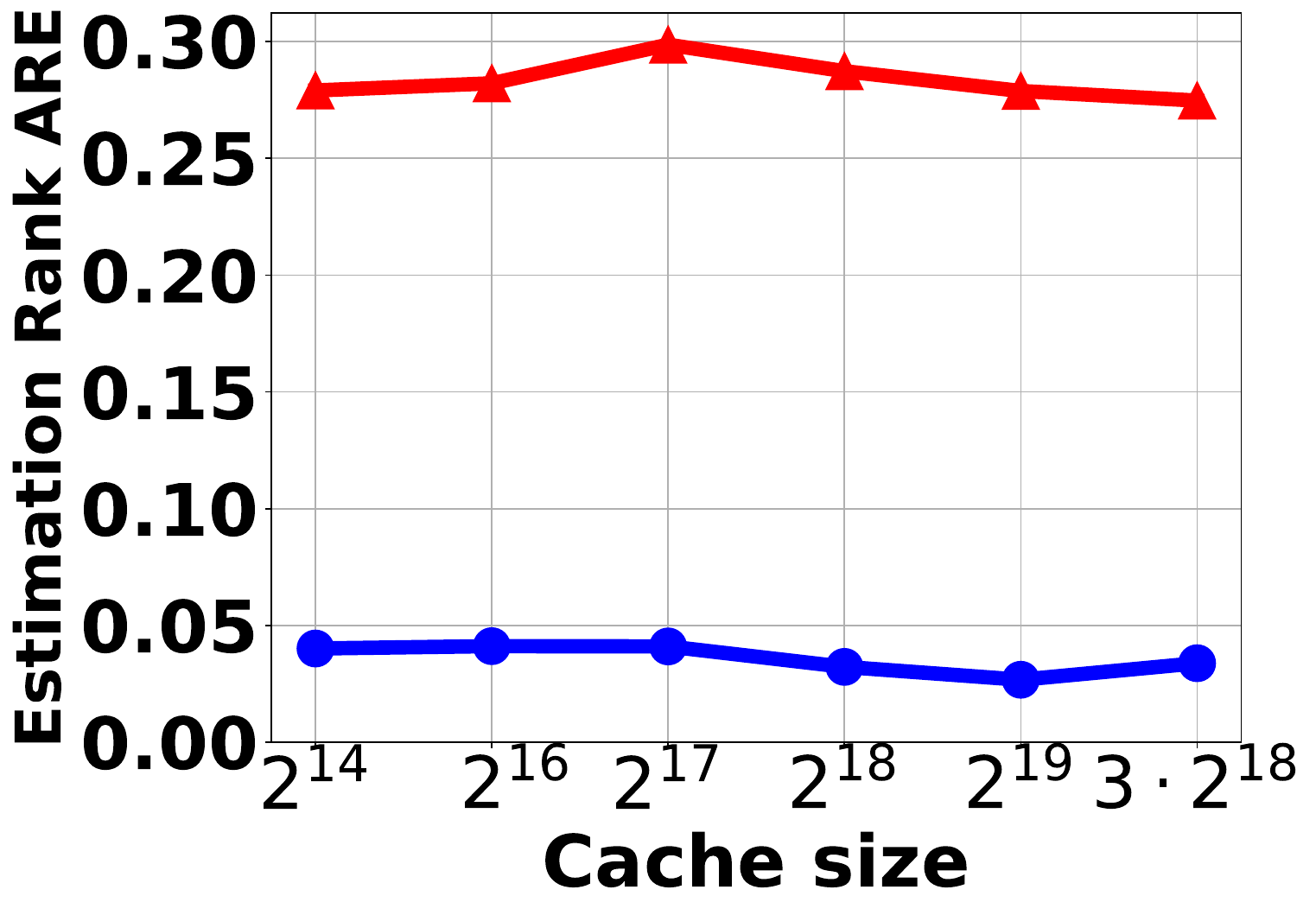}}
    \subfloat[S2 (Caching)]
    {\includegraphics[width =0.3\columnwidth]
    {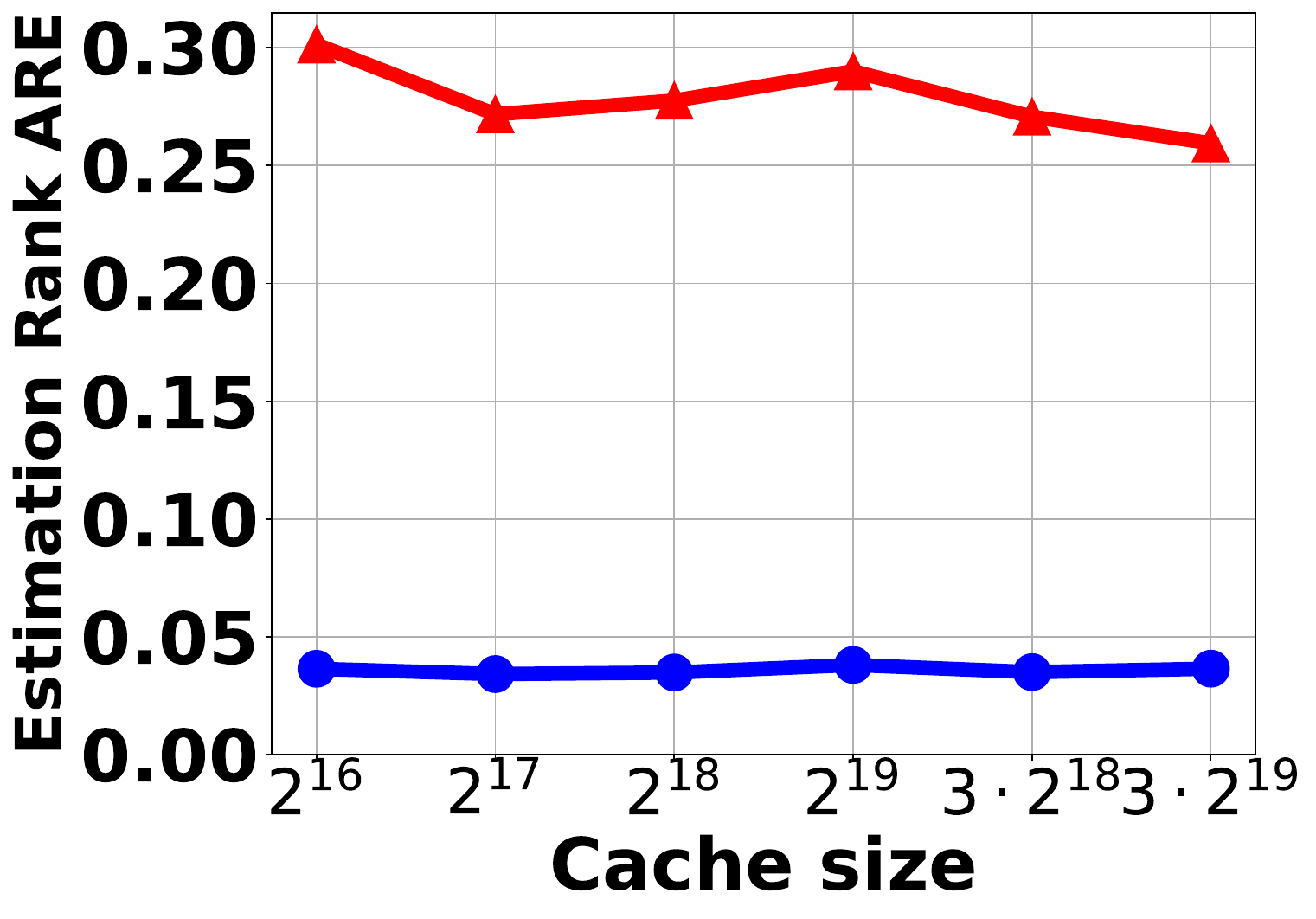}}
     \subfloat[MergeP (Caching)]
    {\includegraphics[width =0.3\columnwidth]
    {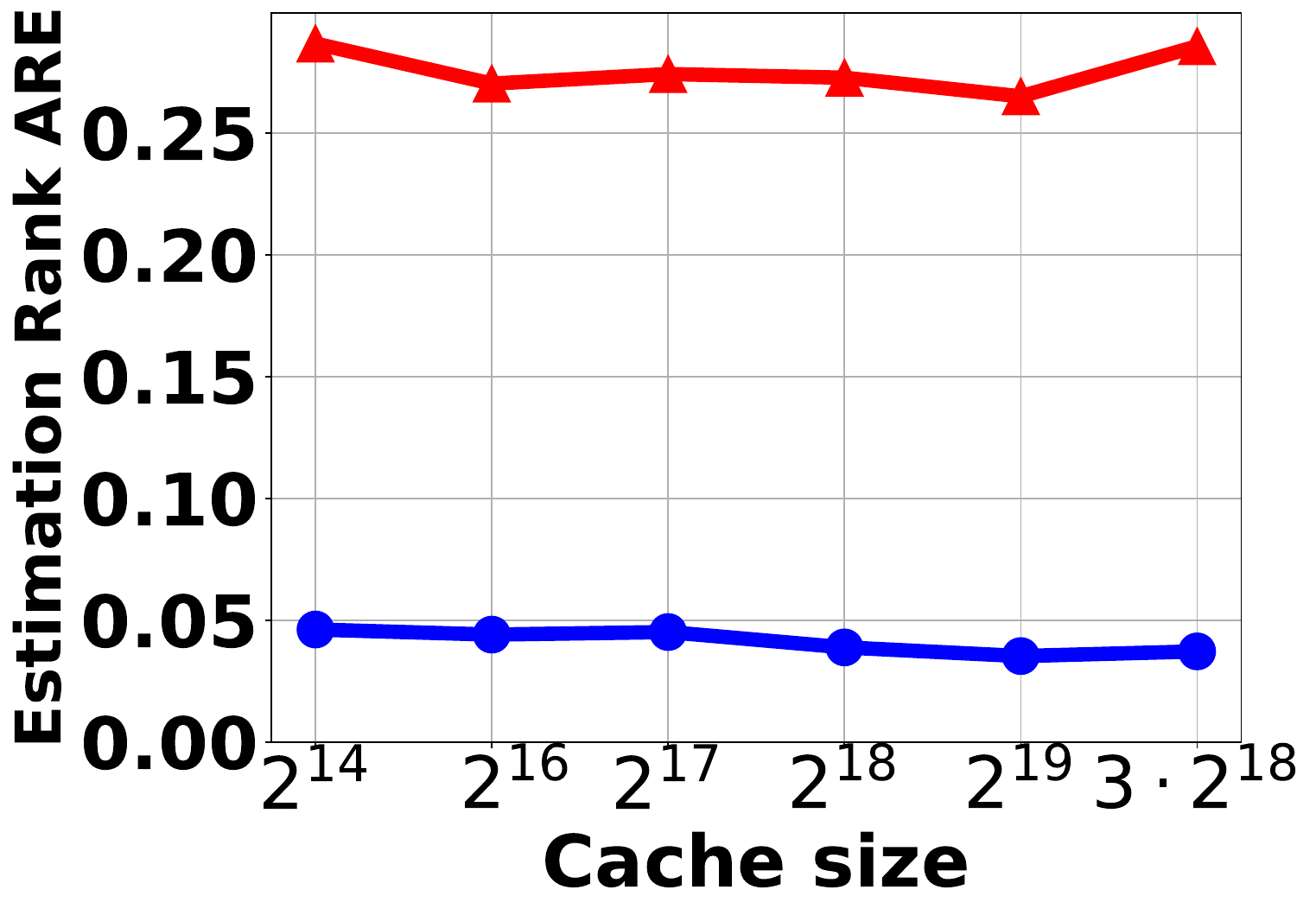}}
    
    
    \subfloat[Zipf $0.9$]
    {\includegraphics[width =0.3\columnwidth]
    {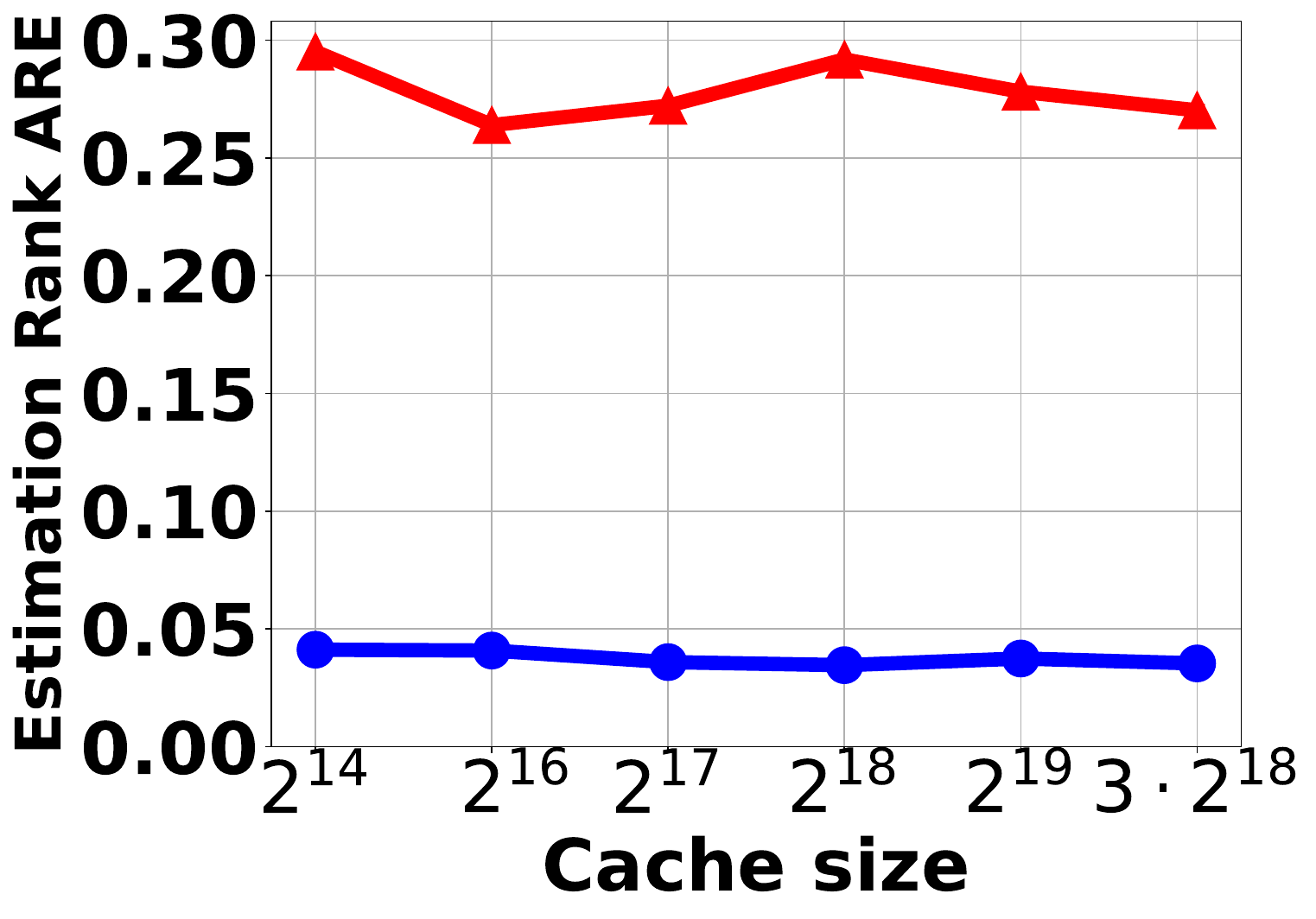}}
    \subfloat[Zipf $0.95$]
    {\includegraphics[width =0.3\columnwidth]
    {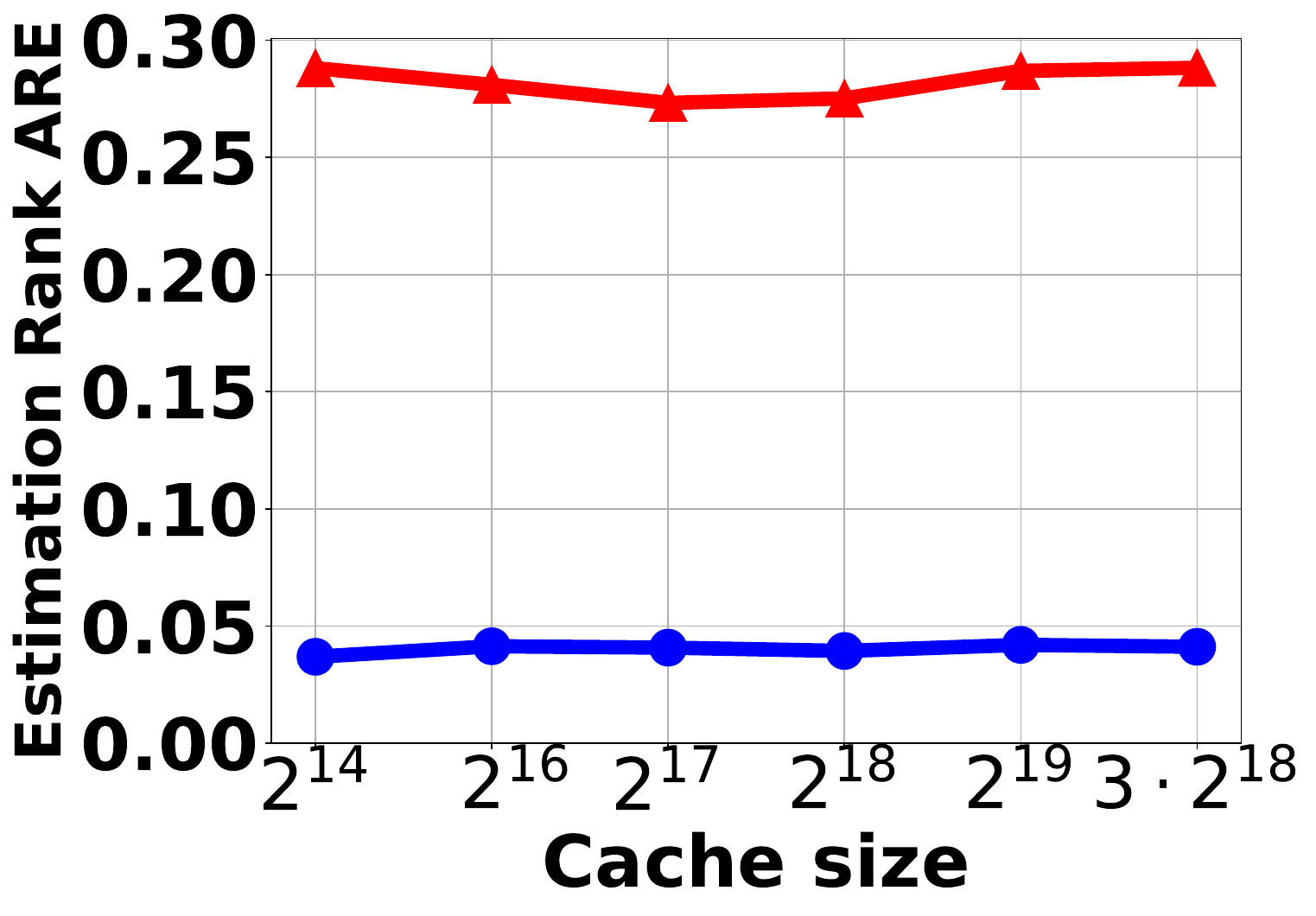}} 
    \subfloat[Zipf $0.99$]
    {\includegraphics[width =0.3\columnwidth]
    {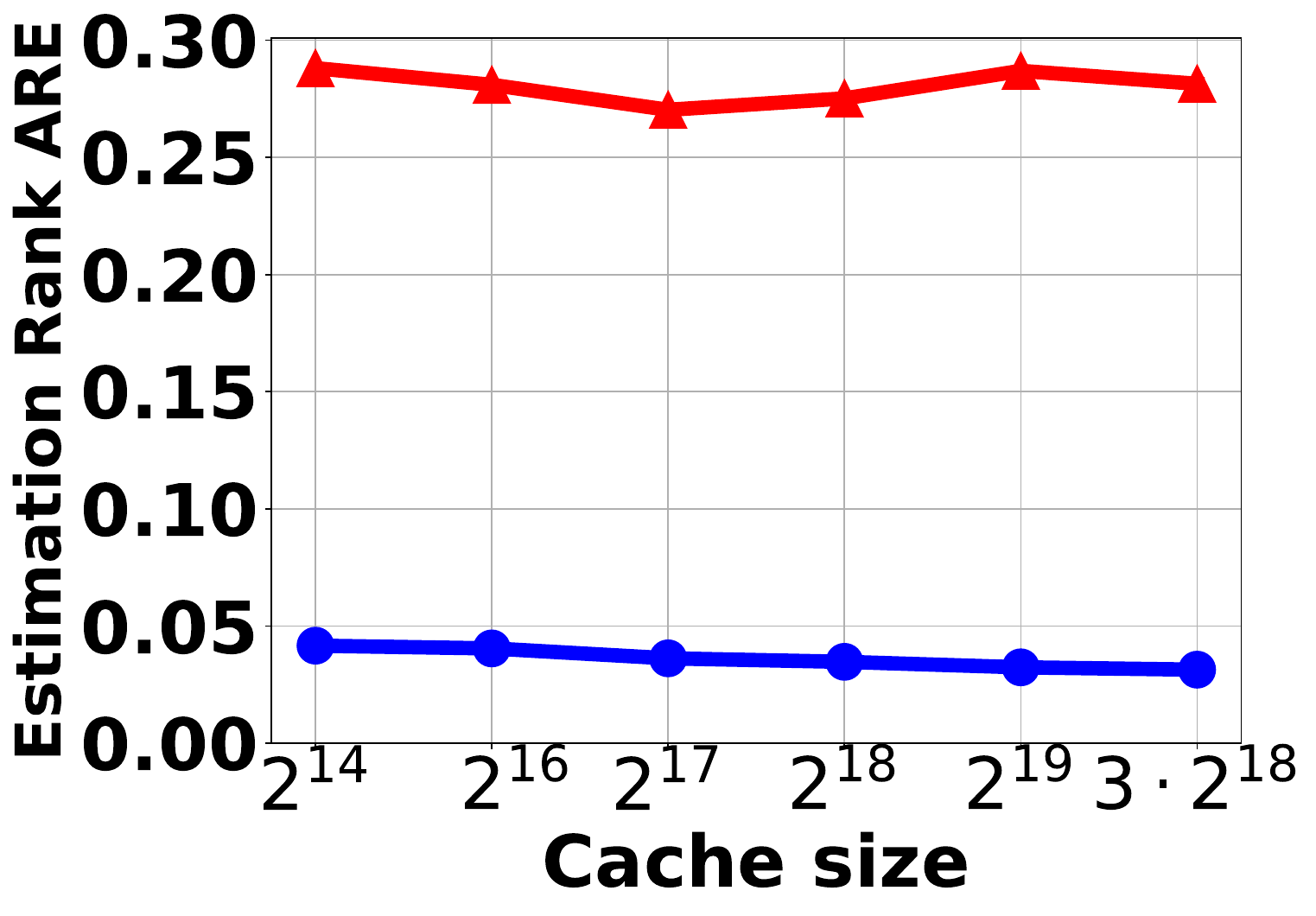}} 
    
    
    \vspace{-0.4cm}
    \caption{\revise{Relative error of quantile estimation (Section~\ref{subsubsec:settings}) comparing \sys with PPD~\cite{ppd}.}}\vspace{-0.2cm} 
    \label{fig:p4-quantile-estimation}
\end{figure*}

\end{document}
\endinput

\section{Introduction}
ACM's consolidated article template, introduced in 2017, provides a
consistent \LaTeX\ style for use across ACM publications, and
incorporates accessibility and metadata-extraction functionality
necessary for future Digital Library endeavors. Numerous ACM and
SIG-specific \LaTeX\ templates have been examined, and their unique
features incorporated into this single new template.

If you are new to publishing with ACM, this document is a valuable
guide to the process of preparing your work for publication. If you
have published with ACM before, this document provides insight and
instruction into more recent changes to the article template.

The ``\verb|acmart|'' document class can be used to prepare articles
for any ACM publication --- conference or journal, and for any stage
of publication, from review to final ``camera-ready'' copy to the
author's own version, with {\itshape very} few changes to the source.

\section{Template Overview}
As noted in the introduction, the ``\verb|acmart|'' document class can
be used to prepare many different kinds of documentation --- a
the double-blind initial submission of a full-length technical paper, a
two-page SIGGRAPH Emerging Technologies abstract, a ``camera-ready''
journal article, a SIGCHI Extended Abstract, and more --- all by
selecting the appropriate {\itshape template style} and {\itshape
  template parameters}.

This document will explain the major features of the document
class. For further information, the {\itshape \LaTeX\ User's Guide} is
available from
\url{https://www.acm.org/publications/proceedings-template}.

\subsection{Template Styles}

The primary parameter given to the ``\verb|acmart|'' document class is
the {\itshape template style} which corresponds to the kind of publication
or SIG publishing the work. This parameter is enclosed in square
brackets and is a part of the {\verb|documentclass|} command:
\begin{verbatim}
  \documentclass[STYLE]{acmart}
\end{verbatim}

Journals use one of three template styles. All but three ACM journals
use the {\verb|acmsmall|} template style:
\begin{itemize}
\item {\texttt{acmsmall}}: The default journal template style.
\item {\texttt{acmlarge}}: Used by JOCCH and TAP.
\item {\texttt{acmtog}}: Used by TOG.
\end{itemize}

The majority of conference proceedings documentation will use the {\verb|acmconf|} template style.
\begin{itemize}
\item {\texttt{acmconf}}: The default proceedings template style.
\item{\texttt{sigchi}}: Used for SIGCHI conference articles.
\item{\texttt{sigplan}}: Used for SIGPLAN conference articles.
\end{itemize}

\subsection{Template Parameters}

In addition to specifying the {\itshape template style} to be used in
formatting your work, there are a number of {\itshape template parameters}
which modify some part of the applied template style. A complete list
of these parameters can be found in the {\itshape \LaTeX\ User's Guide.}

Frequently-used parameters, or combinations of parameters, include:
\begin{itemize}
\item {\texttt{anonymous,review}}: Suitable for a ``double-blind''
  conference submission. Anonymizes the work and includes line
  numbers. Use with the \texttt{\acmSubmissionID} command to print the
  submission's unique ID on each page of the work.
\item{\texttt{authorversion}}: Produces a version of the work suitable
  for posting by the author.
\item{\texttt{screen}}: Produces colored hyperlinks.
\end{itemize}

This document uses the following string as the first command in the
source file:
\begin{verbatim}
\documentclass[acmsmall]{acmart}
\end{verbatim}

\section{Modifications}

Modifying the template --- including but not limited to: adjusting
margins, typeface sizes, line spacing, paragraph and list definitions,
and the use of the \verb|\vspace| command to manually adjust the
vertical spacing between elements of your work --- is not allowed.

{\bfseries Your document will be returned to you for revision if
  modifications are discovered.}

\section{Typefaces}

The ``\verb|acmart|'' document class requires the use of the
``Libertine'' typeface family. Your \TeX\ installation should include
this set of packages. Please do not substitute other typefaces. The
``\verb|lmodern|'' and ``\verb|ltimes|'' packages should not be used,
as they will override the built-in typeface families.

\section{Title Information}

The title of your work should use capital letters appropriately -
\url{https://capitalizemytitle.com/} has useful rules for
capitalization. Use the {\verb|title|} command to define the title of
your work. If your work has a subtitle, define it with the
{\verb|subtitle|} command.  Do not insert line breaks in your title.

If your title is lengthy, you must define a short version to be used
in the page headers, to prevent overlapping text. The \verb|title|
command has a ``short title'' parameter:
\begin{verbatim}
  \title[short title]{full title}
\end{verbatim}

\section{Authors and Affiliations}

Each author must be defined separately for accurate metadata
identification.  As an exception, multiple authors may share one
affiliation. Authors' names should not be abbreviated; use full first
names wherever possible. Include authors' e-mail addresses whenever
possible.

Grouping authors' names or e-mail addresses, or providing an ``e-mail
alias,'' as shown below, is not acceptable:
\begin{verbatim}
  \author{Brooke Aster, David Mehldau}
  \email{dave,judy,steve@university.edu}
  \email{firstname.lastname@phillips.org}
\end{verbatim}

The \verb|authornote| and \verb|authornotemark| commands allow a note
to apply to multiple authors --- for example, if the first two authors
of an article contributed equally to the work.

If your author list is lengthy, you must define a shortened version of
the list of authors to be used in the page headers, to prevent
overlapping text. The following command should be placed just after
the last \verb|\author{}| definition:
\begin{verbatim}
  \renewcommand{\shortauthors}{McCartney, et al.}
\end{verbatim}
Omitting this command will force the use of a concatenated list of all
of the authors' names, which may result in overlapping text in the
page headers.

The article template's documentation, available at
\url{https://www.acm.org/publications/proceedings-template}, has a
a complete explanation of these commands and tips for their effective
use.

Note that authors' addresses are mandatory for journal articles.

\section{Rights Information}

Authors of any work published by ACM will need to complete the form of a right. Depending on the kind of work, and the rights management choice
made by the author, this may be copyright transfer, permission,
license, or an OA (open access) agreement.

Regardless of the rights management choice, the author will receive a
copy of the completed rights form once it has been submitted. This
form contains \LaTeX\ commands that must be copied into the source
document. When the document source is compiled, these commands and
their parameters add formatted text to several areas of the final
document:
\begin{itemize}
\item the ``ACM Reference Format'' text on the first page.
\item the ``rights management'' text on the first page.
\item the conference information in the page header(s).
\end{itemize}

Rights information is unique to the work; if you are preparing several
works for an event, make sure to use the correct set of commands with
each of the works.

The ACM Reference Format text is required for all articles over one
page in length, and is optional for one-page articles (abstracts).

\section{CCS Concepts and User-Defined Keywords}

Two elements of the ``acmart'' document class provide powerfully
taxonomic tools for you to help readers find your work in an online
Search.

The ACM Computing Classification System ---
\url{https://www.acm.org/publications/class-2012} --- is a set of
classifiers and concepts that describe the computing
discipline. Authors can select entries from this classification
system, via \url{https://dl.acm.org/ccs/ccs.cfm}, and generate the
commands to be included in the \LaTeX\ source.

User-defined keywords are a comma-separated list of words and phrases
of the authors' choosing, providing a more flexible way of describing
the research being presented.

CCS concepts and user-defined keywords are required for all
articles over two pages in length, and are optional for one- and
two-page articles (or abstracts).

\section{Sectioning Commands}

Your work should use standard \LaTeX\ sectioning commands:
\verb|section|, \verb|subsection|, \verb|subsubsection|, and
\verb|paragraph|. They should be numbered; do not remove the numbering
from the commands.

Simulating a sectioning command by setting the first word or words of
a paragraph in boldface or italicized text is {\bfseries not allowed.}

\section{Tables}

The ``\verb|acmart|'' document class includes the ``\verb|booktabs|''
package --- \url{https://ctan.org/pkg/booktabs} --- for preparing
high-quality tables.

Table captions are placed {\itshape above} the table.

Because tables cannot be split across pages, the best placement for
them is typically the top of the page nearest their initial cite.  To
ensure this proper ``floating'' placement of tables, use the
environment \textbf{table} to enclose the table's contents and the
table caption.  The contents of the table itself must go in the
\textbf{tabular} environment, to be aligned properly in rows and
columns, with the desired horizontal and vertical rules.  Again,
detailed instructions on \textbf{tabular} material are found in the
\textit{\LaTeX\ User's Guide}.

Immediately following this sentence is the point at which
Table~\ref{tab:freq} is included in the input file; compare the
placement of the table here with the table in the printed output of
this document.

\begin{table}
  \caption{Frequency of Special Characters}
  \label{tab:freq}
  \begin{tabular}{ccl}
    \toprule
    Non-English or Math&Frequency&Comments\\
    \midrule
    \O & 1 in 1,000& For Swedish names\\
    $\pi$ & 1 in 5& Common in math\\
    \$ & 4 in 5 & Used in business\\
    $\Psi^2_1$ & 1 in 40,000& Unexplained usage\\
  \bottomrule
\end{tabular}
\end{table}

To set a wider table, which takes up the whole width of the page's
live area, use the environment \textbf{table*} to enclose the table's
contents and the table caption.  As with a single-column table, this
wide table will ``float'' to a location deemed more
desirable. Immediately following this sentence is the point at which
Table~\ref{tab:commands} is included in the input file; again, it is
instructive to compare the placement of the table here with the table
in the printed output of this document.

\begin{table*}
  \caption{Some Typical Commands}
  \label{tab:commands}
  \begin{tabular}{ccl}
    \toprule
    Command &A Number & Comments\\
    \midrule
    \texttt{{\char'134}author} & 100& Author \\
    \texttt{{\char'134}table}& 300 & For tables\\
    \texttt{{\char'134}table*}& 400& For wider tables\\
    \bottomrule
  \end{tabular}
\end{table*}

Always use midrule to separate table header rows from data rows, and
use it only for this purpose. This enables assistive technologies to
recognize table headers and support their users in navigating tables
more easily.

\section{Math Equations}
You may want to display math equations in three distinct styles:
inline, numbered, or non-numbered display.  Each of the three is
discussed in the next sections.

\subsection{Inline (In-text) Equations}
A formula that appears in the running text is called an inline or
in-text formula.  It is produced by the \textbf{math} environment,
which can be invoked with the usual
\texttt{{\char'134}begin\,\ldots{\char'134}end} construction or with
the short form \texttt{\$\,\ldots\$}. You can use any of the symbols
and structures, from $\alpha$ to $\omega$, available in
\LaTeX~\cite{Lamport:LaTeX}; this section will simply show a few
examples of in-text equations in context. Notice how this equation:
\begin{math}
  \lim_{n\rightarrow \infty}x=0
\end{math},
set here in in-line math style, looks slightly different when
set in display style.  (See next section).

\subsection{Display Equations}
A numbered display equation---one set off by vertical space from the
text and centered horizontally---is produced by the \textbf{equation}
environment. An unnumbered display equation is produced by the
\textbf{displaymath} environment.

Again, in either environment, you can use any of the symbols and
structures available in \LaTeX\@; this section will just give a couple
of examples of display equations in context.  First, consider the
equation, shown as an inline equation above:
\begin{equation}
  \lim_{n\rightarrow \infty}x=0
\end{equation}
Notice how it is formatted somewhat differently in
the \textbf{displaymath}
environment.  Now, we'll enter an unnumbered equation:
\begin{displaymath}
  \sum_{i=0}^{\infty} x + 1
\end{displaymath}
and follow it with another numbered equation:
\begin{equation}
  \sum_{i=0}^{\infty}x_i=\int_{0}^{\pi+2} f
\end{equation}
just to demonstrate \LaTeX's able handling of numbering.

\section{Figures}

The ``\verb|figure|'' environment should be used for figures. One or
more images can be placed within a figure. If your figure contains
third-party material, you must clearly identify it as such, as shown
in the example below.
\begin{figure}[h]
  \centering
  \includegraphics[width=\linewidth]{example-image-a}
  \caption{1907 Franklin Model D roadster. Photograph by Harris \&
    Ewing, Inc. [Public domain], via Wikimedia
    Commons. (\url{https://goo.gl/VLCRBB}).}
  \Description{A woman and a girl in white dresses sit in an open car.}
\end{figure}

Your figures should contain a caption that describes the figure to
the reader.

Figure captions are placed {\itshape below} the figure.

Every figure should also have a figure description unless it is purely
decorative. These descriptions convey what’s in the image to someone
who cannot see it. They are also used by search engine crawlers for
indexing images, and when images cannot be loaded.

A figure description must be unformatted plain text of less than 2000
characters long (including spaces).  {\bfseries Figure descriptions
  should not repeat the figure caption – their purpose is to capture
  important information that is not already provided in the caption or
  the main text of the paper.} For figures that convey important and
complex new information, a short text description may not be
adequate. More complex alternative descriptions can be placed in an
appendix and referenced in a short figure description. For example,
provide a data table capturing the information in a bar chart, or a
structured list representing a graph.  For additional information
regarding how best to write figure descriptions and why doing this is
so important, please see
\url{https://www.acm.org/publications/taps/describing-figures/}.

\subsection{The ``Teaser Figure''}

A ``teaser figure'' is an image, or set of images in one figure, that
are placed after all author and affiliation information, and before
the body of the article, spanning the page. If you wish to have such a
figure in your article, place the command immediately before the
\verb|\maketitle| command:
\begin{verbatim}
  \begin{teaserfigure}
    \includegraphics[width=\textwidth]{sampleteaser}
    \caption{figure caption}
    \Description{figure description}
  \end{teaserfigure}
\end{verbatim}

\section{Citations and Bibliographies}

The use of \BibTeX\ for the preparation and formatting of one's
references is strongly recommended. The authors' names should be complete
--- use full first names (``Donald E. Knuth'') not initials
(``D. E. Knuth'') --- and the salient identifying features of a
reference should be included: title, year, volume, number, pages,
article DOI, etc.

The bibliography is included in your source document with these two
commands, placed just before the \verb|\end{document}| command:
\begin{verbatim}
  \bibliographystyle{ACM-Reference-Format}
  \bibliography{bibfile}
\end{verbatim}
where ``\verb|bibfile|'' is the name, without the ``\verb|.bib|''
suffix, of the \BibTeX\ file.

Citations and references are numbered by default. A small number of
ACM publications have citations and references formatted in the
``author year'' style; for these exceptions, please include this
command in the {\bfseries preamble} (before the command
``\verb|\begin{document}|'') of your \LaTeX\ source:
\begin{verbatim}
  \citestyle{acmauthoryear}
\end{verbatim}

  Some examples.  A paginated journal article \cite{Abril07}, an
  enumerated journal article \cite{Cohen07}, a reference to an entire
  issue \cite{JCohen96}, a monograph (whole book) \cite{Kosiur01}, a
  monograph/whole book in a series (see 2a in spec. document)
  \cite{Harel79}, a divisible book such as an anthology or compilation
  \cite{Editor00} followed by the same example, however we only output
  the series if the volume number is given \cite{Editor00a} (so
  Editor00a's series should NOT be present since it has no vol. no.),
  a chapter in a divisible book \cite{Spector90}, a chapter in a
  divisible book in a series \cite{Douglass98}, a multi-volume work as
  the book \cite{Knuth97}, a couple of articles in proceedings (of a
  conference, symposium, workshop for example) (paginated proceedings
  article) \cite{Andler79, Hagerup1993}, a proceedings article with
  all possible elements \cite{Smith10}, an example of an enumerated
  proceedings article \cite{VanGundy07}, an informally published work
  \cite{Harel78}, a couple of preprints \cite{Bornmann2019,
    AnzarootPBM14}, a doctoral dissertation \cite{Clarkson85}, a
  master's thesis: \cite{anisi03}, an online document / world wide web
  resource \cite{Thornburg01, Ablamowicz07, Poker06}, a video game
  (Case 1) \cite{Obama08} and (Case 2) \cite{Novak03} and \cite{Lee05}
  and (Case 3) a patent \cite{JoeScientist001}, work accepted for
  publication \cite{rous08}, 'YYYYb'-test for prolific author
  \cite{SaeediMEJ10} and \cite{SaeediJETC10}. Other cites might
  contain 'duplicate' DOI and URLs (some SIAM articles)
  \cite{Kirschmer:2010:AEI:1958016.1958018}. Boris / Barbara Beeton:
  multi-volume works as books \cite{MR781536} and \cite{MR781537}. A
  a couple of citations with DOIs:
  \cite{2004:ITE:1009386.1010128,Kirschmer:2010:AEI:1958016.1958018}. Online
  citations: \cite{TUGInstmem, Thornburg01, CTANacmart}.
  Artifacts: \cite{R} and \cite{UMassCitations}.

\section{Acknowledgments}

Identification of funding sources and other support, and thanks to
individuals and groups that assisted in the research and the
preparation of the work should be included in an acknowledgment
section, which is placed just before the reference section in your
document.

This section has a special environment:
\begin{verbatim}
  \begin{acks}
  ...
  \end{acks}
\end{verbatim}
so that the information contained therein can be more easily collected
during the article metadata extraction phase, and to ensure
consistency in the spelling of the section heading.

Authors should not prepare this section as a numbered or unnumbered {\verb|\section|}; please use the ``{\verb|acks|}'' environment.

\section{Appendices}

If your work needs an appendix, add it before the
``\verb|\end{document}|'' command at the conclusion of your source
document.

Start the appendix with the ``\verb|appendix|'' command:
\begin{verbatim}
  \appendix
\end{verbatim}
and note that in the appendix, sections are lettered, not
numbered. This document has two appendices, demonstrating the section
and subsection identification method.

\section{Multi-language papers}

Papers may be written in languages other than English or include
titles, subtitles, keywords, and abstracts in different languages (as a
rule, a paper in a language other than English should include an
English title and English abstract).  Use \verb|language=...| for
every language used in the paper.  The last language indicated is the
the main language of the paper.  For example, a French paper with
additional titles and abstracts in English and German may start with
the following command
\begin{verbatim}
\documentclass[sigconf, language=english, language=german,
               language=french]{acmart}
\end{verbatim}

The title, subtitle, keywords, and abstract will be typeset in the main
language of the paper.  The commands \verb|\translatedXXX|, \verb|XXX|
begin title, subtitle, and keywords, can be used to set these elements
in the other languages.  The environment \verb|translatedabstract| is
used to set the translation of the abstract.  These commands and
environment have a mandatory first argument: the language of the
second argument.  See \verb|sample-sigconf-i13n.tex| file for examples
of their usage.

\section{SIGCHI Extended Abstracts}

The ``\verb|sigchi-a|'' template style (available only in \LaTeX\ and
not in Word) produces a landscape-orientation formatted article, with
a wide left margin. Three environments are available for use with the
``\verb|sigchi-a|'' template style, and produce formatted output in
the margin:
\begin{description}
\item[\texttt{sidebar}:]  Place formatted text in the margin.
\item[\texttt{marginfigure}:] Place a figure in the margin.
\item[\texttt{margintable}:] Place a table in the margin.
\end{description}

\begin{acks}
To Robert, for the bagels and explaining CMYK and color spaces.
\end{acks}

\bibliographystyle{ACM-Reference-Format}
\bibliography{sample-base}

\appendix

\section{Research Methods}

\subsection{Part One}

Lorem ipsum dolor sit amet, consectetur adipiscing elit. Morbi
malesuada, quam in pulvinar varius, metus nunc fermentum urna, id
sollicitudin purus odio sit amet enim. Aliquam ullamcorper eu ipsum
vel mollis. Curabitur quis dictum nisl. Phasellus vel semper risus, et
lacinia dolor. Integer ultricies commodo sem nec semper.

\subsection{Part Two}

Etiam commodo feugiat nisl pulvinar pellentesque. Etiam auctor sodales
ligula, non varius nibh pulvinar semper. Suspendisse nec lectus non
ipsum convallis congue hendrerit vitae sapien. Donec at laoreet
eros. Vivamus non purus placerat, scelerisque diam eu, cursus
ante. Etiam aliquam tortor auctor efficitur mattis.

\section{Online Resources}

Nam id fermentum dui. Suspendisse sagittis tortor a nulla mollis, in
pulvinar ex pretium. Sed interdum orci quis metus euismod, et sagittis
enim maximus. Vestibulum gravida massa ut felis suscipit
congue. Quisque mattis elit a risus ultrices commodo venenatis eget
dui. Etiam sagittis eleifend elementum.

Nam interdum magna at lectus dignissim, ac dignissim lorem
rhoncus. Maecenas eu arcu ac neque placerat aliquam. Nunc pulvinar
massa et mattis lacinia.

\end{document}
\endinput